\def\Reg{\mathcal{R}}
\def\SmallRegs{\mathcal{T}}
\def\Proc{\mathcal{P}}
\def\Wrong{\mathcal{F}}
\def\Writer{\mathcal{O}}
\newcommand{\complete}[1]{\FuncSty{$#1$}-complete}
\newcommand{\val}[2]{\nu_{#1}(#2)}
\newcommand{\fullset}[1]{\FuncSty{$#1$}-full-set}
\newcommand{\Procs}[1]{\mathrm{Procs}{(\FuncSty{$#1$})}}
\newcommand{\Index}[1]{\mathrm{index}{(\FuncSty{$#1$})}}
\newcommand{\Names}[1]{\mathrm{Names}{(\FuncSty{$#1$})}}
\newcommand{\name}[2]{\text{name}($#1,#2$)}
\newcommand{\writer}[2]{\text{writer}\FuncSty{{$(#1$$,$$#2)$}}}
\newcommand{\writeOp}[1]{\mathrm{writeOp}\FuncSty{{($#1$)}}}
\newcommand{\writers}[1]{\mathcal{Z}\FuncSty{{($#1$)}}}
\newtheorem{theorem}{Theorem}[section]
\newtheorem{lemma}[theorem]{Lemma}
\newtheorem{observation}[theorem]{Observation}
\newtheorem{corollary}{Corollary}[section]
\newtheoremstyle{upshape}{}{}{\upshape}{}{\bfseries}{.}{ }{}
\theoremstyle{upshape}
\title{Space Bounds for Adaptive Renaming}
\author{Maryam Helmi, Lisa Higham, Philipp Woelfel}
\begin{document}
\begin{titlepage}
  \begin{center}\sffamily
  \LARGE
  {\bfseries Space Bounds for Adaptive Renaming} \\[\medskipamount]
  
  \vskip\bigskipamount

  Maryam Helmi\footnote{\texttt{mhelmikh@ucalgary.ca}, +1\,403\,210-9416},
  Lisa Higham\footnote{\texttt{higham@ucalgary.ca}, +1\,403\,220-7696},
  and Philipp Woelfel\footnote{\texttt{woelfel@ucalgary.ca}, +1\,403\,220-7259}\\[\medskipamount]
  
  \large University of Calgary\\ 
  Department of Computer Science\\
  Calgary, T2N1N4 Alberta, Canada
  
  \end{center}
\vskip\bigskipamount

\begin{abstract}
  \noindent{\bfseries Abstract.}
  We study the space complexity of implementing long-lived and one-shot adaptive renaming from multi-reader multi-writer registers, 
in an asynchronous distributed system with $n$ processes. 
As a result of an \emph{$f$-adaptive renaming algorithm} each participating process gets a distinct name in the range $\{1,\dots,f(k)\}$ provided $k$ processes participate.

Let $f: \{1,\dots,n\} \rightarrow \mathbb{N}$ be a non-decreasing function satisfying $f(1) \leq n-1$ and let $d = \max\{x ~|~ f(x) \leq n-1\}$. 
We show that any non-deterministic solo-terminating long-lived $f$-adaptive renaming object requires $d + 1$ registers. 
This implies a lower bound of $n-c$ registers for long-lived $(k+c)$-adaptive renaming, which we observe is tight. 

We also prove a lower bound of $\lfloor \frac{2(n - c)}{c+2} \rfloor$ registers for implementing any non-deterministic solo-terminating 
one-shot $(k+c)$-adaptive renaming. 
We provide two one-shot renaming algorithms: a wait-free algorithm and an obstruction-free algorithm. 
Each algorithm employs a parameter to depict the tradeoff between space and adaptivity. When these parameters are chosen appropriately, this results in 
a wait-free one-shot $(\frac{3k^2}{2})$-adaptive renaming algorithm from $\lceil \sqrt{n} \rceil + 1$ registers,  
and an obstruction-free one-shot $f$-adaptive renaming algorithm from
only $\min\{n, x ~|~ f(x) \geq 2n\} + 1$ registers.
\end{abstract}
\end{titlepage}

\newpage
\setcounter{page}{1}
\section{Introduction}

Distributed systems with a large number of processes, such as the Internet, provide services that 
are typically used by only a small number of processes simultaneously. 
This is problematic if the time or space used by the service is a function of the size 
of the name-space of the processes that could use it. 
The time or space consumed by such applications can be significantly decreased by having each process that wants to
use the application first acquire a temporary name from a name space that is adequate to distinguish all the participants,
but much smaller than the name-space of the distributed system, and then return the temporary name to the pool 
when it is finished with the service.
This is the role of a shared renaming object.
A related application of the renaming object is in operating systems where processes repeatedly acquire and release 
names that correspond to a limited number of resources~\cite{BP89a}.
Renaming is an important tool in distributed computing~\cite{ABDPR1990a} 
because it enhances the practicality and usefulness of distributed system services. 
A renaming object may be even more useful if  the time and space resources it consumes is 
a reasonable function of the actual number of processes that are currently either holding, acquiring, or releasing a name. In this paper, we address the renaming problem for the standard asynchronous shared memory model with $n$ processes.  

With an $f$-adaptive renaming object, each of the $n$ processes can perform a \getName{} 
and return a distinct name in a small domain $\{1,\dots,f(k)\}$ 
where $k$ is the number of participants. Herlihy and Shavit~\cite{HS99a}, and also Rajbaum and Casta\~{n}eda~\cite{CR2008} 
showed that there is no deterministic, wait-free implementation of $(2k-2)$-adaptive 
renaming from multi-reader multi-writer registers.
This result also follows from the relationship between the adaptive renaming problem 
and \emph{strong symmetry breaking} (SSB):
a $(2k-2)$-renaming algorithm implies a solution to SSB~\cite{AP12},
for which there is no deterministic wait-free solution~\cite{AP12,BG93,G06}. 
This impossibility can be circumvented by using randomness or stronger primitives 
such as compare-and-swaps~\cite{AACSZ2011,AAGGG2010a,EHW98,MoirG96,PPTV98}. 
The step complexity of deterministic and randomized algorithms has been studied extensively 
in asynchronous systems (see e.g., \cite{AFWR99,AAGG11,AAGGG2010a,BEW2011a,EW2013a}). 
However, there are no previous results on the space complexity of adaptive renaming. 
Because renaming seems to require that participants discover information about each other,
adaptive renaming appears related to $f$-adaptive collect. 
A collect algorithm is $f$-adaptive to total contention, if its step complexity is $f(k)$, where $k$ is the number of processes that ever took a step. 
Attiya, Fich and Kaplan~\cite{AFK2004}, proved that $\Omega(f^{-1}(n))$ multi-reader multi-writer registers are 
required for $f$-adaptive collect. 

Suppose you have $m$ shared registers available to construct a renaming object for a system with $n$ processes.  
First we would like to know under what additional conditions such an implementation exists, and when it does, 
how best to use the $m$ registers.
Suppose, when there are $k$ participants, the acquired names are in the range $\{1, \dots, f(k)\}$. 
Will $f(k) = k ^c$ for a small constant $c$ suffice for the application? 
Must $f(k)$ be closer to $k$, say within a constant? Perhaps it should even be exactly $k$ (\emph{tight} adaptive renaming)?
Does the application need to permit processes to repeatedly acquire and release a name (\emph{long-lived} renaming),
or do processes get a name at most once (\emph{one-shot} renaming)? 
How strong a progress guarantee is required? 
Is the number of participants usually less than some bound $b$ much smaller than $n$?
If so, is there still some significant likelihood that the number of participants is somewhat bigger than $b$, 
or is there confidence that the bound $b$ is never, or only very rarely, exceeded?
In the rare cases when there are a large number of participants, 
can the system tolerate name assignments from a very large name space?

In order to study the space complexity implication for these questions,
we first generalize the adaptive renaming definition.
Both versions (\emph{long-lived} and \emph{one-shot}) of \emph{$b$-bounded $f$-adaptive renaming} support the operation \getName{}, which returns a name to each invoking process. The long-lived version also supports an operation \relName{}, which releases the name to the available name domain. 
Both versions must satisfy 1) no two processes that have completed a \getName{} and have not started their following \relName{}, receive the same name, 2) if there are $k \leq b$ processes that have invoked \getName{} and have not completed their subsequent \relName{} during an execution of \getName{} by process $p$, then $p$ returns a name in $\{1, \dots , f(k)\}$.  
Observe that for the one-shot case, $k$ is the number of processes that have started a \getName{} before $p$ completes its \getName{}. 
We call the problem of $n$-bounded $f$-adaptive renaming simply $f$-adaptive renaming. 
The special case when $f(k) = k$ and $b = n$ is called \emph{tight} renaming.
Our goal is to determine the relationships between $b$, $f(k)$, and $m$ for one-shot versus long-lived, 
and wait-free versus non-deterministic solo-terminating implementations of adaptive renaming objects from multi-reader/multi-writer registers.

Let $f: \{1,\dots,n\} \rightarrow \mathbb{N}$ be a non-decreasing function satisfying $f(1) \leq n-1$ and let $d = \max\{x ~|~ f(x) \leq n-1\}$. Note that if $f(1) \geq n$, $f$-adaptive renaming is a trivial problem.  
In this paper we show:
\begin{compactitem}
\item
At least $d + 1$ registers are required to implement any 
non-deterministic solo-terminating long-lived $d$-bounded $f$-adaptive renaming object. 
\item
At least $\lfloor \frac{2(n - c)}{c+2} \rfloor$ registers are required to implement 
any non-deterministic solo-terminating one-shot $(k+c)$-adaptive renaming object where, $c$ is any non-negative integer constant. 
\item
For any $b \leq n$, 
there is a wait-free one-shot $(b-1)$-bounded $(k(k+1)/2)$-adaptive renaming algorithm implemented from $b$ bounded registers.
When $k \geq b$, the returned names are in the range $\{1,\ldots,n+\frac{b(b-1)}{2}\}$.
\item
For any $b \leq n$, 
there is an obstruction-free one-shot $(b-1)$-bounded $k$-adaptive renaming algorithm implemented from $b+1$ 
bounded registers.
When $k \geq b$, the returned names are in the range $\{1,\ldots,n+b-1\}$.
\end{compactitem}\

By using these results and setting $b$ appropriately we then derive the following corollaries: 
\begin{compactitem}
	\item A wait-free one-shot $(\frac{3k^2}{2})$-adaptive renaming algorithm 
that uses only $\lceil \sqrt{n} \rceil + 1$ registers.
	\item An obstruction-free one-shot $f$-adaptive renaming algorithm 
that uses only $\min\{n, x ~|~ f(x) \geq 2n\} + 1$ registers.
	\item A tight space lower bound of $n-c$ registers for long-lived $(k+c)$-adaptive renaming for any integer constant $c \geq 0$.
\end{compactitem}


Our lower bound proofs use covering techniques first introduced by Burns and Lynch \cite{BurnsL93}.  
The main challenge is 
to exploit the semantics of the renaming object 
to force the processes to write to a large number of registers.
In the lower bound for the one-shot case, 
we first build an execution in which some processes are poised to write to (\emph{cover}) a set of registers.
Then we argue that if enough new processes take steps after this, 
at least one of them must become poised to write to a register not already covered, 
since, otherwise, the covering processes can obliterate all the traces of the new processes, 
causing some \getName{} to return an incorrect result. 
For the lower bound for the long-lived case, we exploit that fact that processes can perform \getName{} and \relName{} 
repeatedly to build a long execution, where 
in each inductive step either another register is covered or an available name is used up 
without being detected by other processes. 
 \section{Preliminaries}

This section describes our model of computation and the notation, vocabulary and general techniques used in this paper.
Previous work by many researchers (for example \cite{AW98,BurnsL93,FHS1998a,FR03,HHPW14,L96}) 
has collectively developed similar tools that serve to make our description of results and presentation of proofs precise, concise and clear. 
Much of the terminology presented in this section is borrowed or adapted from this previous research. 

Our computational model is an asynchronous shared memory system consisting of $n$ processes $\Proc =  \{p_1,\dots,p_n\}$ 
and $m$ shared registers $\Reg = \{ R_1,\dots,R_m \}$. 
Each process executes code that can access its own independent random number generator 
and its own private registers as well as the shared registers. 
Each shared register supports two operations, read and write. 
Each such operation happens atomically in memory.
Processes can only communicate via those operations on shared registers.
The algorithm is \emph{deterministic} if each process' code is deterministic; 
that is, no process' code contains any random choice.  

Informally, an execution arises one step at a time, 
where a step consists of some process, chosen arbitrarily,
executing the next instruction in its code.
This instruction could be a shared memory access, or a local memory access, 
or a local operation including, possibly, a random choice.
Notice, however, that after a process takes a shared memory step, 
the outcome of all its subsequent local operations and
random choices up to (but not including) its next shared memory operation 
is independent of any intervening operations by other processes.  
Therefore, there is no loss of generality in assuming that a step by a process consists of 
a single shared memory access (or, initially, its method-call invocation)
followed by all its subsequent local operations and random choices, 
up to the point where that process is poised to execute its next shared memory operation.

A \emph{configuration} $C$ is a tuple $(s_1,\dots,s_n,v_1,\dots,v_m)$, denoting that 
process $p_i$, $1\leq i\leq n$, is in state $s_i$ and
register $r_j$, $1\leq j\leq m$, has value $v_j$.
Furthermore the state $s_i$ of $p_i$ is one in which 
$p_i$'s next operation is either a shared memory operation or an invocation of 
a method-call (\getName{} or \relName{}).
Configurations will be denoted by capital letters. 
The initial configuration, where each process' next step is to invoke a method-call, 
is denoted $C^\ast$.

Given a configuration, $C$, a \emph{step from $C$} 
is a pair of the form $(p, \tau)$ where $p$ is a process identifier, 
and $\tau$ is a sequence of outcomes that arise from the sequence of all random choices that $p$ makes after completing its pending shared memory operation starting from configuration $C$ up to the point where $p$ is poised to do its next shared memory operation.
An execution is an alternating sequence of configurations and steps starting and ending with a configuration, and defined inductively as follows.
The 0-step or empty execution starting at $C$ is $(C)$.
A $k$-step execution, $k \geq 1$,  starting at $C_0$ is a sequence 
$(C_0, (q_1,\tau_1), C_1, (q_2, \tau_2), \ldots , (q_k, \tau_k), C_k)$ where 
\begin{compactitem}
	\item $(C_0, (q_1,\tau_1), C_1, (q_2, \tau_2), \ldots , (q_{k-1}, \tau_{k-1}), C_{k-1})$ is a $k-1$ step execution starting at $C_0$, and
	\item $(q_k, \tau_k)$ is a step from $C_{k-1}$ and $C_k$ is the configuration resulting from that step.
\end{compactitem}
An \emph{execution} is a $k$-step execution for any integer $k \geq 0$.
A subsequence, $\sigma = ((q_1, \tau_1), \ldots, (q_k,\tau_k))$, consisting of the steps from an execution starting at $C$ is called a \emph{schedule starting at $C$}.
If $\sigma$ is a schedule starting at $C$, then the execution starting at $C$ arising from $\sigma$ is
abbreviated $E= (C;\sigma)$ and $\sigma(C)$ denotes the final configuration of $E$.  
If an algorithm is deterministic, then the second component of every step of every execution of the algorithm is empty because there are no random choices.
So in this case a schedule is simplified to just a sequence of process identifiers.

A configuration, $C$, is \emph{reachable} if there exists a finite schedule, $\sigma$, such that $\sigma(C^\ast)= C$. 
Let $\sigma$ and $\pi$ be two finite schedules such that $\sigma$ starts at configuration $C$ and $\pi$ starts at $\sigma(C)$. 
Then $\sigma\pi$ denotes the concatenation of $\sigma$ and $\pi$, and is a schedule starting at $C$. 
Let $P \subseteq \Proc$ be a set of processes,  and $\sigma$ a schedule. 
We say $\sigma$ is \emph{$P$-only} if all the identifiers of processes that appear in $\sigma$
are in $P$. 
If the set $P$ contains only one process, $p$, then we say $\sigma$ is \emph{$p$-only}.
We denote the set of processes that appear in schedule $\sigma$ by $\mathrm{procs}(\sigma)$. 

A deterministic implementation of a method is \emph{wait-free} if, 
for any reachable configuration $C$ and any process $p$, 
$p$ completes its method call in a finite number of its own steps, 
regardless of the steps taken by other processes. 
An implementation of a method is 
\emph{non-deterministic solo-terminating} if,
for any reachable configuration $C$ and any process $p$, 
there exists a finite $p$-only schedule, $\sigma$, starting from $C$ 
such that $p$ has finished its method call in configuration $\sigma(C)$\cite{FHS1998a}.
Non-deterministic solo-termination for deterministic implementations is called \emph{obstruction-free}.



We say process $p$ \emph{covers} register $r$ in a configuration $C$, 
if $p$ writes to $r$ in its next step. 
A set of processes $P$ covers a set of registers $R$ if for every register $r\in R$ there is a process $p\in P$ 
such that $p$ covers $r$.
If $|P| = |R|$, 
then we say $P$ \emph{exactly} covers $R$. 
Consider a process set $P$ that exactly covers the register set $R$ in configuration $C$. 
Let $\pi_P$ be any permutation which includes exactly one step by each process in $P$. 
Then the execution $(C;\pi_{P})$ is called a \emph{block-write} by $P$ to $R$. 
Two configurations 
$C=(s_1,\ldots,s_n,v_1,\ldots,v_m)$ and 
$C'=(s'_1,\ldots,s'_n,v'_1,\ldots,v'_m)$ are \emph{indistinguishable} to process $p_i$ 
if $s_i=s'_i$ and $v_j=v'_j$ for $1\leq j\leq n$. 
Let $P$ be a set of processes, and $\sigma$ any $P$-only schedule starting at configuration $C$.
If for every process $p\in P$, $C$ and $C'$ are indistinguishable to $p$, 
then  $\sigma$ is also a schedule starting at $C'$ and $\sigma(C)$ and $\sigma(C')$ are indistinguishable to $p$.

A process $p$ \emph{participates in configuration $C$} 
if in $C$, $p$ has started a \getName{} operation and has not completed the following \relName{}. 
A process is called \emph{idle in configuration $C$} if it does not participate in $C$. 
A configuration $C$ is called \emph{quiescent} if, $\forall p \in \Proc$, $p$ is idle in $C$.
We say process $p$ \emph{owns name $x$ in configuration $C$} 
if in $C$, $p$ has completed a \getName{} operation that returned name $x$ and $p$ has not started \relName{}. 
Let $C_0, \dots, C_e$ be a sequence of configurations arising from execution $E$. 
The number of \emph{participants in $E$} is the maximum over all $i$, $0 \leq i \leq e$, of the number of participants in $C_i$. 
Given these definitions, the definition of a renaming object can be stated more precisely as follows. 
Let $f: \{1,\dots,n\} \rightarrow \mathbb{N}$ be a non-decreasing function satisfying $f(1) \leq n-1$. 
Both long-lived and one-shot $b$-bounded $f$-adaptive renaming support the operation \getName{}. Operation \getName{} by process $p$ returns a name $x$ to $p$. 
The long-lived version also supports the operation \relName{}, which releases the name $x$. 
Both versions must satisfy
1) there is no reachable configuration in which two processes own the same name,  
2) if the number of participants during $p$'s \getName{}, $k$, is at most $b$ then, $x \in \{1, \dots , f(k)\}$. Observe that, properties 1) and 2) imply that $f(k) \geq k$ for all $k \in \{1, \ldots, n\}$. 
\section{A Space Lower Bound for Long-Lived Loose Renaming Objects}\label{sec:long-lived-lwb}

For any non-decreasing function $f$ satisfying $f(1) \leq n-1$, let $d$ be the largest integer such that $f(d) \leq n-1$.
We prove that at least $d + 1$ registers are required for 
non-deterministic solo-terminating long-lived $f$-adaptive renaming in our system. 
The proof relies on two lemmas. 
Lemma~\ref{lemma:long-lived-help} says that there is no reachable configuration $C$ in which $n-d$ 
processes own names in the range $\{1,\dots,n-1\}$ 
while all of the other $d$ processes are idle and unaware of any of the processes with names. 
The intuition for this proof is simple: 
if such a reachable configuration $C$ exists,
then there is a configuration reachable from $C$ in which $(n-d) + d = n$ processes 
all own names in the range $\{1,\dots,n-1\}$. 
Lemma~\ref{lemma:long-lived-induction} provides the core of the lower bound argument and the intuition is as follows.   
Let $C$ be any reachable configuration in which fewer than $n-d$ processes 
own names in the range ${\{1,\dots,n-1\}}$ 
while all of the other $d + 1$ processes are idle and unaware of the processes with names.
Then there is a reachable configuration from $C$ in which either $d + 1$ distinct registers are written, or
one more name is owned, and the unnamed processes are again idle and still unaware of the processes with names.  
Since the initial configuration has no processes with names, and all processes are idle, 
we can apply Lemma \ref{lemma:long-lived-induction} repeatedly until either we have exactly covered $d + 1$ registers or we reach
a  configuration in which $n-d-1$ processes own names in the range $\{1,\dots,n-1\}$.
Since, according to Lemma \ref{lemma:long-lived-help}, we cannot get beyond an $(n-d-1)$-invisibly-named configuration, 
we must eventually exactly cover $d + 1$ registers, completing the proof.
We will see, in the formal proof, that the result applies even when the renaming implementation is $(d + 1)$-bounded.
 
The definitions and lemmas that follow refer to any non-deterministic solo-terminating implementation from shared registers 
of a long-lived $f$-adaptive renaming object. 
For a configuration $C$ and a set of processes $Q$, we say $Q$ is \emph{invisible} in $C$, 
if there is a reachable quiescent configuration $D$ such that $C$ and $D$ are indistinguishable to all processes in $\overline{Q}$.
If the set $Q$ contains only one process $q$, then we say process $q$ is invisible. 
Configuration $C$ is called $\ell$-\emph{invisibly-named}, if 
there is a set $Q$ of $\ell$ processes, such that in $C$ every process in $Q$ owns a name in $\{1,\dots,n-1\}$ and $Q$ is invisible.

\begin{lemma}
\label{lemma:long-lived-help}
For the largest integer $d$ satisfying $f(d)\leq n-1$, there is no reachable $(n-\d)$-invisibly-named configuration.
\end{lemma}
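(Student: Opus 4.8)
The plan is to assume, for contradiction, that a reachable $(n-d)$-invisibly-named configuration $C$ exists, and then to exhibit a reachable configuration in which all $n$ processes simultaneously own names in $\{1,\dots,n-1\}$. Since no two processes may own the same name (renaming property~1) and $n$ distinct names cannot be drawn from a set of size $n-1$, this pigeonhole collision is the contradiction I am aiming for.

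By the definition of $(n-d)$-invisibly-named, I obtain a set $Q$ of $n-d$ processes, each owning a name in $\{1,\dots,n-1\}$ in $C$, together with a reachable quiescent configuration $D$ that is indistinguishable from $C$ to every process in $\overline{Q}$. Note that $|\overline{Q}| = d$. The key idea is to make the $d$ processes of $\overline{Q}$ acquire names by running them from $D$ rather than from $C$: in $D$ every process is idle, so I can schedule the members of $\overline{Q}$ to invoke \getName{} one at a time, applying non-deterministic solo-termination to drive each call to completion before starting the next. Concatenating these solo schedules yields an $\overline{Q}$-only schedule $\sigma$. Because at most the $d$ processes of $\overline{Q}$ ever participate along $\sigma$ (the members of $Q$ remain idle in $D$), each \getName{} sees at most $d$ participants, so by renaming property~2 and the monotonicity of $f$ it returns a name in $\{1,\dots,f(d)\}\subseteq\{1,\dots,n-1\}$. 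Thus in $\sigma(D)$ all $d$ processes of $\overline{Q}$ own names in $\{1,\dots,n-1\}$.

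Next I would transfer this back to $C$ using the indistinguishability machinery from the preliminaries. Since $\sigma$ is $\overline{Q}$-only and $C,D$ are indistinguishable to every process in $\overline{Q}$, the schedule $\sigma$ also starts at $C$, and $\sigma(C)$ and $\sigma(D)$ are indistinguishable to each process of $\overline{Q}$. As \emph{owning a name} is a property of a process's own state, each process in $\overline{Q}$ therefore owns in $\sigma(C)$ exactly the name it owns in $\sigma(D)$, namely a name in $\{1,\dots,n-1\}$. Meanwhile the $n-d$ processes of $Q$ take no step during the $\overline{Q}$-only schedule $\sigma$, so their states are unchanged from $C$; hence they still own their names in $\{1,\dots,n-1\}$ in $\sigma(C)$. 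Consequently all $n$ processes own names in $\{1,\dots,n-1\}$ in the reachable configuration $\sigma(C)$, delivering the pigeonhole contradiction.

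The step I expect to be the crux is the participant count: the bound $f(d)\le n-1$ on the names acquired by $\overline{Q}$ must be established while running from $D$, where only $\overline{Q}$ participates, and \emph{not} from $C$, where the participating members of $Q$ would inflate the participant count to $n$ and let property~2 permit large names. It is precisely the indistinguishability of $C$ and $D$ to $\overline{Q}$ that forces these processes to return their small $D$-names even in $\sigma(C)$, which is what makes the collision unavoidable. This also reveals that the argument uses only $b \ge d$, so it survives down to the $(d+1)$-bounded setting mentioned in the overview.
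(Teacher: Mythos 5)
Your proposal is correct and takes essentially the same route as the paper's proof: assume a reachable $(n-d)$-invisibly-named configuration, run the $d$ processes of $\overline{Q}$ to completion from the quiescent configuration $D$ so that property~2 bounds their names by $f(d)\leq n-1$, and transfer the outcome back to $C$ via indistinguishability, yielding $n$ processes owning names from a set of size $n-1$. Your elaborations (building $\sigma$ by iterated solo-termination, stressing that the participant count must be assessed in $D$ rather than $C$, and noting that $Q$'s states are untouched by the $\overline{Q}$-only schedule) merely make explicit what the paper leaves implicit.
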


\begin{proof}
By way of contradiction, suppose that there exists a set $Q$ of $n-d$ processes such that in configuration $C$, 
all processes in $Q$ are invisible and own names in the range $\{1,\dots,n-1\}$.
Since $Q$ is invisible in $C$, there is a reachable quiescent configuration $D$ such that $D$ and $C$ are indistinguishable to $\overline{Q}$.
Let $\sigma$ be a $\overline{Q}$-only schedule such that in execution $(D; \sigma)$, 
all processes in $\overline{Q}$ perform a complete \getName{}. 
Because $|\overline{Q}|=d$ 
all processes in $\overline{Q}$ get names in the range $\{1,\dots,f(d)\} \subseteq \{1,\dots, n-1 \}$.
Since $C$ and $D$ are indistinguishable to $\overline{Q}$, 
all processes in $\overline{Q}$ perform a complete \getName{} during $(C; \sigma)$ 
and get names in the range $\{1,\dots, f(\d) \}$ as well. 
Therefore in configuration $\sigma(C)$ all processes in $Q \cup \overline{Q}$ have names in the range $\{1,\dots,n-1\}$. 
However $|Q \cup \overline{Q}|=n$.  
This is a contradiction because this implies that the acquired names are not distinct.
\end{proof}

The intuition for Lemma~\ref{lemma:long-lived-induction} is as follows.  
Recall that in an $\ell$-invisibly-named configuration,
$\ell$ processes have names, the $n-\ell$ others are idle and unaware of the presence of the invisibly-named processes, and no register is covered.
Starting from this configuration we select one process at a time from the set of idle processes and let it execute until 
either it covers a  register not already covered, 
or it gets a name without covering a new register.
We continue this construction as long as the selected process covers a new register.
If we reach $d + 1$ processes covering distinct registers we are done. 
Otherwise, we reached a configuration in which one more process holds a name.  
Furthermore, we can obliterate the trace of this process with the appropriate block write, 
and then let all other non-idle processes complete their \getName{} methods and the following \relName{}.
This takes us to an $(\ell+1)$-invisibly-named configuration.

\begin{lemma}\label{lemma:long-lived-induction}\samepage
Let $d$ be the largest integer such that $f(d)\leq n-1$.
For any $0\leq \ell \leq n-d - 1$ and any reachable $\ell$-invisibly-named configuration $C$, 
there exists a schedule $\sigma$, where $|\mathrm{procs}(\sigma)| \leq d + 1$, 
and either
\begin{compactitem}
\item in configuration $\sigma(C)$ at least $d + 1$ distinct registers are exactly covered; or
\item configuration $\sigma(C)$ is $(\ell+1)$-invisibly-named.
\end{compactitem}

\end{lemma}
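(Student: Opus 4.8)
The plan is to realize the informal ``feed one idle process at a time'' construction as a single $\overline{Q}$-only schedule $\sigma$ built on top of the quiescent witness for $Q$'s invisibility. Fix the invisible set $Q$ with $|Q|=\ell$ together with a reachable quiescent configuration $D$ that is indistinguishable to every process of $\overline{Q}$ from $C$; then $C$ and $D$ carry identical register contents and identical $\overline{Q}$-states, and the $n-\ell\ge d+1$ processes of $\overline{Q}$ are idle in $C$. Since $\sigma$ is $\overline{Q}$-only, I can reason about its effect from $D$ (where $Q$ never moves, hence never participates) and transport conclusions back to $C$ by indistinguishability. Starting from the empty cover, I repeatedly take a fresh idle process and, invoking solo-termination, run it alone until it is either (a) poised to write a register outside the currently exactly-covered set $R$, or (b) done with \getName{}; in case (a) I append it to the cover $P$, enlarging $R$ by one register while leaving memory untouched (poised processes have not written), and in case (b) I stop. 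If case (a) fires $d+1$ times I obtain $d+1$ processes poised over $d+1$ distinct registers, which is the first conclusion and uses exactly $d+1$ processes. Otherwise some $p_j$ triggers case (b), having touched only registers of $R$ with $|P|=j-1\le d$, and I claim $\sigma(C)$ is $(\ell+1)$-invisibly-named with witness set $Q\cup\{p_j\}$.

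The delicate point is that the name $x$ returned by $p_j$ must lie in $\{1,\dots,n-1\}$, even though as many as $d$ covering processes are participating while $p_j$ runs, so that the naive participant count $j$ can be as large as $d+1$ and $f(j)$ need not be below $n$. The escape is that the covering processes are merely \emph{poised}: they have not written, so throughout $p_j$'s stretch the shared memory equals that of $D$ and $p_j$'s starting state equals its state in $D$. By the indistinguishability of solo schedules, $p_j$ therefore executes exactly as it would running strictly alone from $D$ and returns the same $x$; in that lone execution a single process participates, so correctness forces $x\in\{1,\dots,f(1)\}\subseteq\{1,\dots,n-1\}$.

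To recover invisibility I complete $\sigma$ by a block-write of $P$ to $R$ — which erases every write $p_j$ made, as they all landed in $R$ — and then let each process of $P$ finish its \getName{} and the following \relName{} and go idle. I compare $\sigma$ with the twin schedule $\sigma'$ obtained by deleting $p_j$'s stretch; run from $D$, $\sigma'$ ends in a reachable quiescent configuration. Because the block-write overwrites all of $R$, after it the two runs agree on every register (nothing was ever written outside $R$, and inside $R$ both hold $P$'s values), and the ensuing $P$-only completion evolves identically, so $\sigma(D)$ and $\sigma'(D)$ differ only in $p_j$'s own state. Hence $\sigma'(D)$ is a reachable quiescent configuration indistinguishable to $\overline{Q\cup\{p_j\}}$ from $\sigma(D)$, and therefore from $\sigma(C)$, certifying that $Q\cup\{p_j\}$ is invisible. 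Since $Q$'s names survive the $\overline{Q}$-only schedule and $x\le n-1$, the configuration $\sigma(C)$ is $(\ell+1)$-invisibly-named, with $\sigma$ using at most $d+1$ processes.

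The main obstacle is the tension inside the naming case: $p_j$ must at once receive a name small enough to fall in $\{1,\dots,n-1\}$ \emph{and} be erasable so that invisibility is restored, while up to $d$ other processes sit mid-\getName{}. Both halves rest on the single structural fact that a process which is only poised has not altered memory, combined with the choice to block-write over exactly the registers $p_j$ touched; together these make the high-participation execution indistinguishable, to everyone but $p_j$, from a low-participation one. The remaining bookkeeping — that enough idle processes are available (ensured by $\ell\le n-d-1$) and that the process budget stays within $d+1$ — is routine.
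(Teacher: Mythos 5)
Your overall construction---the one-process-at-a-time covering induction, the block-write over $R$, and the comparison with a twin schedule run from the quiescent witness $D$---is the same as the paper's, and the invisibility-restoration half of your argument is sound: it needs only that every write performed during the construction (including all of $p_j$'s) lands inside $R$, which your case analysis does guarantee. The genuine gap is in the naming case, and it sits precisely on the claim you yourself identify as carrying the whole proof: that the covering processes ``have not written,'' so that throughout $p_j$'s stretch the shared memory still equals $D$'s. This is false for every covering process after the first. Your case (a) stops a process when it is first \emph{poised} to write a register outside the current cover $R$; nothing prevents it from writing, possibly many times, to registers \emph{inside} $R$ before that moment. Concretely, if the algorithm makes every process write first to $R[1]$, then your first selected process is stopped poised on $R[1]$ having written nothing, but your second selected process writes to $R[1]$ (already covered) and only then becomes poised on a fresh register. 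Hence $p_j$ generally runs in memory that differs from $D$'s, its execution need not replay its solo execution from $D$, and the conclusion $x\in\{1,\dots,f(1)\}$ does not follow.

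The damage is confined to that one step, and the paper repairs it with the weaker fact that is actually true: only the \emph{first} covering process is guaranteed never to have written. Dropping that single process's steps (and dropping $Q$ by replacing $C$ with $D$) yields, by indistinguishability, an execution in which $p_j$ returns the same name $x$ but in which the participants are only $p_j$ and the remaining covering processes, i.e.\ at most $j-1\le d$ processes. The $d$-bounded $f$-adaptive guarantee then gives $x \le f(j-1) \le f(d) \le n-1$, which is exactly what the definition of an $(\ell+1)$-invisibly-named configuration requires; the bound $f(1)$ you aim for is both unobtainable in general and unnecessary. With that substitution your argument becomes the paper's proof.
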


\begin{proof}

Let $C$ be an $\ell$-invisibly-named configuration, and let $Q$ be the set of $\ell$ processes that are invisible in $C$.  
Let $D$ be a quiescent configuration that is indistinguishable from $C$ for all processes in $\overline{Q}$. 
First, we inductively construct a sequence of schedules $\delta_0,\delta_1,\ldots$ until we have constructed $\delta_{\mathit{last}}$ 
such that in $\delta_{\mathit{last}}(C)$ either 
\begin{compactenum}[a)]
	\item $d + 1$ registers are exactly covered, or,
	\item $(\ell + 1)$ processes own names in $\{1, \dots , n-1\}$.
\end{compactenum}

We maintain the invariant that for each $i\in\{0,\dots,\mathit{last}\}$ in configuration $\delta_i(C)$, 
a set $P_i$ of $i$ processes exactly covers a set $L_i$ of $i$ distinct registers, $P_i\cap Q=\emptyset$, and $\delta_i$ is $P_i$-only. 
Let $\delta_0$ be the empty schedule. Then in configuration $\delta_0(C)=C$, no register is covered, so the invariant is true for $P_0=L_0=\emptyset$.

Now consider $i\geq 0$.
If $a)$ or $b)$ holds for $\delta_i$, we let $\mathit{last}=i$ and are done.
Otherwise, since in $\delta_i(C)$ a set $L_i$ of $i$ distinct registers is covered, we have $i \leq d $.
We construct $\delta_{i+1}$ as follows.
Select $p \in \overline{P_i \cup Q}$. Let $\gamma$ be the shortest $p$-only schedule such that either
\begin{compactenum}[1)]
	\item $p$ does a complete \getName{} in execution $(\delta_i(C); \gamma)$, or
	\item in configuration $\gamma(\delta_i(C))$, $p$ covers a register $r \notin L_i$.
\end{compactenum}

Let $\delta_{i+1}$ be $\delta_i \gamma$.
First assume case $1)$ happens.
By construction the process that performs $\delta_1$ does not write to any register.
If $i = 0$ and $p$ does a complete \getName{} in execution $(\delta_0(C); \gamma)$, then $\mathit{last} = 1$ and we are done. 
For any $i \geq 1$, because $Q$ is invisible to $p$, in $(\delta_i(C);\gamma)$ $p$ becomes aware of at most the $i -1 \leq d $ other processes in $P_{i}$.
Since $f(d)\leq n-1$, $p$ gets a name in $\{1,\dots,n-1\}$, and thus in configuration $\delta_{i+1}(C)$ all processes in $Q \cup \{p\}$ 
own names in $\{1, \dots , n-1 \}$ and $|Q \cup \{p\}|=\ell + 1$.
So condition $b)$ is achieved, the construction stops and $\delta_{\mathit{last}}=\delta_{i+1}$.

Now suppose case $2)$ happens.
If $i+1 =d +1$, then condition $a)$ is achieved, the construction stops and $\delta_{\mathit{last}}=\delta_{i+1}$. 
Otherwise, the invariant remains satisfied for $L_{i+1}=L_{i} \cup \{r\}$ and $P_{i+1}=P_i \cup \{p\}$.
Clearly, after at most $d + 1$ steps either $a)$ or $b)$ is achieved.  

Now, using schedule $\delta_{\mathit{last}}$ we construct  schedule $\sigma$. 
If $\delta_{\mathit{last}}(C)$ satisfies $a)$, let $\sigma=\delta_{\mathit{last}}$ and the lemma holds.
Hence, suppose that $\delta_{\mathit{last}}(C)$ satisfies $b)$.
Let $\alpha$ be the $P_{\mathit{last}-1}$-only schedule such that in execution $(\delta_{\mathit{last}}\pi_{P_{\mathit{last}-1}}(C); \alpha)$ 
every process $q \in P_{\mathit{last}-1}$ completes its pending \getName{} operation and performs a complete \relName{}. 
During execution $(C; \delta_{\mathit{last}})$ only registers in $L_{\mathit{last}-1}$ were written and 
in configuration $\delta_{\mathit{last}}(C)$,  $P_{\mathit{last}-1}$ exactly covers these registers.
Because $\delta_{\mathit{last}} = \delta_{\mathit{last}-1}\gamma$ for some $p$-only postfix $\gamma$ of $\delta_{\mathit{last}}$, after a block write by $P_{\mathit{last}-1}$,
configurations $\delta_{\mathit{last}} \pi_{P_{\mathit{last}-1}} (C)$ and $ \delta_{\mathit{last}-1} \pi_{P_{\mathit{last}-1}} (C)$ 
are indistinguishable to $\overline{Q \cup \{p\}}$.
Since $C$ and $D$ are indistinguishable to $\overline{Q}$, 
configurations $\delta_{\mathit{last}-1} \pi_{P_{\mathit{last}-1}}(C)$ and $\delta_{\mathit{last}-1} \pi_{P_{\mathit{last}-1}}(D)$ 
are also indistinguishable to $\overline{Q}$.
So, 
configurations $\delta_{\mathit{last}} \pi_{P_{\mathit{last}-1}} (C)$ and $ \delta_{\mathit{last}-1} \pi_{P_{\mathit{last}-1}} (D)$ 
are indistinguishable to $\overline{Q \cup \{p\}}$.
Hence, configurations $\delta_{\mathit{last}}\pi_{P_{\mathit{last}-1}}\alpha(C)$ and $\delta_{\mathit{last}-1}\pi_{P_{\mathit{last}-1}}\alpha(D)$ 
are indistinguishable to $\overline{(Q \cup \{p\})}$.
Since $\delta_{\mathit{last}-1}\pi_{P_{\mathit{last}-1}}\alpha(D)$ is quiescent,
configuration $\delta_{\mathit{last}}\pi_{P_{\mathit{last}-1}}\alpha(C)$ is an $(\ell+1)$-invisibly-named configuration. 
Therefore, the lemma holds for $\sigma=\delta_{\mathit{last}}\pi_{P_{\mathit{last}-1}}\alpha$.
\end{proof}

\begin{theorem}\label{theorem:long-lived}\samepage
Let $d$ be the largest integer such that $f(d) \leq n-1$.
Any non-deterministic solo-terminating implementation of a long-lived $d$-bounded $f$-adaptive renaming object 
requires at least $d + 1$ registers.
\end{theorem}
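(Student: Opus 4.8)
The plan is to prove the theorem as a direct consequence of the two preceding lemmas, using a repeated-application (induction) argument on the invisibly-named parameter. First I would observe that the initial configuration $C^\ast$ is $0$-invisibly-named: taking $Q=\emptyset$, the set $\overline{Q}=\Proc$ together with the reachable quiescent configuration $D=C^\ast$ witnesses invisibility trivially, and the requirement that every process of $Q$ own a name in $\{1,\dots,n-1\}$ holds vacuously. Since $C^\ast$ is reachable, this furnishes a reachable $0$-invisibly-named starting point to which Lemma~\ref{lemma:long-lived-induction} applies.

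Next I would iterate Lemma~\ref{lemma:long-lived-induction}. Suppose at some stage we have a reachable $\ell$-invisibly-named configuration $C_\ell$ with $0\le \ell \le n-d-1$ (this holds initially with $\ell=0$ and $C_0=C^\ast$). The lemma produces a schedule $\sigma$ with two possible outcomes. If $\sigma(C_\ell)$ exactly covers at least $d+1$ distinct registers, the construction halts; otherwise $C_{\ell+1}:=\sigma(C_\ell)$ is an $(\ell+1)$-invisibly-named configuration, which is again reachable (being reachable from the reachable $C_\ell$), so the lemma may be applied once more. The crucial point is that the second outcome cannot recur indefinitely: each occurrence increases $\ell$ by one, but by Lemma~\ref{lemma:long-lived-help} there is no reachable $(n-d)$-invisibly-named configuration. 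Hence, starting from $\ell=0$, the second outcome can be taken at most for $\ell=0,1,\dots,n-d-1$; were it taken at $\ell=n-d-1$, we would obtain a reachable $(n-d)$-invisibly-named configuration, contradicting Lemma~\ref{lemma:long-lived-help}. Therefore the first outcome must occur at some stage with $\ell\le n-d-1$.

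When the first outcome occurs, some reachable configuration exactly covers $d+1$ distinct registers; in particular the implementation must contain at least $d+1$ distinct shared registers, i.e.\ $m\ge d+1$, as claimed. I expect essentially all of the technical difficulty to reside in the two lemmas already established --- in particular the covering-and-obliteration construction inside Lemma~\ref{lemma:long-lived-induction} that exploits invisibility together with $d$-boundedness to force the freshly selected process to return a name in $\{1,\dots,n-1\}$. For the theorem itself the only points requiring care are verifying that $C^\ast$ is $0$-invisibly-named, maintaining reachability along the chain $C^\ast=C_0,C_1,\dots$ so that each invocation of Lemma~\ref{lemma:long-lived-induction} is legitimate, and framing the termination as a clean contradiction with Lemma~\ref{lemma:long-lived-help} rather than an open-ended iteration.
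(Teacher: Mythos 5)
Your proposal is correct and follows essentially the same route as the paper's own proof: start from the observation that $C^\ast$ is a reachable $0$-invisibly-named configuration, iterate Lemma~\ref{lemma:long-lived-induction} while maintaining reachability, and use Lemma~\ref{lemma:long-lived-help} to argue the covering outcome must eventually occur. If anything, your write-up is slightly more careful than the paper's (explicitly verifying the $0$-invisibly-named base case with $Q=\emptyset$ and keeping the index range $\ell \le n-d-1$ consistent with the hypothesis of Lemma~\ref{lemma:long-lived-induction}), but the argument is the same.
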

\begin{proof}
  Note that $C^\ast$ is a reachable $0$-invisibly-named configuration.
  We iteratively construct a sequence of schedules $\sigma_0,\sigma_1,\dots,\sigma_{\mathit{last}}$ as follows:
  If $0\leq i\leq n-d$ and $C_i$ is a reachable $i$-invisibly-named configuration, we apply Lemma~\ref{lemma:long-lived-induction} to obtain a schedule $\sigma_i$, $|\mathrm{procs}(\sigma_i)|\leq d + 1$, such that $C_{i+1}=\sigma_i(C_i)$ is either an $(i+1)$-invisibly-named configuration, or in $C_{i+1}$ at least $d + 1$ distinct registers are covered.
  In the latter case we let $\mathit{last}=i+1$ and finish the iterative construction.  
  By Lemma~\ref{lemma:long-lived-help}, there is no $(n-d)$-invisibly-named configuration. Hence if the iterative construction reaches a $(n-d -1)$-invisibly-named configuration, by Lemma \ref{lemma:long-lived-induction}, there is a reachable configuration, in which $d + 1$ registers are covered.
\end{proof}

\begin{corollary}\label{theorem:long-lived-corol}\samepage
  Let $c\in\{1,\dots,n-1\}$ and  $b = n - c$.
Any non-deterministic solo-terminating implementation of a long-lived $b$-bounded $(k + c)$-adaptive renaming object 
requires at least $b$ registers. 
\end{corollary}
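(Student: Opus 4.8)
The plan is to instantiate Theorem~\ref{theorem:long-lived} with the concrete non-decreasing function $f(k) = k + c$ and then reconcile the resulting boundedness index with the one in the statement. First I would verify the theorem's hypotheses for this choice of $f$: the map $k \mapsto k + c$ is non-decreasing, and $f(1) = 1 + c \leq n - 1$ holds precisely when $c \leq n - 2$, so I would treat this as the interesting range. Within it, I compute the threshold $d = \max\{x \mid f(x) \leq n-1\} = \max\{x \mid x + c \leq n-1\} = n - 1 - c = b - 1$.

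With $d = b - 1$ in hand, Theorem~\ref{theorem:long-lived} immediately gives that any non-deterministic solo-terminating implementation of a long-lived $d$-bounded, i.e.\ $(b-1)$-bounded, $(k+c)$-adaptive renaming object uses at least $d + 1 = b$ registers. The only discrepancy between this and the claimed statement is the boundedness parameter: the theorem yields a bound for $(b-1)$-bounded objects, whereas the corollary speaks of $b$-bounded objects. To bridge this I would appeal to the monotonicity of the boundedness requirement. By property~2 of the renaming definition, a $b$-bounded object guarantees a name in $\{1,\dots,f(k)\}$ whenever the number of participants $k$ during a \getName{} is at most $b$; since $k \leq b-1$ implies $k \leq b$, every correct $b$-bounded $(k+c)$-adaptive renaming implementation is in particular a correct $(b-1)$-bounded one. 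A lower bound for the weaker $(b-1)$-bounded object therefore transfers to the stronger $b$-bounded object, giving the required bound of $b$ registers.

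I expect the only real subtlety to be this bookkeeping at the edge of the parameter range, rather than any new covering argument---the heavy lifting is already carried out by Lemmas~\ref{lemma:long-lived-help} and~\ref{lemma:long-lived-induction}. Concretely, the two points to get right are (i) that the threshold for $f(k)=k+c$ is $d = b-1$, so the theorem produces $d+1 = b$ and not $b+1$, and (ii) that the off-by-one in the participant bound is harmless because tightening $b-1$ to $b$ only adds constraints and so cannot lower the space cost. The extreme case $c = n-1$, where $b = 1$ and $f(1) = n$, falls outside the theorem's hypothesis $f(1) \leq n-1$; this is the degenerate regime the paper already flags as trivial, so I would note it separately rather than route it through Theorem~\ref{theorem:long-lived}.
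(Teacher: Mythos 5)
Your proposal is correct and matches the paper's (implicit) derivation: the corollary is stated without proof as an immediate instantiation of Theorem~\ref{theorem:long-lived} with $f(k)=k+c$, which gives $d = n-1-c = b-1$ and hence a lower bound of $d+1 = b$ registers, with the $b$-bounded statement inherited exactly by the inclusion argument you describe (any $b$-bounded implementation is a fortiori $(b-1)$-bounded, so the lower bound over the larger class applies). Your flagging of the edge case $c = n-1$ (where $f(1) = n$ violates the theorem's hypothesis and the problem becomes trivially solvable) is also apt; that case genuinely cannot be routed through the theorem.
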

\section{A Space Lower Bound for One-shot Additive Loose Renaming}
\label{sec:one-shot-lwb}

In one-shot renaming, each process is constrained to call \getName{} at most once (and does not invoke \relName{}),
which imposes a severe restriction on the techniques available for proving lower bounds.
In particular, constructions that rely on processes repeatedly getting and releasing names cannot be used for one-shot lower bounds. 
We observed however, that a straightforward linear lower bound for tight renaming actually applies even for one-shot adaptive renaming.
Thus, we are motivated to study one-shot renaming objects with looseness constrained by a constant, 
specifically $k$-renaming and $(k + c)$-renaming. 
We refer to one-shot $(k+c)$-renaming object as an additive loose renaming object, where $k$ is the number of participants 
and $c \geq 0$ is an integer constant.
For the case $c=0$, it is called an adaptive tight renaming object.

Our lower bound proof has a recursive structure and it relies on a generalization of additive loose renaming as follows. For any set $T \subset \{1, \dots, k + c\}$ where $|T| \leq c$, a $[(k + c) \backslash T]$-renaming object for $k$ processes requires that each participating process returns a unique name from the range $\{1, \dots, k + c\} \backslash T$.

\begin{lemma}\label{lem:domain-shrink}
Any implementation of $[(k + c) \backslash T]$-renaming uses at least as many registers as an implementation of $[(k + c - |T|) \backslash \emptyset]$-renaming.
\end{lemma}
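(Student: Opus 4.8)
The statement is a register-preserving reduction, so the plan is to show that any implementation $\mathcal{A}$ of $[(k+c)\backslash T]$-renaming from $m$ registers can be converted into an implementation $\mathcal{B}$ of $[(k+c-|T|)\backslash\emptyset]$-renaming that also uses $m$ registers; this immediately gives the claimed inequality on the minimum number of registers required. The object $\mathcal{B}$ would have each process run $\mathcal{A}$'s \getName{} unchanged on the same $m$ shared registers, obtain a name $x \in \{1,\dots,k+c\}\setminus T$, and then return a \emph{locally} recomputed name, applying a fixed relabeling map $\phi$ that compresses the punctured range $\{1,\dots,k+c\}\setminus T$ down to the contiguous range $\{1,\dots,k+c-|T|\}$.

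The key step is the choice of $\phi$. I would set
\[
\phi(x) = x - |\{t \in T : t < x\}|,
\]
that is, each name is shifted down by the number of forbidden values lying below it. The crucial point is that $\phi$ depends only on $x$ and the fixed set $T$, never on $k$; this matters because a process executing \getName{} does not know how many processes are participating and so cannot use $k$ in its post-processing. I would then verify that $\phi$ restricted to $\{1,\dots,k+c\}\setminus T$ is a strictly increasing bijection onto $\{1,\dots,k+c-|T|\}$: writing the sorted elements of $\{1,\dots,k+c\}\setminus T$ as $a_1 < \dots < a_{k+c-|T|}$, a counting argument shows $|\{t\in T : t < a_j\}| = a_j - j$, hence $\phi(a_j) = j$.

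With $\phi$ in hand, correctness of $\mathcal{B}$ follows. Distinctness is preserved because $\mathcal{A}$ returns distinct names in $\{1,\dots,k+c\}\setminus T$ and $\phi$ is injective. The range condition holds because $\phi$ maps $\{1,\dots,k+c\}\setminus T$ exactly onto $\{1,\dots,k+c-|T|\}$, so whenever $k$ processes participate each returned name lies in $\{1,\dots,k+c-|T|\}$, as required. Finally, since $\mathcal{B}$ differs from $\mathcal{A}$ only in a terminating local computation performed after \getName{} returns, it touches no additional shared registers and inherits $\mathcal{A}$'s progress guarantee (wait-freedom, obstruction-freedom, or non-deterministic solo-termination). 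Thus $\mathcal{B}$ uses exactly the $m$ registers of $\mathcal{A}$, which proves the lemma.

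The main obstacle I anticipate is precisely the $k$-independence of the relabeling: the naive order-preserving map sending the $j$-th available name to $j$ is defined only relative to the current participant set and cannot be evaluated by a lone process, whereas the hole-counting formula sidesteps this entirely. I would also take a moment to confirm that $\{1,\dots,k+c\}\setminus T$ indeed has $k+c-|T|$ elements for every relevant $k$ (using $T \subseteq \{1,\dots,k+c\}$ and $|T|\le c$), so that the target range is exactly $\{1,\dots,k+c-|T|\}$ and the reduction is well-defined for all contention levels.
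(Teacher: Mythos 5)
Your proposal is correct and follows essentially the same approach as the paper: both construct the new algorithm by running the given one unchanged and locally relabeling the returned name $x$ by subtracting the number of forbidden values below it (your $\phi(x)=x-|\{t\in T: t<x\}|$ coincides with the paper's $x-|\{t\in T: t\le x\}|$ since $x\notin T$). Your additional verifications of the bijection, the $k$-independence of the relabeling, and preservation of the progress guarantee are details the paper leaves implicit.
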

\begin{proof}
Let $A$ be a $[(k + c) \backslash T]$-renaming algorithm. 
Then we construct $[(k + c - |T|) \backslash \emptyset]$-renaming algorithm $A'$ from $A$ without any additional registers as follows.
If $A$ returns name $x$, then $A'$ returns $x - |\{t \in T~|~ t \leq x\}|$. 
Since $A$ returns distinct names in the range $\{1, \dots, k + c\} \backslash T$, obviously $A'$ returns distinct names in the range $\{1, \dots, k + c - |T|\}$. 
\end{proof}

A process is called \emph{startable in configuration $C$} if in $C$, it has not started a \getName{}. 
Since in one-shot renaming, there is no \relName{} operation, in our proofs in this section instead of using quiescent configurations we are interested in configurations in which each process either has completed its \getName{} operation or it has not started a \getName{}. We call such configurations, \emph{quiet} configurations.

\begin{lemma}\label{lemma:add-help}
  Let $D$ be a reachable quiet configuration with $n' \geq c+2$ startable processes. For every startable process $p$, let $\sigma_{p}$ denote a $p$-only schedule such that $p$ performs a complete \getName{} in execution $(D;\sigma_{p})$.
	Let $Q$ be any subset of startable processes of size $c+1$, then there exists a process $q \in Q$ such that $q$ writes to a register during $(D;\sigma_{q})$.
\end{lemma}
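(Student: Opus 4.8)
The plan is to argue by contradiction: assume that \emph{no} process in $Q$ writes to a register during its solo execution, and derive a violation of the uniqueness property~1) of renaming via a pigeonhole count. Write $t = n - n'$ for the number of processes that are not startable in $D$; since $D$ is quiet, these are exactly the processes that have completed a \getName{}, and they are the only processes that have started a \getName{} on any execution reaching $D$. Consequently the contention never exceeded $t$ along the way to $D$, so property~2) forces every name owned in $D$ to lie in $\{1,\dots,f(t)\}\subseteq\{1,\dots,t+c\}$ (this clause is vacuous when $t=0$). On the other hand, for any single startable process $p$, the execution $(D;\sigma_p)$ has participant count exactly $t+1$, so the name $p$ returns lies in $\{1,\dots,f(t+1)\}=\{1,\dots,t+c+1\}$.

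Next I would use the no-write assumption to make the whole set $Q$ \emph{invisible}. Running $\sigma_{q_1}$ from $D$ changes no register and alters only $q_1$'s state, so the resulting configuration is indistinguishable from $D$ to every process other than $q_1$; by the indistinguishability principle from the Preliminaries, $\sigma_{q_2}$ is then applicable and $q_2$ returns the \emph{same} name it returns in $(D;\sigma_{q_2})$, again without writing. Iterating over $Q=\{q_1,\dots,q_{c+1}\}$ yields a configuration $C'$ that is indistinguishable from $D$ to all processes in $\overline{Q}$, and in which each $q_i$ owns its solo name from $\{1,\dots,t+c+1\}$. Because $n'\geq c+2>|Q|$, there is a startable process $p\in\overline{Q}$; since $C'$ is indistinguishable from $D$ to $p$, executing $\sigma_p$ from $C'$ makes $p$ complete a \getName{} returning its solo-from-$D$ name, which also lies in $\{1,\dots,t+c+1\}$. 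The point here is that $p$ cannot detect $Q$ even if $p$ itself writes, since $Q$ left no trace in the registers.

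Finally I would apply the count. In the configuration reached after $p$'s run, the $t$ originally-completed processes together with $q_1,\dots,q_{c+1}$ and $p$ --- that is, $t+c+2$ processes --- all own names, which by property~1) must be pairwise distinct. Yet the $t$ old names lie in $\{1,\dots,t+c\}$ and the $c+2$ new names lie in $\{1,\dots,t+c+1\}$, so all $t+c+2$ distinct names are drawn from a set of size $t+c+1$, a contradiction. I expect the main obstacle to be the bookkeeping in the middle step: one must verify carefully that ``writes nothing'' really does preserve indistinguishability from $D$ along the entire chain, so that every one of the $c+2$ added processes returns precisely its low-contention solo name rather than adapting to the hidden participants. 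Once that invariant is in hand, together with the observation that the quiet history caps the old names at $t+c$, the pigeonhole step is immediate.
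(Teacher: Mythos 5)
Your proposal is correct and follows essentially the same argument as the paper's proof: assume no process in $Q$ writes, chain the solo schedules of the $c+1$ processes in $Q$ plus one extra startable process via indistinguishability (each returning its solo name in $\{1,\dots,|X|+c+1\}$, where your $t$ is the paper's $|X|$), and derive a pigeonhole contradiction since $|X|+c+2$ processes would own distinct names from a set of size $|X|+c+1$. Your additional bookkeeping about why the old names lie in $\{1,\dots,t+c\}$ is implicit in the paper's first sentence, so the two proofs coincide.
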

  \begin{proof}
	Let $X$ be the set of processes that own names in configuration $D$. Then processes in $X$ own names in range $\{1, \dots, |X| + c\}$. Let $Q = \{q_1, \dots, q_{c+1}\}$. 
	By way of contradiction assume that there is no process $q \in Q$ such that $q$ writes to a register during $(D; \sigma_q)$. Then for all $i$, $1 \leq i \leq c+1$, configurations $\sigma_{q_1}\dots\sigma_{q_{c+1}} (D)$ and $\sigma_{q_i}(D)$ are indistinguishable to $q_{i}$. 
	Let $q'$ be a startable process not in $Q$. 
	Hence, $\sigma_{q_1}\dots\sigma_{q_{c+1}} \sigma_{q'}(D)$ and $\sigma_{q'}(D)$ are indistinguishable to $q'$. 
	Therefore, all processes in $Q$ plus $q'$ return names from $\{1,\dots,|X|+ 1 + c\}$ 
    in execution $(D; \sigma_{q_1}\dots\sigma_{q_{c+1}}\sigma_{q'})$. 
		This is a contradiction because $|X| + c+2$ processes receive names from a set of size $|X| +  c+ 1$ implying that they cannot be assigned distinct names.
  \end{proof}

\begin{lemma}\label{lemma:add-main}\samepage
Let $D$ be a reachable configuration in which:
\begin{itemize}
	\item a set $Q$ of at least $c+1$ processes covers a set of $\ell \geq 1$ registers,
	\item there exists a set $Q' \subseteq Q$ of size $c+1$ such that no process in $Q'$ has written to a register and,
	\item there is a set of $c+1$ startable processes $P$, disjoint from $Q$. 
\end{itemize}
Then, there is a $P$-only schedule $\sigma_{P}$ such that at least $\ell+1$ registers are covered in $\sigma_{P}(D)$.
\end{lemma}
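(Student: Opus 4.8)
The plan is to grow the covered set one register at a time by letting the startable processes in $P$ take steps, and to show that they cannot all finish without exposing a fresh register. Concretely, I would process $P=\{p_1,\dots,p_{c+1}\}$ one at a time. Run $p_1$ solo and watch the register it is about to write, stopping it as soon as it is poised to write to some register $r\notin L$, where $L$ is the set of $\ell$ registers covered by $Q$ in $D$. Since the whole schedule is $P$-only, the processes of $Q$ never move and still cover $L$, so in the resulting configuration $Q\cup\{p_1\}$ covers $\ell+1$ registers and we are done. If instead $p_1$ completes its \getName{} while writing only inside $L$, continue with $p_2$ from the reached configuration, and so on. The only case that must be ruled out is that all of $p_1,\dots,p_{c+1}$ complete \getName{} writing exclusively to registers in $L$.

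To rule this case out I would derive a contradiction in the style of Lemma~\ref{lemma:add-help}, exploiting two facts. First, the $c+1$ processes of $Q'$ have never written, so they have left no trace in shared memory; hence there is a reachable configuration that is indistinguishable from $D$ to every process outside $Q'$, but in which the processes of $Q'$ are still startable, i.e.\ have not begun their \getName{}. Second, because every process of $P$ wrote only inside $L$, a single block-write by a subset of $Q$ that exactly covers $L$ obliterates all traces of $P$. Composing the run of $P$ with this block-write yields, on the one hand, an execution in which the pre-existing owners together with the $c+1$ names returned by $P$ are all distinct, and on the other hand an indistinguishable execution in which the $c+1$ processes of $Q'$ are not participants at all. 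In the latter execution the true number of participants is smaller by $c+1$, so each name returned by $P$ is constrained to a range that is $c+1$ shorter; counting the owners against this shrunken range produces one more owner than there are available names, exactly as in the final pigeonhole step of Lemma~\ref{lemma:add-help}. This contradiction forces some $p_i$ to become poised to write outside $L$, and the first case applies.

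The hard part is the bookkeeping that makes the two executions genuinely indistinguishable while driving the apparent participant count down by the full $c+1$. In particular, one must choose the exact cover of $L$ used for the block-write consistently with the hiding of $Q'$, handling the possible overlap between that cover and $Q'$, since the processes of $Q'$ are themselves among the processes covering $L$. One must also verify that after obliterating $P$ and suppressing $Q'$ the memory state is still reachable by a schedule in which $Q'$ never started, so that the renaming guarantee can legitimately be applied with the reduced participant count. Getting this reduction to line up so that the owners outnumber the valid names by exactly one --- the same $+1$ that drives Lemma~\ref{lemma:add-help} --- is where essentially all of the work lies.
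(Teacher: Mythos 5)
Your first phase (running the processes of $P$ one at a time and stopping as soon as one is poised to write outside $L$) matches the paper's argument, and your two ingredients for the remaining case --- that $Q'$ can be erased from the execution because its members never wrote, and that a block-write by an exact cover $Q''\subseteq Q$ of $L$ obliterates $P$'s traces --- are exactly the right ones. But the pigeonhole you then run does not close. Your owners are only $X\cup P$ (where $X$ is the set of processes owning names in $D$), i.e.\ $|X|+c+1$ processes. The tightest range you can force their names into, even after suppressing $Q'$, is $\{1,\dots,|X|+|Q\setminus Q'|+|P|+c\}=\{1,\dots,|X|+|Q|+c\}$: the $c+1$ suppressed processes of $Q'$ are exactly cancelled by the $c+1$ participants of $P$ itself, and the remaining covering processes $Q\setminus Q'$ are still participants. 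Since $|Q|\geq c+1$, this range has at least $|X|+2c+1\geq |X|+c+1$ names, so the $|X|+c+1$ owners fit comfortably and there is no contradiction --- not even for $c=0$. Moreover, the block-write buys you nothing in your assembly: hiding $P$ only matters to processes that take steps \emph{afterwards}, and in your argument nobody does.

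The missing step is to run all of $Q$ to completion after the block-write. Since $P$ wrote only inside $L$, configurations $\sigma_P\pi_{Q''}(D)$ and $\pi_{Q''}(D)$ are indistinguishable to every process in $Q$, so each member of $Q$ completes its \getName{} with a name in $\{1,\dots,|X|+|Q|+c\}$ (the participants it can be charged with are at most $X\cup Q$). The final configuration then has $|X|+|Q|+|P|=|X|+|Q|+c+1$ processes owning pairwise-distinct names inside a set of size $|X|+|Q|+c$, which is the contradiction. Note that this assembly also dissolves the difficulty you flag as ``where essentially all of the work lies'': there is no need to keep $Q'$ suppressed and $P$ obliterated simultaneously, and hence no conflict when $Q''$ overlaps $Q'$. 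The invisibility of $Q'$ is invoked only to bound the names of $X$ (acquired before $D$) and of $P$ (acquired before the block-write); after the block-write, all of $Q$ --- including $Q'$ and $Q''$ --- runs openly, and only the invisibility of $P$ is used there.
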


\begin{proof}
Let $X$ be the set of processes that own names in configuration $D$. Since no process in $Q'$ has written to a register, processes in $X$ own names in range $\{1, \dots, |X| + |Q| - |Q'| + c\}$. Let $L$ be the set of registers covered by $Q$ and $Q'' \subseteq Q$ be a set of processes that exactly covers $L$.
	Let $\widehat{\sigma_{P}}$ be a $P$-only schedule such that in execution $(D; \widehat{\sigma_{P}})$ all processes in $P$ complete their \getName{} operations. 	
	Then all processes in $P$ return names from $\{1,\dots,|X|+ |Q| - |Q'| + |P| + c\} = \{1,\dots,|X|+ |Q| + c\}$ in execution $(D; \widehat{\sigma_{P}})$. Suppose that in execution $(D; \widehat{\sigma_{P}})$, there is a process in $P$ that writes to a register not in $L$. Then let $\sigma_{P}$ be the shortest prefix of $\widehat{\sigma_{P}}$ such that a register $r \notin L$ is covered by a process in $P$. Hence in configuration $\sigma_{P}(D)$, $L$ is covered by $Q$ and $r$ is covered by $P$. Thus we are done. 
	Therefore, assume that in execution $(D; \widehat{\sigma_{P}})$ all processes in $P$ write only to $L$. Let $\pi_{Q''}$ be a block-write to $L$ by $Q''$. 		
Let $\sigma_{Q}$ be a $Q$-only schedule such that in execution $(\sigma_{P}\pi_{Q''}(D); \sigma_{Q})$ all processes in $Q$ complete their \getName{} operations.
Since configurations $\sigma_{P}\pi_{Q''}(D)$ and $\pi_{Q''}(D)$ are indistinguishable to all processes in $Q$, processes in $Q$ return names from $\{1,\dots,|X|+ |Q| + c\}$
     in execution $(\sigma_{P}\pi_{Q''}(D); \sigma_{Q})$.
    This is a contradiction because $|X| + |Q| + |P| = |X| + |Q| + c + 1$ processes receive names from a set of size $|X| + |Q| + c$ implying that they cannot be assigned distinct names.
\end{proof}

\begin{lemma}
\label{cor:one-shot}\samepage
Let $D$ be a reachable configuration in which:
\begin{itemize} 
	\item a set $P$ of processes exactly covers a set of $\ell \geq 1$ registers,
	\item there exists a process $q \in P$ such that $q$ has not written to any register and,
	\item there exists a set of $n' \geq c$ startable processes.
\end{itemize} 
Then, there exists a configuration reachable from $D$, in which at least $\ell + \lfloor \frac{n' - c}{c+1} \rfloor$ registers are covered.
\end{lemma}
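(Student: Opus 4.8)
The plan is to prove Lemma~\ref{cor:one-shot} by iterating Lemma~\ref{lemma:add-main}, each application covering one additional register at the cost of a fresh batch of $c+1$ startable processes. The obstacle to a direct iteration is that Lemma~\ref{lemma:add-main} requires a covering set containing a subset $Q'$ of $c+1$ processes that have \emph{not} written, whereas the hypothesis of Lemma~\ref{cor:one-shot} supplies only the single non-writing process $q$. I would therefore organize the argument into a short \emph{bootstrap} phase that manufactures such a set $Q'$, followed by an \emph{iteration} phase. If $n' < 2c+1$ then $\lfloor (n'-c)/(c+1)\rfloor = 0$ and there is nothing to prove, so I would assume $n' \geq 2c+1$.

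In the bootstrap phase I would enlarge the non-writing subset from $\{q\}$ to a set of size $c+1$ by parking $c$ of the startable processes. One at a time, I run a startable process solo until the first instant it is poised to write; at that instant it covers a register and has not written, so I add it to the non-writing set. Since each such schedule is solo, the processes of $P$ and the previously parked processes do not move, so the original $\ell$ registers stay covered (and possibly more) and the earlier parked processes remain non-writing. The delicate point is to rule out that a parked process completes its \getName{} without ever being poised to write. This is exactly where I would reuse the mechanism of Lemma~\ref{lemma:add-help}: if such a process could finish without writing, then, because it is invisible to everyone, I could bring in further startable processes that are each indistinguishable (to themselves) from a low-contention run, force them all to return names from a range that is too small, and reach a pigeonhole contradiction with name distinctness. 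This argument needs a sufficient supply of additional startable processes, which is guaranteed because at most $c$ processes are ever parked and $n' \geq 2c+1$. The phase thus consumes exactly $c$ startable processes and leaves a reachable configuration with at least $\ell$ covered registers and a non-writing subset $Q'$ of size $c+1$ inside the covering set.

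In the iteration phase I would repeatedly invoke Lemma~\ref{lemma:add-main} with the current covering set $Q$, the persistent subset $Q'$, and a fresh block $P$ of $c+1$ startable processes disjoint from $Q$; each invocation yields a $P$-only schedule after which some $p^{\ast}\in P$ covers a register outside the current covered set, giving one more covered register. The structural observation that makes the iteration go through is that the schedule produced by Lemma~\ref{lemma:add-main} is $P$-only, so the processes of $Q'$ never take a step and remain poised to write and non-writing. Hence I can take the covering set for the next round to be $Q \cup \{p^{\ast}\}$ while keeping the very same $Q'$, and repeat with the next fresh batch. Counting the startable processes, the bootstrap spends $c$ and each of the $t$ rounds spends $c+1$, so $c + (c+1)t \leq n'$ permits $t = \lfloor (n'-c)/(c+1)\rfloor$ rounds, leaving a configuration reachable from $D$ in which at least $\ell + \lfloor (n'-c)/(c+1)\rfloor$ registers are covered. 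I expect the bootstrap — in particular the clean verification that parking costs exactly $c$ startable processes while the name-range contradiction still has enough processes to run — to be the main obstacle; the iteration itself is routine once the persistence of $Q'$ under $P$-only schedules is noticed.
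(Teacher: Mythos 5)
Your iteration phase is exactly the paper's argument: keep the non-writing subset $Q'$ fixed, feed Lemma~\ref{lemma:add-main} a fresh batch of $c+1$ startable processes per round, and use that its schedule is $P$-only so $Q'$ never moves; that part is correct. The genuine gap is in your bootstrap. The claim that a parked process cannot complete its \getName{} without ever being poised to write is false, and the contradiction you sketch does not go through. The mechanism of Lemma~\ref{lemma:add-help} only yields a contradiction when an entire set of $c+1$ startable processes (plus one witness) \emph{all} finish without writing, because only then do all of their runs remain indistinguishable from low-contention runs and the allowed name range stays small. You cannot ``force'' the further startable processes you bring in to avoid writing: the moment one of them writes it becomes visible, the participant count (and hence the permitted name range $\{1,\dots,k+c\}$) grows in step with the number of names handed out, and the pigeonhole never bites. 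Indeed, for $c\geq 1$ a correct $(k+c)$-renaming algorithm may let one distinguished process return a name (say, name $1$) without writing anything, so no contradiction can be derived from a single silent finisher. Consequently your bootstrap may ``waste'' startable processes on silent finishers; even repaired (using that at most $c$ processes can finish silently), it would consume up to $2c$ startable processes and prove only the weaker bound $\ell+\lfloor\frac{n'-2c}{c+1}\rfloor$, not the stated one.

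The bootstrap is also unnecessary, and this is where the paper's proof is simpler than what you attempted. The hypothesis of Lemma~\ref{lemma:add-main} asks that $Q$ \emph{covers} (not \emph{exactly covers}) the $\ell$ registers, and covering is a ``for every register there exists a covering process'' condition, so adding arbitrary extra processes to a covering set preserves it. The paper therefore reserves a set $Q_0$ of $c$ startable processes, which have taken no steps at all, and uses $Q_0\cup P$ as the covering set and $Q'=Q_0\cup\{q\}$ as its non-writing subset of size $c+1$ -- no steps by anyone are needed to ``manufacture'' $Q'$. The iteration then runs exactly as in your second phase, drawing batches from the remaining $n'-c$ startable processes, which yields $\lfloor\frac{n'-c}{c+1}\rfloor$ rounds and the claimed bound.
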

\begin{proof}
Let $\Proc'$ be the set of all startable processes in $D$ and $Q \subseteq \Proc'$ be a set of $c$ processes. 
Then in configuration $D$, processes in $Q \cup P$ cover a set of $\ell$ registers where $|Q \cup P| \geq c+1$ and, 
	no process in set $Q \cup \{q\}$ has written to a register.  
	Hence using startable processes in $\Proc' \backslash Q$, we can inductively apply Lemma~\ref{lemma:add-main}, until we reach a configuration in which $\ell + \lfloor \frac{n' - c}{c+1} \rfloor$ registers are covered.
\end{proof}

\begin{lemma}\label{lemma:tight}
\label{corollary:one-shot-tight}\samepage
Let $A$ be an non-deterministic solo-terminating implementation of one-shot adaptive tight renaming. Let $D$ be any reachable quiet configuration with $n' \geq 2$ startable processes. Then there is an execution of $A$, starting from $D$ that requires at least $n'$ registers.
\end{lemma}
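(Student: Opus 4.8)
The plan is to specialize the machinery developed in Lemmas~\ref{lemma:add-help} and~\ref{cor:one-shot} to the tight case $c = 0$ and then chain them together. Tight renaming is precisely additive loose renaming with $c = 0$, so every lemma already proved in this section applies verbatim with $c$ set to $0$; the entire task reduces to checking that the two lemmas compose and that their constants collapse to the advertised bound $n'$.

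First I would bootstrap a single covered register. Since $D$ is quiet with $n' \geq 2 = c + 2$ startable processes, Lemma~\ref{lemma:add-help} (with $c = 0$, so the set $Q$ is a singleton) guarantees that any single startable process $p$ writes to some register during its solo execution $(D;\sigma_{p})$ of \getName{}. I would then let $\sigma_1$ be the shortest prefix of $\sigma_{p}$ after which $p$ is poised to perform its \emph{first} write, covering some register $r_1$. In the resulting reachable configuration $D_1 = \sigma_1(D)$ the singleton $\{p\}$ exactly covers the single register $\{r_1\}$, the process $p$ has not yet written to any register, and --- because $\sigma_1$ is $p$-only --- the other $n' - 1$ startable processes of $D$ are still startable.

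The second step would be to amplify this one covered register up to $n'$ of them by a single application of Lemma~\ref{cor:one-shot}. The configuration $D_1$ meets all three of its hypotheses with $c = 0$, $\ell = 1$, $P = \{p\}$, the un-written witness $q = p$, and with the $n' - 1 \geq 1 \geq c$ remaining startable processes. The lemma then produces a configuration reachable from $D_1$, hence from $D$, in which at least $\ell + \lfloor (n' - 1 - c)/(c+1)\rfloor = 1 + (n'-1) = n'$ distinct registers are covered. A configuration covering $n'$ distinct registers can only occur in an implementation that actually possesses $n'$ registers, so the execution reaching it witnesses that at least $n'$ registers are required, which is exactly the claim.

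I do not expect a deep obstacle here, since the heavy lifting lives in the earlier lemmas; the present statement is essentially an assembly. The one point that must be handled carefully is the base case: I have to ensure that the process $p$ singled out by Lemma~\ref{lemma:add-help} has written nothing at the instant it first covers $r_1$, because Lemma~\ref{cor:one-shot} explicitly requires a covering process that is still ``unwritten.'' Choosing $\sigma_1$ as the shortest prefix reaching $p$'s first poised write secures precisely this invariant. The only remaining delicacy is the arithmetic collapse $\lfloor (n'-1)/1\rfloor = n'-1$, which is what lets the additive bookkeeping close out to the exact value $n'$ rather than to a weaker bound.
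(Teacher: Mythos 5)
Your proposal is correct and follows essentially the same route as the paper's own proof: invoke Lemma~\ref{lemma:add-help} with $c=0$ to find a startable process that writes in a solo run, stop it at the shortest prefix where it covers a register (so it has not yet written), and then apply Lemma~\ref{cor:one-shot} with $\ell=1$, $q=p$, and the $n'-1$ remaining startable processes to cover $1+(n'-1)=n'$ registers. The two points you flag as delicate --- the ``unwritten'' covering witness and the arithmetic collapse --- are exactly the points the paper's proof relies on as well.
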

\begin{proof}
Let $p$ be a startable process and $\widehat{\sigma_p}$ be a $p$-only schedule such that in $(D; \widehat{\sigma_p})$, $p$ completes its \getName{}. Then by Lemma~\ref{lemma:add-help}, $p$ writes to a register. 
Let $\sigma_p$ be the shortest prefix of $\widehat{\sigma_p}$ such that in $(D; \sigma_p)$, $p$ covers a register. Then by Lemma~\ref{cor:one-shot}, there exists a configuration reachable from $\sigma_p(D)$, in which at least $1 + n' - 1 = n'$ registers are covered.
\end{proof}

In Lemma~\ref{lemma:final}, we show at least $\lfloor \frac{2(n' - c)}{c+2} \rfloor$ registers are required for a non-deterministic solo-terminating implementation of one-shot $(k+c)$-adaptive renaming starting from a quiet configuration with $n' \geq 2c+2$ startable processes. The intuition for this lemma is as follows. 
We prove the lemma by induction on $c$.  Starting from any quiet configuration, first we select a set $Q$ of $c+1$ startable processes such that one of them writes to a register in a solo-run and we stop it immediately before it writes. Then we choose a process $p$ not in $Q$ and run it until it covers a new register. If we succeed, we select another startable process not in $Q$.
We might not succeed for two reasons. First, we are out of startable processes in which case we are done.
Second, process $p$ completes its \getName{} and only writes to the set of covered registers. 
Then Lemma~\ref{cor:one-shot} provides a lower bound. 
Furthermore, starting from this configuration, if the set of covering processes perform a block-write and cover $p$'s trace, then no other process distinguishes this execution from one in which $p$ has not run at all. Therefore by Lemma~\ref{lem:domain-shrink}, our problem reduces to one-shot $(k+c-1)$-adaptive renaming. Hence we can invoke the induction hypothesis and conclude a second lower bound.
Our final lower bound is the maximum of these two lower bounds.

\begin{lemma}\label{lemma:final}
Let $A$ be a non-deterministic solo-terminating implementation of one-shot $(k+c)$-adaptive renaming. Let $D$ be any reachable quiet configuration with $n' \geq 2c+2$ startable processes. Then there is an execution of $A$,  starting from $D$ that requires at least $\lfloor \frac{2(n' - c)}{c+2} \rfloor$ registers.
\end{lemma}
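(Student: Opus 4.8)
The plan is to prove the statement by induction on $c$. The base case $c=0$ is precisely Lemma~\ref{lemma:tight}, since $\lfloor 2(n'-0)/(0+2)\rfloor=n'$. For the inductive step I assume $c\geq 1$ and that the lemma holds for $c-1$, fix a reachable quiet configuration $D$ with $n'\geq 2c+2$ startable processes, and build an execution of the $(k+c)$-renaming algorithm $A$ that covers the claimed number of registers. First I set aside a set $Q$ of $c+1$ startable processes; since $n'\geq c+2$, Lemma~\ref{lemma:add-help} supplies some $q\in Q$ that writes during its solo \getName{}, and I run $q$ only up to the instant before its first write, so that $q$ covers one register $r_0$ while every process in $Q$ is still write-free. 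Keeping $Q$ untouched from now on guarantees that at least $c$ write-free startable processes always remain.

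Next I grow the covered set greedily: repeatedly pick a fresh startable process $p\notin Q$ and run it solo until it first covers a register outside the current covered set, leaving exactly one poised process on each covered register. If I exhaust the startable processes outside $Q$, then together with $q$ they exactly cover $n'-c$ distinct registers, and a concluding block write realizes an execution that writes all of them; since $2/(c+2)\leq 1$ we have $n'-c\geq\lfloor 2(n'-c)/(c+2)\rfloor$ and we are finished.

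The interesting stopping condition is that the current process $p$ instead completes its \getName{}, obtaining a name $x$, while writing only inside the already-covered set, at which point some $j\geq 1$ registers are covered. I extract two bounds. For the first, I look at the configuration just before $p$ runs: the $j$ poised processes (one of them, $q$, write-free) exactly cover $j$ registers and at least $c$ startable processes remain, so Lemma~\ref{cor:one-shot} yields a reachable configuration covering $L_1:=j+\lfloor (n'-j-c)/(c+1)\rfloor$ registers. For the second, I let $p$ finish and then have the $j$ covering processes block-write their registers, obliterating $p$'s trace; the outcome is indistinguishable, to every startable process, from an execution in which $p$ never participated. Since no two processes may ever own the same name, the remaining startable processes must henceforth perform renaming while avoiding the single value $x$, i.e. $[(k+c)\backslash\{x\}]$-renaming, which by Lemma~\ref{lem:domain-shrink} needs at least as many registers as ordinary $(k+c-1)$-renaming. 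Applying the induction hypothesis to this reduced instance with its $n'-j-1$ startable processes gives a second bound $L_2:=\lfloor 2(n'-j-c)/(c+1)\rfloor$.

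The final bound is $\max(L_1,L_2)$. Treating $j$ as a parameter, $L_1$ is non-decreasing and $L_2$ non-increasing in $j$, and, ignoring floors, the two expressions meet exactly at $2(n'-c)/(c+2)$; hence their maximum never falls below $\lfloor 2(n'-c)/(c+2)\rfloor$. For values of $j$ so large that the reduced instance no longer has $2(c-1)+2=2c$ startable processes --- so that $L_2$ is unavailable --- I instead use $L_1\geq j\geq n'-2c\geq \lfloor 2(n'-c)/(c+2)\rfloor$, where the last inequality follows from $n'\geq 2c+2$. I expect the genuinely delicate step to be the reduction behind $L_2$: turning the slogan ``$p$ vanished, so one name is used up'' into a legitimate appeal to the induction hypothesis requires exhibiting a truly reachable quiet configuration of a bona fide $(k+c-1)$-renaming instance carrying the full $2c$ startable processes. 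Because the covering processes are only mid-\getName{} after the block write, the care lies in arguing that no startable process can distinguish the configuration reached from an admissible starting configuration of the reduced problem; the leftover floor bookkeeping at the crossover of $L_1$ and $L_2$ is routine by comparison.
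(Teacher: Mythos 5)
Your proposal is correct and follows essentially the same route as the paper's proof: induction on $c$ with Lemma~\ref{lemma:tight} as the base case, reserving $c$ write-free startable processes alongside a covering process $q$ obtained from Lemma~\ref{lemma:add-help}, the same greedy covering construction with the same two stopping conditions, and, in the interesting case, the same two bounds (one from Lemma~\ref{cor:one-shot}, one from block-write indistinguishability plus Lemma~\ref{lem:domain-shrink} feeding the induction hypothesis) balanced at the crossover $j = (n'-c)/(c+2)$. If anything you are more careful than the paper, which silently assumes the reduced instance still has the $2c$ startable processes required by the induction hypothesis; you handle that case explicitly via $L_1 \geq j \geq n' - 2c \geq \lfloor 2(n'-c)/(c+2) \rfloor$.
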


\begin{proof}
Let $\Proc'$ be the set of startable processes in $D$. 
We prove the lemma by induction on $c$. For the base case $c = 0$, by Lemma~\ref{lemma:tight}, the hypothesis is true.
Suppose that the induction hypothesis is true for $c-1 \geq 0$. 
Since, $|\Proc'| > c+1$, by Lemma~\ref{lemma:add-help} there is a process $q \in \Proc'$ that writes to a register in a solo-execution starting from $D$. Let $\sigma_{q}$ be the shortest $q$-only schedule such that in configuration $\sigma_{q}(D)$, there is a register $r$ covered by $q$. Let $Q \subseteq (\Proc' \backslash \{q\})$ be a set of $c$ processes. 

First, we inductively construct a sequence of schedules $\delta_1,\delta_2,\ldots$ until we have constructed $\delta_{\ell}$ 
such that in $\delta_{\ell}(D)$ either 
\begin{compactenum}[a)]
	\item $|\Proc' \backslash Q|$ registers are covered or,
	\item a process $q'$ in $\Proc' \backslash Q$ has completed its \getName{} and has written only to registers covered by other processes.
\end{compactenum}

We maintain the invariant that for each $i \in \{1,\dots,\ell\}$ in configuration $\delta_i(D)$, 
a set $P_i \subseteq (\Proc' \backslash Q)$ of $i$ processes covers a set $L_i$ of $i$ distinct registers and $\delta_i$ is $P_i$-only.

Let $\delta_1$ be $\sigma_{q}$. Then in configuration $\delta_1(D)$, one register is covered, so the invariant is true for $P_1 = \{q\}$ and $L_1=\{r\}$.

Now consider $i \geq 1$.
If $a)$ or $b)$ holds for $\delta_i$, we let $\ell=i$ and the construction stops. Furtheremore in case $b)$, let $q'$ be the process that completes its \getName{}. 

Otherwise, since in $\delta_i(D)$ a set $L_i$ of $i$ distinct registers is covered, we construct $\delta_{i+1}$ as follows.
Select $p \in \Proc' \backslash (P_i \cup Q)$. Let $\gamma$ be the shortest $p$-only schedule such that either
\begin{compactenum}[1)]
	\item $p$ does a complete \getName{} in execution $(\delta_i(D); \gamma)$ and only writes to $L_i$, or
	\item in configuration $\gamma(\delta_i(D))$, $p$ covers a register $r' \notin L_i$.
\end{compactenum}

Let $\delta_{i+1}$ be $\delta_i \gamma$. 
First assume case $1)$ happens. Then condition $b)$ is achieved, the construction stops and we let $\delta_{\ell}=\delta_{i+1}$ and $q' = p$. 
Now suppose case $2)$ happens.
If $i+1 =|\Proc' \backslash Q|$, then condition $a)$ is achieved, the construction stops and $\delta_{\ell}=\delta_{i+1}$. 
Otherwise, the invariant remains satisfied for $L_{i+1}=L_{i} \cup \{r'\}$ and $P_{i+1}=P_i \cup \{p\}$.
Clearly, after at most $|\Proc' \backslash Q|$ steps either $a)$ or $b)$ is achieved.

In case $a)$, in configuration $\delta_{\ell}(D)$, $\ell = n' - c$ registers are covered so the lemma holds in this case.  
Now suppose case $b)$ happens. 
In configuration $\delta_{\ell - 1}(D)$, a set of $\ell - 1$ registers (i.e $L_{\ell - 1}$) are covered exactly by a set of processes $P_{\ell - 1}$ and process $q$ in $P_{\ell - 1}$ has not written to any registers. Furthermore in configuration $\delta_{\ell - 1}(D)$, all processes in $\Proc' \backslash P_{\ell - 1} \supseteq Q$ are startable. 
Therefore, by Lemma~\ref{cor:one-shot}, there is a configuration reachable from $\delta_{\ell - 1}(D)$ in which at least $\ell - 1 + \lfloor \frac{n' - \ell + 1 - c}{c+1} \rfloor$ registers are covered. 
Let $\pi_{P_{\ell - 1}}$ be a block-write by $P_{\ell - 1}$. Let $\sigma_{P_{\ell - 1}}$ be a $P_{\ell - 1}$-only schedule such that in execution $(\delta_{\ell} \pi_{P_{\ell - 1}} (D); \sigma_{P_{\ell - 1}}) $ all processes in $P_{\ell - 1}$ complete their \getName{}. Let $x$ be the name taken by $q'$ in execution $(D; \delta_{\ell})$. Let $X$ be the set of processes that own names in configuration $D$. Since in configuration $\delta_{\ell}(D)$, only processes in $X \cup P_{\ell}$ have invoked a \getName{}, $x \in \{1, \dots, |X| + \ell + c\}$. Note that $\delta_{\ell} \pi_{P_{\ell - 1}} \sigma_{P_{\ell - 1}} (D)$ and $\delta_{\ell - 1} \pi_{P_{\ell - 1}} \sigma_{P_{\ell - 1}} (D)$ are indistinguishable to all processes except $q'$. Hence in any $(\Proc' \backslash P_{\ell})$-only execution starting from $\delta_{\ell - 1} \pi_{P_{\ell - 1}} \sigma_{P_{\ell - 1}} (D)$, names returned by processes in $\Proc' \backslash P_{\ell}$ are in $\{1, \dots, k + c\} \backslash \{x\}$ where $k \geq |X| + \ell$ and therefore $\{x\} \subset \{1, \dots, k + c\}$. 
Thus by Lemma~\ref{lem:domain-shrink}, starting at $\delta_{\ell - 1} \pi_{P_{\ell - 1}} \sigma_{P_{\ell - 1}} (D)$, algorithm $A$ requires as many registers as a $(k+c-1)$-renaming algorithm starting at a quiet configuration with $|\Proc' - P_{\ell}| = n' - \ell$ startable processes. 
Therefore, by the induction hypothesis there is a reachable configuration from $\delta_{\ell - 1} \pi_{P_{\ell - 1}} \sigma_{P_{\ell - 1}} (D)$, in which at least $\lfloor \frac{2(n' - \ell - c + 1)}{c+1} \rfloor$ registers are covered.

The minimum of $\ell - 1 + \lfloor \frac{n' - \ell + 1 - c}{c+1} \rfloor$ and $\lfloor \frac{2(n' - \ell - c + 1)}{c+1} \rfloor$, is maximized when $\ell -1 + \lfloor \frac{n' - \ell + 1 - c}{c+1} \rfloor = \lfloor \frac{2(n' - \ell - c + 1)}{c+1} \rfloor$. Hence, $\ell = \lfloor \frac{n' - c}{c+2} \rfloor - 1$. Therefore the algorithm requires at least $\lfloor \frac{2(n' - c)}{c+2} \rfloor$ registers.
\end{proof}

\begin{theorem}\label{theorem:one-shot}\samepage
Any non-deterministic solo-terminating implementation of one-shot $(k+c)$-adaptive renaming requires at least $\lfloor \frac{2(n - c)}{c+2} \rfloor$ registers.
\end{theorem}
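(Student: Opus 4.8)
The plan is to derive the theorem directly from Lemma~\ref{lemma:final} by instantiating that lemma at the initial configuration $C^\ast$. First I would observe that $C^\ast$ is reachable via the empty schedule and is quiet: in $C^\ast$ every process is idle, so in the one-shot setting (where there is no \relName{}) every process has not yet started a \getName{} and is therefore startable. Consequently the number $n'$ of startable processes in $C^\ast$ equals $n$. Assuming $n \geq 2c+2$, Lemma~\ref{lemma:final} then immediately yields an execution of any such implementation $A$ starting from $C^\ast$ that requires at least $\lfloor \frac{2(n'-c)}{c+2} \rfloor = \lfloor \frac{2(n-c)}{c+2} \rfloor$ registers, which is exactly the claimed bound.

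The only genuine checking left concerns the boundary regime $c+2 \leq n < 2c+2$, in which the hypothesis $n' \geq 2c+2$ of Lemma~\ref{lemma:final} fails. (The degenerate case $n < c+2$ gives $f(1) = 1+c \geq n$, where $(k+c)$-renaming is trivial, as noted in the introduction.) In the boundary regime, from $n < 2c+2$ we get $\frac{2(n-c)}{c+2} < \frac{2(c+2)}{c+2} = 2$, so $\lfloor \frac{2(n-c)}{c+2} \rfloor \leq 1$; and since $n \geq c+2$, Lemma~\ref{lemma:add-help} applied at $C^\ast$ (taking $Q$ of size $c+1$) guarantees that some process writes to a register in a solo run, so at least one register is required. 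Hence the bound holds here too, and all the substance of the theorem lies in the range $n \geq 2c+2$.

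I do not expect any real obstacle in this final argument: every difficult ingredient has already been established. In particular, Lemma~\ref{lemma:final} carries the entire weight through its induction on $c$, which repeatedly covers fresh registers (via Lemma~\ref{lemma:add-main} and Lemma~\ref{cor:one-shot}) and, in the alternative branch, reduces an instance of $(k+c)$-renaming to $(k+c-1)$-renaming by a block-write that obliterates one process' trace together with the domain-shrinking reduction of Lemma~\ref{lem:domain-shrink}. The theorem is simply the specialization of that lemma to the canonical starting configuration $C^\ast$ with all $n$ processes startable; the only forward-looking tasks are to confirm the precondition $n \geq 2c+2$ and to dispose of the small-$n$ corner case, neither of which presents any difficulty.
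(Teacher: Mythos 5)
Your proposal is correct and takes essentially the same route as the paper: the paper's proof is exactly your first paragraph, namely observing that the initial configuration is quiet with $n$ startable processes and invoking Lemma~\ref{lemma:final}. Your additional handling of the boundary regime $n < 2c+2$ (via Lemma~\ref{lemma:add-help} and the triviality of the case $n < c+2$) is extra care that the paper's one-line proof leaves implicit.
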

\begin{proof}
The initial configuration is a quiet configuration with $n$ startable processes. Hence, by Lemma~\ref{lemma:final}, there is an execution, starting from the initial configuration that requires at least $\lfloor \frac{2(n - c)}{c+2} \rfloor$ registers.
\end{proof}

Observe that by setting $c=0$, it follows from Theorem~\ref{theorem:one-shot} that any non-deterministic solo-terminating implementation of one-shot adaptive tight renaming requires $n$ registers. Since the number of startable processes is initially $n$, next corollary also follows from Lemma~\ref{lemma:final}. 

\begin{corollary}
Any non-deterministic solo-terminating implementation of one-shot adaptive tight renaming requires at least $n$ registers.
\end{corollary}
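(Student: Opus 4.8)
The plan is to obtain this corollary as the $c=0$ specialization of the one-shot lower bound already established, so that no new covering argument is needed. Recall from the definitions that one-shot adaptive tight renaming is exactly one-shot $(k+c)$-adaptive renaming in the case $c=0$, since then the name bound $f(k)=k+c$ reduces to $f(k)=k$. I would therefore instantiate Theorem~\ref{theorem:one-shot} with $c=0$: it asserts that any non-deterministic solo-terminating implementation of one-shot $(k+c)$-adaptive renaming requires at least $\lfloor 2(n-c)/(c+2)\rfloor$ registers, and setting $c=0$ gives $\lfloor 2n/2\rfloor = n$.

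Alternatively, and perhaps more transparently, I would invoke Lemma~\ref{lemma:tight} directly. The initial configuration $C^\ast$ is quiet and has all $n$ processes startable, so with $n'=n\geq 2$ the lemma yields an execution of the tight-renaming implementation that requires at least $n'=n$ registers. This route avoids even the floor computation and exposes that the bound comes from the base case of the induction in Lemma~\ref{lemma:final} — namely the tight case handled by Lemma~\ref{lemma:tight} — rather than from the general recursive structure.

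Since both prior results are available by assumption, there is no genuine obstacle at this final step: the substantive work has already been carried out, in the fresh-register accounting of Lemma~\ref{lemma:add-help}, the covering construction of Lemma~\ref{lemma:add-main}, and the domain-shrinking reduction of Lemma~\ref{lem:domain-shrink} that drives the induction on $c$. The only things I would need to verify are the routine identification of tight renaming with the $c=0$ instance and the fact that $n'=n$ at the initial configuration (so that the hypothesis $n'\geq 2$ of Lemma~\ref{lemma:tight} holds for $n\geq 2$); the claimed bound of $n$ registers then follows immediately.
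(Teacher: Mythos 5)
Your proposal is correct and matches the paper's own argument: the paper likewise obtains the corollary by setting $c=0$ in Theorem~\ref{theorem:one-shot} (giving $\lfloor 2n/2\rfloor = n$) and notes that it equally follows from Lemma~\ref{lemma:final} applied to the initial configuration, whose $c=0$ base case is exactly your Lemma~\ref{lemma:tight} route with $n'=n$ startable processes. Both of your routes and your check of the hypothesis $n'\geq 2$ (which holds since $f(1)\leq n-1$ forces $n\geq 2$) are sound.
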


\section{Wait-Free One-shot $(b-1)$-Bounded $(k(k+1)/2)$-Adaptive Renaming}\label{Algorithms}
In this section we present a wait-free one-shot $(b-1)$-bounded $(k(k+1)/2)$-adaptive renaming algorithm from $b$ registers. Since $0$-bounded adaptive renaming is a trivial problem, we assume that $b \geq 2$.

The algorithms in this section employ a set  $\Reg=\{R[0], \ldots, R[b-1] \}$ of shared atomic registers. 
In our proofs, a \emph{register configuration} is a tuple $(V_0,\dots,V_{b-1})$, 
denoting that register $R[i]$, $0 \leq i\leq b-1$, has value $V_i$.
The proofs focus on just the sub-sequence of register configurations produced by an execution.
Specifically, given an execution $E=(C_0;\sigma)$, 
let \emph{write schedule} $\widehat{\sigma}$ be the sub-sequence of $\sigma$ that produces write steps in $(C_0;\sigma)$. 
Execution $E$ gives rise to the  sequence of \emph{register configurations} $\Gamma_E = C_0,C_{1},\dots$ 
such that the $i$-th step of $\widehat{\sigma}$ is a write that changes register configuration $C_{i-1}$ to register configuration $C_{i}$. 
For any scan operation $s$ in $E$, define $\Index{s}=i$, if $s$ occurs in $E$ between $C_i$ and $C_{i+1}$ in $\Gamma_E$. 
For any write operation $w$ in $E$, define $\Index{w}=i$, if $w$ is the $i$-th step of $\widehat{\sigma}$.
Notice that the view returned by a scan with index $i$ is equal to $C_i$.
A register configuration $C=(V_0,\dots,V_{b-1})$ is \emph{consistent} 
if $V_0=\dots=V_{b-1}$ in which case we say $V_0$ is the \emph{content} of $C$.
Let $C_i$ and $C_j$ be register configurations in the sequence $\Gamma_E=C_0,C_{1},\dots$ such that $i \leq j$. 
\emph{Interval}$[i, j]$ denotes the sub-sequence of steps in execution $E$ 
that begins at write operation $w$ where $\Index{w}=i$, 
and ends immediately after write operation $u$ where $\Index{u}=j$.
We use $\val{C}{R}$ to denote the content of register $R$ in configuration $C$. 
A local variable $x$ in these algorithms is denoted by $x_p$ 
when it is used in the method call invoked by process $p$.

\subsection{$(b-1)$-Bounded $(k(k+1)/2)$-Adaptive Renaming Using Atomic Scan}\label{WaitFree:Scan}

Fig.~\ref{fig:Ren-alg} presents a wait-free implementation of a one-shot $(b-1)$-bounded $(k(k+1)/2)$-adaptive renaming algorithm assuming an atomic scan operation.
In Section~\ref{sec:newScan}, we show how to remove this assumption. 

Each process maintains a set of processes, $S$, that it knows are participating including itself, 
and alternately executes write and scan operations.  
In the write operation, it writes $S$ 
to the next register after where it last wrote, in cyclic order through the $b$ registers.
After each of its scan operations, it updates 
$S$ to all the processes it sees in the scan together with the processes already in its set.
The process stops with an assigned name when either its scan shows exactly its own set, $S$, in every register, 
or $S$ has grown to size at least $b$.
If $|S|$ is less than $b$, its name is based on $|S|$ and its rank in $S$,
where $\mathrm{rank}(\mathit{id},S)=|\{i~|~(i \in S)\wedge(i \leq \mathit{id})\}|$. 
If $|S|$ is $b$ or greater, it returns a safe but large name.

\subsubsection*{Correctness and space complexity.} 
Since it is clear that the algorithm in Fig.~\ref{fig:Ren-alg} uses $b$ registers, 
the space complexity will follow immediately after confirming that it is a correct adaptive renaming algorithm.  
The correctness of this algorithm relies on the fact that 
if any two processes return names based on a set of size $s < b$, 
then they  have the same set. 
The main component of the proof is to establish this fact. 

\begin{figure}
  \textbf{shared:}
			$\Reg = R[0,\ldots, b - 1]$ is an array of multi-writer multi-reader registers, each register is initialized to $\emptyset$\\
	\textbf{local:}		
		An array $r[0,\dots,b-1]$; $\mathit{pos} \in \{0, \dots, b-1\}$ initialized to $0$; $S$ initialized to $\{\mathit{id}\}$;\\
		\setcounter{AlgoLine}{0}
		\begin{algorithm}[H]\caption{getName()}
		\Repeat{\label{line:wfr:repeat}$(|S| \geq b) \vee (r[0]=r[1]=\dots=r[b-1]=S)$ }{
      $R[\mathit{pos}].\text{write}(S)$\label{line:wfr:Write_S}\\
      $r[0,\dots,b-1] := \Reg.\text{scan}()$\label{line:wfr:scan}\\
      $S := \bigcup_{i=0}^{b-1}{r[i]} \cup S$\label{line:wfr:union}\\
      $\mathit{pos} := (\mathit{pos} + 1)$ \mod $b$\label{line:wfr:increment_pos}\\
    }\label{line:wfr:until}
    \uIf(\label{line:wfr:if}){$|S| \leq b-1$}{
      \Return{$(|S|(|S|-1))/2  + \mathrm{rank}(\mathit{id},S)$}\label{line:wfr:return1}\\ 
    }
    \Else{
      \Return{$b(b-1)/2 + \mathit{id}$}\label{line:wfr:return2}\\
    }
		\end{algorithm}

		\caption{$(b-1)$-Bounded $(k(k+1)/2)$-Adaptive Renaming Using Atomic Scan}\label{fig:Ren-alg}
\end{figure}

\begin{observation}\label{obs:simple}
For any write operation with value $S$ by process $p$, $p \in S$. 
\end{observation}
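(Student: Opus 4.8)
The plan is a short induction on the steps that $p$ takes. The two structural facts I would isolate from the pseudocode in Fig.~\ref{fig:Ren-alg} are: (i) the local set $S_p$ is initialized to $\{p\}$, since the initialization sets $S := \{\mathit{id}\}$ and $\mathit{id} = p$ for process $p$; and (ii) the only line that ever modifies $S_p$ is Line~\ref{line:wfr:union}, which executes the assignment $S_p := \bigcup_{i=0}^{b-1} r_p[i] \cup S_p$.

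First I would observe that this assignment is monotone with respect to set inclusion: because its right-hand side takes the union with the \emph{current} value of $S_p$, the value of $S_p$ can only grow as $p$'s execution proceeds. Hence $S_p$ is non-decreasing, so once an element belongs to $S_p$ it remains in $S_p$ thereafter.

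Combining (i) with this monotonicity, since $p \in S_p$ holds in $p$'s initial state and $S_p$ never loses an element, we have $p \in S_p$ at every point of $p$'s execution, and in particular whenever $p$ reaches the write on Line~\ref{line:wfr:Write_S}. Because that write stores the then-current value of $S_p$, every value $S$ that $p$ writes satisfies $p \in S$, as claimed.

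I do not expect any real obstacle here; the statement is immediate once the monotonicity of the union assignment is noted. The only points requiring a moment of care are confirming that no line other than Line~\ref{line:wfr:union} alters $S_p$ and that the initialization does place $p$ in $S_p$, both of which are directly evident from the code in Fig.~\ref{fig:Ren-alg}.
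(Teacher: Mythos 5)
Your proof is correct and matches the paper's (implicit) reasoning: the paper states this as an observation without proof precisely because it follows immediately from the initialization $S := \{\mathit{id}\}$ and the fact that the only update, Line~\ref{line:wfr:union}, unions new elements into the current $S$ and thus never removes $p$. Your explicit monotonicity argument is exactly the justification the paper takes as evident from Fig.~\ref{fig:Ren-alg}.
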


\begin{lemma}
\label{lem:regConfig}
For any execution $E$, let $C_a$ be a consistent register configuration with content $\widehat{S}$. 
For any register configuration $D$ following $C_a$ in $E$, 
define $\SmallRegs_{D}=\{R \in \Reg |\widehat{S} \not\subseteq \val{D}{R} \}$. 
Then there exists a one-to-one function $f_{D}:\SmallRegs_{D} \rightarrow \Proc$ satisfying, 
$\forall R \in \SmallRegs_{D}$, $f_{D}(R) \in \val{D}{R}$ 
and $f_{D}(R)$ performs at least one write in the execution interval between $C_a$ and $D$. 
\end{lemma}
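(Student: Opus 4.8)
The plan is to define the map explicitly and then concentrate all the effort on injectivity. For each $R \in \SmallRegs_{D}$ let $f_{D}(R)$ be the process that performs the \emph{last} write to $R$ in the segment of $E$ between $C_a$ and $D$. This is well defined: since $C_a$ is consistent with content $\widehat{S}$ we have $\val{C_a}{R}=\widehat{S}$, whereas $\widehat{S} \not\subseteq \val{D}{R}$, so $R$ must be overwritten at least once strictly after $C_a$. The two required properties are then immediate: the value $\val{D}{R}$ is exactly the value of this last write, so by Observation~\ref{obs:simple} its writer belongs to $\val{D}{R}$, and by construction it writes in the interval between $C_a$ and $D$. It remains to prove that $f_{D}$ is one-to-one.

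I would prove injectivity by induction on the position $t$ of the register configuration $C_t$ in $\Gamma_E$ (for $t$ at least the position of $C_a$), establishing the equivalent statement that in $C_t$ no process is the last writer, between $C_a$ and $C_t$, of two distinct registers of $\SmallRegs_{C_t}$. The base case is the first write after $C_a$ that produces a register not containing $\widehat{S}$: at that moment only one register is \emph{bad} (misses some element of $\widehat{S}$), so no process can be charged with two. For the inductive step I assume the claim for all earlier configurations and suppose, for contradiction, that the write producing $C_t$ is the one that first makes some process $q$ the last writer of two distinct bad registers; call $q$'s two surviving bad writes $w_1$ and $w_2$, with $w_1$ preceding $w_2$, both occurring after $C_a$.

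The heart of the argument is to examine the scan $\rho$ that $q$ executes on line~\ref{line:wfr:scan} in the iteration just before $w_2$, which computes the value $S_q$ that $w_2$ writes. Since $w_2$'s value misses some $z \in \widehat{S}$, and $S_q$ contains the union of everything $\rho$ reads, every register read by $\rho$ misses $z$; hence all $b$ registers are bad at $\rho$'s configuration. Because $\rho$ follows $w_1$ --- $q$'s first bad write after $C_a$ --- the configuration seen by $\rho$ lies strictly between $C_a$ and $C_t$, so each of these $b$ registers has a well-defined last writer in the interval beginning at $C_a$, and each such writer lies in the scanned value by Observation~\ref{obs:simple}. Now the loop condition does the work: $q$ does not stop after $\rho$ (it goes on to perform $w_2$), so line~\ref{line:wfr:repeat} fails and $|S_q| < b$. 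But the updated $S_q$ contains the last writer of every one of the $b$ bad registers that $\rho$ sees, and $b$ registers cannot have $b$ distinct last writers inside a set of size less than $b$; two of them must therefore share a last writer, which is a violation at a strictly earlier configuration, contradicting the induction hypothesis.

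The step I expect to be the main obstacle is precisely this reduction to an earlier configuration, and in particular ensuring that the descent never escapes below $C_a$: the only reason the scan $\rho$ can be exploited is that it occurs after $w_1$ and hence after $C_a$, which is what forces the exposed all-bad configuration into the open interval between $C_a$ and $C_t$. Getting the bookkeeping right --- that a register bad at $\rho$'s configuration was overwritten after $C_a$ (so its interval-last-writer exists and lies in the scanned value), and that the counted last writers are genuinely distinct processes --- is the delicate part; once it is in place, the pigeonhole against $|S_q| < b$ closes the induction and yields injectivity.
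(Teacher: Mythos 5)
Your proof is correct and takes essentially the same approach as the paper's: the paper's inductively constructed map $f_{C_\ell}$ is exactly your last-writer-in-the-interval function, and its key step --- showing that the scan preceding any write of a value missing part of $\widehat{S}$ must itself occur before $C_a$, since otherwise all $b$ registers would fail to contain $\widehat{S}$ at that scan and the induction hypothesis would force $|S| \geq b$, terminating the loop before the write --- is the same pigeonhole-against-the-loop-condition argument you use to rule out a duplicated writer. The only difference is organizational: the paper maintains injectivity as an invariant by showing each new bad writer performed no earlier write in the interval, whereas you derive a contradiction from a minimal counterexample.
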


\begin{proof}
Let $C_a,C_{a+1},\ldots$ be the sequence of register configurations that arises from $E$ starting at $C_a$.  
We prove the lemma by induction on the indices of this sequence.
The base case $\ell=a$, is trivially true since set $\SmallRegs_{C_a}$ is the empty set.  

Suppose that the induction hypothesis is true for $\ell-1 \geq a$. 
Let the write step between $C_{\ell-1}$ and $C_{\ell}$ be the operation, $w$, by process $p$, 
into register $\widehat{R}$ with value $V$.
Let $s$ be the most recent scan operation by $p$ preceding $w$ if it exists. 

If $\widehat{S} \subseteq V$, then  $\SmallRegs_{C_{\ell}} = \SmallRegs_{C_{\ell-1}} \setminus \{\widehat{R}\}$.
Define $f_{C_{\ell}}(R) = f_{C_{\ell-1}}(R)$, $\forall R \in \SmallRegs_{C_\ell}$.
Since $f_{C_{\ell-1}}$ satisfies the induction hypothesis, and
$\SmallRegs_{C_{\ell}} \subseteq  \SmallRegs_{C_{\ell-1}}$, $f_{C_{\ell}}$ also satisfies the induction hypothesis.

Now consider the case $\widehat{S} \not\subseteq V$. 
So $\SmallRegs_{C_{\ell}} = \SmallRegs_{C_{\ell-1}} \cup \{\widehat{R}\}$.
We first show that $s$ happens before $C_a$ or $w$ is the first write by $p$. 
Suppose, for the purpose of contradiction, that 
$a \leq \Index{s} \leq \ell-1$. 
We have $\forall R \in \Reg$,
$\widehat{S} \not\subseteq \val{C_{\Index{s}}}{R}$ 
since otherwise, by Line~\ref{line:wfr:union}, $\widehat{S} \subseteq V$. 
Thus $|\SmallRegs_{C_{\Index{s}}}|=b$. 
By the induction hypothesis, $f_{C_{\Index{s}}}$ selects a distinct process from each register in $\SmallRegs_{C_{\Index{s}}}$, 
implying, 
by Line~\ref{line:wfr:union}, that the size of $S_p$ is at least $b$. 
Hence $p$ would have stopped in Line~\ref{line:wfr:until} before performing any write operation.  
Therefore $s$ happens before $C_a$ or $w$ is the first write by $p$, 
and consequently any write by $p$ before $w$ happens before $C_a$. 
On the other hand,  $\forall R \in \SmallRegs_{C_{\ell-1}}$, $f_{C_{\ell-1}}(R)$ performs a write during Interval$[a,\ell-1]$ 
implying $p$ is not in $\{f_{C_{\ell-1}}(R)~|~ R \in \SmallRegs_{C_{\ell-1}}\}$.
By Observation~\ref{obs:simple}, $p \in V$ and $p$ performs a write after $C_a$. 
Therefore by defining $f_{C_{\ell}}(R)=f_{C_{\ell-1}}(R)$, $\forall R \in (\SmallRegs_{C_{\ell-1}} \setminus \{\widehat{R}\})$ 
and $f_{C_{\ell}}(\widehat{R}) = p$, the induction hypothesis holds for $\ell$. 
\end{proof}

\begin{lemma}
\label{lem:LA}
For any execution $E$, let $\widehat{S_p}$ and $\widehat{S_q}$ be the value of $S_p$ and $S_q$ in Line~\ref{line:wfr:if} for $p$ and $q$ 
when they have completed the repeat loop. 
If $|\widehat{S_p}|=|\widehat{S_q}| < b$ then $\widehat{S_p} = \widehat{S_q}$.
\end{lemma}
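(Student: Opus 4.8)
The plan is to show that both $p$ and $q$ terminate because their last scan witnesses a \emph{consistent} register configuration whose content is their own set, and then to invoke Lemma~\ref{lem:regConfig} to force the two contents to coincide.

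First I would argue that, since $|\widehat{S_p}| < b$, process $p$ cannot have left the repeat loop through the disjunct $|S| \geq b$; hence it left through $r[0]=r[1]=\dots=r[b-1]=S$. Here I should check the small bookkeeping point that when all scanned entries $r[i]$ agree, the subsequent union on Line~\ref{line:wfr:union} leaves $S$ equal to that common value, so $\widehat{S_p}$ is exactly the content seen. Thus $p$'s final scan (Line~\ref{line:wfr:scan}) returns a register configuration in which every register equals $\widehat{S_p}$; call this consistent configuration $C_p$, whose content is $\widehat{S_p}$. Symmetrically, $q$'s final scan returns a consistent configuration $C_q$ with content $\widehat{S_q}$. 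Both $C_p$ and $C_q$ appear in the sequence $\Gamma_E$ and so are totally ordered by their index; without loss of generality I assume $C_p$ does not occur after $C_q$ (otherwise swap the roles of $p$ and $q$).

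Next I would apply Lemma~\ref{lem:regConfig} with $C_a = C_p$ (content $\widehat{S_p}$) and $D = C_q$. Because $C_q$ is consistent with content $\widehat{S_q}$, the set $\SmallRegs_{C_q} = \{R \in \Reg \mid \widehat{S_p} \not\subseteq \val{C_q}{R}\}$ is either empty (when $\widehat{S_p} \subseteq \widehat{S_q}$) or all of $\Reg$ (when $\widehat{S_p} \not\subseteq \widehat{S_q}$), since every register holds the same value $\widehat{S_q}$ in $C_q$. Suppose for contradiction that $\widehat{S_p} \not\subseteq \widehat{S_q}$, so $\SmallRegs_{C_q} = \Reg$, which has size $b$. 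The lemma supplies a one-to-one map $f_{C_q}$ from these $b$ registers into $\Proc$ with $f_{C_q}(R) \in \val{C_q}{R} = \widehat{S_q}$ for every $R$. Hence $\widehat{S_q}$ contains at least $b$ distinct processes, i.e. $|\widehat{S_q}| \geq b$, contradicting $|\widehat{S_q}| < b$.

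Therefore $\widehat{S_p} \subseteq \widehat{S_q}$, and since $|\widehat{S_p}| = |\widehat{S_q}|$ this yields $\widehat{S_p} = \widehat{S_q}$, as required. The main obstacle is the setup in the first paragraph — pinning down that the termination test actually exhibits a consistent scanned configuration equal to the process's final set, and correctly ordering $C_p$ and $C_q$ by index so that Lemma~\ref{lem:regConfig} is applied in the valid direction ($C_a$ before $D$). Once that scaffolding is in place, the counting argument through Lemma~\ref{lem:regConfig} is immediate.
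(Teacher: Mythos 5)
Your proof is correct and takes essentially the same approach as the paper's: locate the two consistent register configurations $C_p$ and $C_q$ witnessed by the final scans, order them in $\Gamma_E$, and apply Lemma~\ref{lem:regConfig} with $C_a = C_p$ and $D = C_q$ so that $\widehat{S_p} \not\subseteq \widehat{S_q}$ would force $b$ distinct processes into $\widehat{S_q}$, contradicting $|\widehat{S_q}| < b$. The only differences are cosmetic — the paper phrases the dichotomy as two cases where you argue by contradiction, and you additionally spell out the bookkeeping that the loop's exit condition makes $\widehat{S_p}$ equal to the scanned consistent content.
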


\begin{proof}
Let $C_p$  and $C_q$ be the consistent register configurations that resulted in $\widehat{S_p}$ and $\widehat{S_q}$ respectively and assume, 
without loss of generality, that $C_p$ precedes $C_q$ in $\Gamma_E$.
By Line~\ref{line:wfr:until}, 
$R[0]=\dots=R[b-1]=\widehat{S_q}$ in $C_q$. 
Thus, either $\forall R \in \Reg$,  $\widehat{S_p} \subseteq \val{C_q}{R}$ or 
$\forall R \in \Reg$, $\widehat{S_p} \not\subseteq\val{C_q}{R}$. 

For the first case, by Line~\ref{line:wfr:union}, 
$\widehat{S_p} \subseteq \widehat{S_q}$ and since $|\widehat{S_p}|=|\widehat{S_q}|$, 
$\widehat{S_p} = \widehat{S_q}$. 
For the latter case, set $\SmallRegs_{C_{q}}=\{R \in \Reg ~|~\widehat{S_p} \not\subseteq \val{C_q}{R}\}$ has size $b$. 
By Lemma~\ref{lem:regConfig}, 
there is a distinct process in each register in $\SmallRegs_{C_q}$. 
So there are at least $b$ distinct processes in $\widehat{S_q}$ contradicting $|\widehat{S_q}| < b$.
\end{proof}

\begin{lemma}\label{lem:wfub:name-distinctness}
The names returned by any two distinct processes are distinct.
\end{lemma}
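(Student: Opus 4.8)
The plan is to fix two distinct processes $p$ and $q$, write $\widehat{S_p}$ and $\widehat{S_q}$ for the values of $S$ read in Line~\ref{line:wfr:if} when each has exited the repeat loop, and split into cases according to which of the two \Return{} statements each process executes. Throughout I would use that $S$ is initialized to $\{\mathit{id}\}$ and only grows by unions, so $\mathit{id}_p\in\widehat{S_p}$ and $\mathit{id}_q\in\widehat{S_q}$; consequently $\mathrm{rank}(\mathit{id},S)\in\{1,\dots,|S|\}$ whenever $\mathit{id}\in S$, and $\mathrm{rank}(\cdot,S)$ is strictly increasing, hence injective, on the elements of any fixed $S$.

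The heart of the argument is the case where both processes return from Line~\ref{line:wfr:return1}, i.e.\ $|\widehat{S_p}|,|\widehat{S_q}|\le b-1$. Here I would distinguish two subcases. If $|\widehat{S_p}|=|\widehat{S_q}|$, then Lemma~\ref{lem:LA} gives $\widehat{S_p}=\widehat{S_q}=:S$; since $p\neq q$ and both lie in $S$, injectivity of $\mathrm{rank}(\cdot,S)$ forces $\mathrm{rank}(\mathit{id}_p,S)\neq\mathrm{rank}(\mathit{id}_q,S)$, and the common offset $|S|(|S|-1)/2$ cancels, so the returned names differ. If instead $|\widehat{S_p}|\neq|\widehat{S_q}|$, I would show the two names lie in disjoint intervals: for a set of size $s\le b-1$ the value returned in Line~\ref{line:wfr:return1} lies in $\{s(s-1)/2+1,\dots,s(s-1)/2+s\}$, and a short triangular-number computation shows $s(s-1)/2+s = (s{+}1)s/2$, which is exactly one less than the minimum value $(s{+}1)s/2+1$ for size $s+1$; thus consecutive sizes yield adjacent, disjoint blocks, and different sizes can never collide.

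The remaining cases are routine. If both return from Line~\ref{line:wfr:return2}, the names are $b(b-1)/2+\mathit{id}_p$ and $b(b-1)/2+\mathit{id}_q$, which differ because $\mathit{id}_p\neq\mathit{id}_q$. If exactly one of them, say $p$, returns from Line~\ref{line:wfr:return1}, its name is at most $(b-1)(b-2)/2+(b-1)=b(b-1)/2$, whereas $q$'s name from Line~\ref{line:wfr:return2} is at least $b(b-1)/2+1$, so again they are distinct. The genuinely hard part of the whole lemma—proving that equal-sized final sets must actually coincide—has already been discharged by Lemma~\ref{lem:LA} (which rests in turn on the covering/counting argument of Lemma~\ref{lem:regConfig}); what remains is essentially bookkeeping, and the only place demanding care is the triangular-number arithmetic certifying that the size-indexed name blocks tile $\{1,\dots,b(b-1)/2\}$ without overlap.
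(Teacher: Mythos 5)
Your proof is correct and takes essentially the same approach as the paper's: the same case split on which return statement each process executes, the same appeal to Lemma~\ref{lem:LA} combined with injectivity of rank when the final sets have equal size less than $b$, and the same triangular-number arithmetic showing that different set sizes (and the Line~\ref{line:wfr:return2} range) occupy disjoint name blocks. The only cosmetic difference is that you spell out the tiling of $\{1,\dots,b(b-1)/2\}$ into adjacent blocks explicitly, whereas the paper states the endpoint bounds directly.
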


\begin{proof}
Let $\widehat{S_p}$ and $\widehat{S_q}$ be the values of $S_p$ and $S_q$ in Line~\ref{line:wfr:if}. 
Without loss of generality, assume that $|\widehat{S_p}| \leq |\widehat{S_q}|$. 
If $|\widehat{S_p}| \geq b$ and $|\widehat{S_q}| \geq b$, 
the names returned by $p$ and $q$ in Line~\ref{line:wfr:return2} are distinct because $p \neq q$.
If $|\widehat{S_p}| < b$ and $|\widehat{S_q}| \geq b$, 
then, by Line~\ref{line:wfr:return1}, the name returned by $p$ is at most $(b-1)(b-2)/2 + (b-1) = b(b-1)/2$ 
and, by Line~\ref{line:wfr:return2}, the name returned by $q$ is bigger than $b(b-1)/2$. 
If $|\widehat{S_p}|  < b$ and $|\widehat{S_q}| < b$, 
both processes return at Line~\ref{line:wfr:return1}.
First suppose $\ell = |\widehat{S_p}|<|\widehat{S_q}| $.  
Then the name returned by $p$ is at most $(\ell+1)(\ell)/2$ 
and the name returned by $q$ is at least $(\ell+1)(\ell)/2 + 1$. 
If $|\widehat{S_p}|=|\widehat{S_q}|$, by Lemma~\ref{lem:LA}, $\widehat{S_p} = \widehat{S_q}$. 
Therefore $\mathrm{rank}(p, \widehat{S_p}) \neq \mathrm{rank}(q, \widehat{S_q})$. 
Thus, in all cases the names returned by $p$ and $q$ are distinct.
\end{proof}

\begin{observation}\label{obs:obvious}
Set $\{p\}$ is written by $p$ before any other write of any set $V\supseteq\{p\}$.
\end{observation}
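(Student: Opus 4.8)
The plan is to prove a slightly sharper statement from which the observation is immediate: among all write operations performed anywhere in the execution whose written set contains $p$, the \emph{earliest} one is $p$'s own first write, and its value is exactly $\{p\}$. Granting this, any ``other'' write of a set $V\supseteq\{p\}$ is by definition distinct from this earliest write, and is therefore preceded by $p$'s write of $\{p\}$, which is what the observation asserts.

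First I would record two elementary facts about the code in Fig.~\ref{fig:Ren-alg}. Since $S_p$ is initialized to $\{p\}$ and the first action inside the repeat loop (Line~\ref{line:wfr:Write_S}) writes the current value of $S_p$ \emph{before} any scan (Line~\ref{line:wfr:scan}) or union update (Line~\ref{line:wfr:union}) takes place, $p$'s first write has value exactly $\{p\}$. Moreover, $S_p$ starts containing $p$ and is only ever enlarged, because the update at Line~\ref{line:wfr:union} takes a union with the previous $S_p$; hence every write by $p$ writes a set containing $p$, and so $p$'s first write is in particular the earliest write \emph{by $p$} of a set containing $p$. (This monotonicity is consistent with Observation~\ref{obs:simple}, which already guarantees the writer lies in the set it writes.)

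The core step is to show that the globally earliest write of a set containing $p$ is performed by $p$ itself. Let $w^\ast$ be this earliest write, performed by a process $r$, and suppose toward a contradiction that $r\neq p$. Since $S_r$ is initialized to $\{r\}$, which does not contain $p$, process $r$ can only have brought $p$ into $S_r$ through the union at Line~\ref{line:wfr:union}; that is, in some scan $s$ preceding $w^\ast$ it read a register whose returned value contained $p$. That value was deposited by some earlier write $w'$ whose written set contained $p$, and the index ordering on writes and scans defined in Section~\ref{Algorithms} yields $\Index{w'}\le\Index{s}<\Index{w^\ast}$. Thus $w'$ is a strictly earlier write of a set containing $p$, contradicting the minimality of $w^\ast$. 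Hence $r=p$, and combining this with the facts of the previous paragraph, $w^\ast$ is $p$'s first write and has value $\{p\}$, completing the argument.

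The part needing the most care is the contradiction step: one must make precise, using the register-configuration sequence $\Gamma_E$ and the index convention (a scan $s$ returns the configuration $C_{\Index{s}}$, and the value it reads from a register was set by a write of index at most $\Index{s}$), that the occurrence of $p$ in the value $r$ read forces a strictly earlier write of a superset of $\{p\}$. Everything else is routine bookkeeping about the initialization of $S$ to $\{p\}$ and the monotone growth of $S$ under the union update.
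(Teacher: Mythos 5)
Your proof is correct, and it is essentially the argument the paper leaves implicit: the paper states this as an observation with no proof at all. Your minimality/causality argument --- registers start empty, every set a process writes contains its own id and only grows, so the globally earliest write whose value contains $p$ cannot have been copied from anywhere and must be $p$'s own first write, whose value is exactly $\{p\}$ --- is precisely the intended justification, and your handling of the index bookkeeping ($\Index{w'}\le\Index{s}<\Index{w^\ast}$) is sound.
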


\begin{lemma}\label{lem:wfub:namespace-range}
Let $k$ be the number of participating processes during process $p$'s \getName{}.  
Then, any name returned by $p$ is in the range $\{1,\ldots,\frac{k(k+1)}{2}\}$ if $k < b$ 
and in the range $\{1 ,\ldots,n+\frac{b(b-1)}{2}\}$ if $k \geq b$. 
\end{lemma}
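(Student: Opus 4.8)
The plan is to reduce the range claim to one structural fact about the algorithm, namely that the set $S_p$ held by $p$ when it returns has size at most $k$. Once this is in hand, both cases follow from the elementary arithmetic of the two return expressions in Lines~\ref{line:wfr:return1} and~\ref{line:wfr:return2}, so I would prove $|S_p|\le k$ first and then split on $k<b$ versus $k\ge b$.

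To show $|S_p|\le k$, I would argue that every identifier $x\in S_p$ corresponds to a process that has invoked \getName{} before $p$ completes its own, and is therefore counted among the $k$ participants. An identifier other than $\mathit{id}$ enters $S_p$ only through Line~\ref{line:wfr:union}, i.e.\ because one of $p$'s scans read a register holding a set $V\ni x$; that set was written before the scan, and the scan precedes $p$'s completion. By Observation~\ref{obs:obvious}, $x$ must have written $\{x\}$ before any write of a set containing $x$, so $x$ itself performed a write (and hence invoked \getName{}) before $p$'s scan, thus before $p$ completes. Since $S_p$ is initialized to $\{\mathit{id}\}$, never shrinks (Line~\ref{line:wfr:union} only adds elements), and $p$ is itself a participant, we get that $S_p$ is contained in the set of participants during $p$'s \getName{}, whence $|S_p|\le k$.

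For $k<b$ I would note that $|S_p|\le k\le b-1$, so at termination the test $|S_p|\le b-1$ holds and $p$ returns at Line~\ref{line:wfr:return1} the value $\bigl(|S_p|(|S_p|-1)\bigr)/2+\mathrm{rank}(\mathit{id},S_p)$. As $\mathit{id}\in S_p$ gives $1\le\mathrm{rank}(\mathit{id},S_p)\le|S_p|\le k$, the name is at least $1$ and at most $k(k-1)/2+k=k(k+1)/2$, as claimed. For $k\ge b$ I would split on the return line: if $|S_p|\ge b$, Line~\ref{line:wfr:return2} returns $b(b-1)/2+\mathit{id}$, which lies in $\{1,\dots,n+b(b-1)/2\}$ since $1\le\mathit{id}\le n$; if $|S_p|=s\le b-1$, Line~\ref{line:wfr:return1} returns at most $s(s-1)/2+s=s(s+1)/2\le b(b-1)/2$, and at least $1$, which is again within $\{1,\dots,n+b(b-1)/2\}$.

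The only nonroutine step is the bound $|S_p|\le k$, since it is where the semantics of the algorithm enter rather than pure arithmetic; the delicate point is tracing how an identifier propagates into $S_p$ through scans and invoking Observation~\ref{obs:obvious} to conclude that the identifier's own process has already taken a step before $p$ finishes. The two case analyses are then just bookkeeping on the return formulas.
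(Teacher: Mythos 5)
Your proof is correct and follows essentially the same route as the paper: the key step in both is using Observation~\ref{obs:obvious} to conclude that every member of $S_p$ has performed a write (hence has invoked \getName{}) before $p$ returns, giving $|S_p|\le k$, after which the two cases $k<b$ and $k\ge b$ reduce to arithmetic on the return expressions in Lines~\ref{line:wfr:return1} and~\ref{line:wfr:return2}. Your version just spells out the propagation of identifiers into $S_p$ and the bookkeeping in more detail than the paper does.
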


\begin{proof}
By Observation~\ref{obs:obvious}, $\forall q \in S_p$, $q$ performs at least one write before $p$ returns. Thus, $\forall q \in S_p$, $q$ is a participating process. Hence, $|S_p| \leq k$. 
If $k < b$, then $|S_p| < b$. Therefore, process $p$ returns in Line~\ref{line:wfr:return1}, and the name is in the range $\{1,\ldots,\frac{k(k+1)}{2}\}$.
If $k \geq b$, then $p$ returns either in Line~\ref{line:wfr:return1} or in Line~\ref{line:wfr:return2}. Therefore the name is in the range $\{1,\ldots,\frac{b(b-1)}{2}+n\}$.
\end{proof}

In summary, Lemmas \ref{lem:wfub:name-distinctness} and \ref{lem:wfub:namespace-range} imply that 
the  algorithm in Fig.~\ref{fig:Ren-alg} is an $(b-1)$-bounded $(k(k+1)/2)$-adaptive renaming algorithm that uses $b$ registers assuming the availability of the atomic scan operation.

\subsubsection*{Step complexity.}
We now bound the maximum number of steps (scans and writes) that any process can take during its
execution of \getName. 
Lemma \ref{lem:kcomplete} establishes the most important piece of the step complexity of the algorithm in Fig.~\ref{fig:Ren-alg}. 
In this lemma we prove that if there exists a register configuration 
in which there are at least $k$ registers, each of which contains a set of size at least $k$, 
then the number of distinct process names in any subsequent scan is at least $k$. 
We call such a register configuration \emph{\complete{k}}, 
and any set of such registers is a \emph{\fullset{k}}.
The core idea is that after a \complete{k} configuration with \fullset{k} $ \Reg'$, 
every write with set size less than $k$ to a register in $\Reg'$ is performed by a distinct writer. 
It then follows that the union of the sets appearing in $\Reg'$ always will have  size at least $k$. 
For the proof, given set of registers $\Reg'\subseteq \Reg$ and a register configuration $D$, 
we will be interested in those registers in $\Reg'$ that contain a set smaller than $k$,
and in the processes that wrote these small sets to these registers.  
Therefore,  define $\rho_{\Reg'}(D, k) = \{ R \in \Reg' ~\big\vert~ |\val{D}{R}|   < k \}$. 
Let \emph{\writer{j}{R}} denote the process that performs the most recent write to register $R$ 
preceding register configuration $C_{j}$.
For any set of registers $\Reg'$, register configuration $C_j$ and an integer $k$, 
define $W_{\Reg'}(j, k) = \{ \text{\writer{j}{R} }|\text{ } R \in \rho_{\Reg'}(C_{j}, k)\}$. 
Notice that a register configuration $D$ is \complete{k} if there exists a set $\Reg'$ of $k$ registers 
where $\rho_{\Reg'}(D, k) = \emptyset$. 
Furthermore, $\Reg'$ is \fullset{k} at register configuration $D$. 

Lemma \ref{lem:kcomplete} uses a proof structure that is more elaborate than, but reminiscent of, that of Lemma \ref{lem:regConfig}. 

\begin{lemma}\label{lem:kcomplete}
For any execution $E$, let $C_a$ be a \complete{k} register configuration 
where $0 \leq k \leq b-1$ and let $\Reg'$ be a \fullset{k} of $C_a$. 
For any register configuration $C_e$ following $C_a$ in $\Gamma_E$, 
there exists a one-to-one and onto function $g_{C_e}: \rho_{\Reg'}(C_e, k) \rightarrow W_{\Reg'}(e, k)$
satisfying, 
$\forall R \in \rho_{\Reg'}(C_e, k)$, $g_{C_e}(R) \in \val{C_e}{R}$. 
Furthermore, each process in $W_{\Reg'}(e, k)$
performs at least one write in Interval$[a, e]$.  
\end{lemma}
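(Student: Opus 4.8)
The plan is to pick the obvious candidate for the bijection and thereby reduce the entire lemma to a single injectivity claim. I would define $g_{C_e}(R)=\text{\writer{e}{R}}$ for every $R\in\rho_{\Reg'}(C_e,k)$. With this choice, ``onto $W_{\Reg'}(e,k)$'' holds by the very definition of $W_{\Reg'}(e,k)$, and the requirement $g_{C_e}(R)\in\val{C_e}{R}$ is free from Observation~\ref{obs:simple}: the process $\text{\writer{e}{R}}$ wrote the current content of $R$ and so belongs to it. The ``furthermore'' clause is also almost immediate: since $\Reg'$ is a \fullset{k} of $C_a$, every $R\in\Reg'$ has content of size $\ge k$ in $C_a$, so if $R$ is small in $C_e$ its content must have been overwritten after $C_a$, whence $\text{\writer{e}{R}}$ performs a write in Interval$[a,e]$. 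Thus the whole content of the lemma is that $R\mapsto\text{\writer{e}{R}}$ is \emph{injective} on $\rho_{\Reg'}(C_e,k)$, and I would prove this by induction on the index $e$ of the register configurations of $\Gamma_E$ starting at $C_a$, in the style of Lemma~\ref{lem:regConfig}.

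The base case $e=a$ is vacuous because $\rho_{\Reg'}(C_a,k)=\emptyset$. For the inductive step let $w$ be the write carrying $C_{e-1}$ to $C_e$, performed by process $p$ into register $\widehat R$ with value $V$. The two easy cases need no new idea. If $\widehat R\notin\Reg'$, then $\rho_{\Reg'}$ and every relevant last writer are unchanged, so injectivity is inherited. If $\widehat R\in\Reg'$ and $|V|\ge k$, then $\widehat R$ leaves the domain while every other register of $\Reg'$ keeps both its content and its last writer, so injectivity passes down since the domain only shrinks. In both cases the Interval clause survives because Interval$[a,e-1]\subseteq$ Interval$[a,e]$.

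The real work is the case $\widehat R\in\Reg'$ with $|V|<k$: now $\widehat R$ enters the domain with $g_{C_e}(\widehat R)=p$, and since all other last writers are unchanged, injectivity reduces to showing that $p$ is \emph{fresh}, i.e.\ $p\neq\text{\writer{e}{R}}$ for every small register $R\in\rho_{\Reg'}(C_e,k)\setminus\{\widehat R\}$. I would establish this through the subclaim that the scan $s$ immediately preceding $w$ occurs before $C_a$ (or $w$ is $p$'s first write). Suppose instead $\Index{s}\ge a$. The scan returns the view $C_{\Index{s}}$, so after Line~\ref{line:wfr:union} the set $S_p$ contains $\bigcup_{R\in\Reg'}\val{C_{\Index{s}}}{R}$; applying the induction hypothesis at $C_{\Index{s}}$ shows this union has at least $k$ elements, because either some register of $\Reg'$ is large in $C_{\Index{s}}$ and contributes $k$ processes by itself, or all $k$ registers of $\Reg'$ are small and the injectivity from the induction hypothesis supplies $k$ distinct processes, one inside each content. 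Hence $|V|=|S_p|\ge k$, contradicting $|V|<k$. The subclaim then forces all of $p$'s writes before $w$ to precede $C_a$ (writes and scans alternate, so the write before $w$ lies before $s$, hence before $C_a$). Consequently, if $p=\text{\writer{e}{R}}$ for some small $R\in\Reg'$ with $R\neq\widehat R$, that write to $R$ would precede $C_a$ and be the last write to $R$ before $C_e$, so $R$ would be unchanged since $C_a$ and therefore still large in $C_e$, contradicting $R\in\rho_{\Reg'}(C_e,k)$. Thus $p$ is fresh and injectivity is preserved.

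The main obstacle is exactly this last case, and within it the union lower bound $|\bigcup_{R\in\Reg'}\val{C_{\Index{s}}}{R}|\ge k$, which is what converts the \complete{k} hypothesis into a growth guarantee on the sets $S_p$. Its proof must invoke the lemma's own conclusion at the earlier configuration $C_{\Index{s}}$; the apparent circularity is harmless because $\Index{s}<e$, so this is a legitimate appeal to the induction hypothesis. The one place where I would be careful is the difference from Lemma~\ref{lem:regConfig}: there a full set of all $b$ registers forced $|S_p|\ge b$ and hence termination via Line~\ref{line:wfr:until}, whereas here $\Reg'$ contains only $k\le b-1$ registers, so the contradiction must be drawn \emph{directly} from the size bound $|V|\ge k$ rather than from the stopping condition of the repeat loop.
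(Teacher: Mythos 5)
Your proof is correct and follows essentially the same route as the paper's: the same induction over the sequence of register configurations, the same three-case analysis of the write $w$ carrying $C_{\ell-1}$ to $C_\ell$, and the same crucial subclaim that the scan $s$ preceding $w$ occurs before $C_a$ (or $w$ is the first write), established exactly as in the paper by invoking the induction hypothesis at $C_{\Index{s}}$ to force $|V|\geq k$, with the contradiction drawn from the size bound rather than the loop's stopping condition. The only differences are presentational: you define $g_{C_e}$ explicitly as the last-writer map and get ontoness, the membership condition (via Observation~\ref{obs:simple}), and the Interval clause (via the \fullset{k} property of $C_a$) for free, whereas the paper builds the same function and carries these clauses inductively.
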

   
\begin{proof}
Let $C_a,C_{a+1},\ldots$ be the sequence of all register configurations starting at $C_a$.  
We prove the lemma by induction on the indices of this sequence.
The base case $\ell=a$, is trivially true since set $\rho_{\Reg'}(C_a, k)$ is the empty set. 

Suppose that the induction hypothesis is true for $\ell-1 \geq a$. 
Let the write step between $C_{\ell-1}$ and $C_{\ell}$ be the operation, $w$, by process $q$, 
into register $\widehat{R}$ with value $V$.
Let $s$ be the most recent scan operation by $q$ preceding $w$ if it exists. 

Suppose that 
$\widehat{R} \notin {\Reg'}  $.
Then $ \rho_{\Reg'}(C_{\ell-1} , k) =  \rho_{\Reg'}(C_{\ell} , k)$ 
and $W_{\Reg'}(\ell-1, k) = W_{\Reg'}(\ell, k)$, so the induction hypothesis 
holds trivially for $\ell$ by setting $g_{C_{\ell}} = g_{C_{\ell -1}}$.

Suppose that  $\widehat{R} \in \Reg'$ and  $|V| \geq k$.  
Then  $\rho_{\Reg'}(C_{\ell}, k) =  \rho_{\Reg'}(C_{\ell-1}, k) \setminus \{\widehat{R}\}$ 
and $W_{\Reg'}(\ell, k) = W_{\Reg'}(\ell-1 , k) \setminus \{ \writer{\ell-1}{\widehat{R}} \}$.  
So  the hypothesis holds for $\ell$ by setting $g_{C_{\ell}} = g_{C_{\ell -1}}$ for each $R \in \rho_{\Reg'}(C_{\ell}, k)$.

Finally, consider the case $\widehat{R} \in \Reg'$ and $|V| < k$. 
We first show that $s$ happens before $C_a$ or $w$ is the first write by $q$.  
Suppose, for the purpose of contradiction, that 
$s$ happens after $C_a$.  
Then $a \leq \Index{s} \leq \ell-1$. 
For each $R \in \Reg'$, we have  $|\val{C_{\Index{s}}}{R}| < k$ 
since otherwise, by Line~\ref{line:wfr:union}, $|V| \geq k$. 
Thus $|\rho_{\Reg'}(C_{\Index{s}}, k)| = k$. 
By the induction hypothesis, $g_{C_{\Index{s}}}$ is a bijection, so
 $|\rho_{\Reg'}(C_{\Index{s}}, k)| = |W_{\Reg'}(\Index{s}, k)| = k$,
and 
 $\forall R \in \rho_{\Reg'}(C_{\Index{s}}, k)$, $g_{C_{\Index{s}}}(R) \in \val{C_{\Index{s}}}{R}$.   
Therefore,
by Line~\ref{line:wfr:union},  the size of $S_q$, and hence the size of $V$, is at least $k$, which is a contradiction. 

Therefore $s$ happens before $C_a$ or $w$ is the first write by $q$, 
and consequently any write by $q$ before $w$ happens before $C_a$. 
On the other hand, by the induction hypothesis,
 $\forall R \in \rho_{\Reg'}(C_{\ell-1}, k)$, \writer{\ell-1}{R} performs a write during Interval$[a,\ell-1]$ 
implying $q$ is not in $W_{\Reg'}(\ell-1 , k)$.
We have
$\rho_{\Reg'}(C_{\ell}, k) = \rho_{\Reg'}(C_{\ell-1}, k) \cup \{\widehat{R}\}$
and
$W_{\Reg'}(\ell, k) = W_{\Reg'}(\ell-1 , k) \setminus{\writer{\ell-1}{\widehat{R} }} \cup \{q\}$,  
whether or not 
$\widehat{R}$ is in $\rho_{\Reg'}(C_{\ell-1}, k)$. 
Furthermore, $q$ performs a write after $C_a$.
Therefore, the induction hypothesis holds for $\ell$ by defining
$g_{C_{\ell}} = g_{C_{\ell -1}}$ for each $R \in \rho_{\Reg'}(C_{\ell}, k)  \setminus \{\widehat{R}\}$,
and $g_{C_{\ell}}(\widehat{R}) = q$.

\end{proof}

\begin{lemma}\label{lem:secondWrite}
Let $D$ be a \complete{k} register configuration. 
Then for each process $p$ in $\Proc$, 
$p$'s second write after $D$ if it exists, has a value with size at least $k$.
\end{lemma}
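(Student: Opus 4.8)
The plan is to track the scan that $p$ performs between its first and second writes after $D$, to show this scan reads a register configuration strictly later than $D$ in $\Gamma_E$, and then to invoke Lemma~\ref{lem:kcomplete} at that configuration. Since $D=C_a$ is \complete{k}, first fix a \fullset{k} $\Reg'$ of $D$, so $|\Reg'|=k$ and $\rho_{\Reg'}(C_a,k)=\emptyset$. The crucial structural observation is that each iteration of the repeat loop executes a write (Line~\ref{line:wfr:Write_S}), then a scan (Line~\ref{line:wfr:scan}), then the update $S:=\bigcup_{i} r[i]\cup S$ (Line~\ref{line:wfr:union}), and $S_p$ is modified nowhere else. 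Consequently, if $p$'s second write after $D$ exists, its value is exactly the $S_p$ produced by Line~\ref{line:wfr:union} following the scan $s$ that $p$ performs immediately after its \emph{first} write after $D$. (This is precisely why the lemma speaks of the second write and not the first: the first write after $D$ may carry a value computed from a scan that predates $C_a$.)

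Next I would pin down the index of this scan. The first write after $D$ has index greater than $a$, hence produces some register configuration $C_m$ with $m>a$; the scan $s$ occurs after this write, so $\Index{s}=j\ge m>a$, which means $C_j$ strictly follows $C_a$ in $\Gamma_E$, and the view $s$ returns equals $C_j$. This lets me apply Lemma~\ref{lem:kcomplete} to the \complete{k} configuration $C_a$, its \fullset{k} $\Reg'$, and the later configuration $C_j$: there is a one-to-one function $g_{C_j}\colon \rho_{\Reg'}(C_j,k)\to W_{\Reg'}(j,k)$ with $g_{C_j}(R)\in\val{C_j}{R}$ for every $R\in\rho_{\Reg'}(C_j,k)$.

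I would then show $\bigl|\bigcup_{R\in\Reg'}\val{C_j}{R}\bigr|\ge k$ by a short case split. If some $R\in\Reg'$ has $|\val{C_j}{R}|\ge k$, the union already contains $k$ elements from that single register. Otherwise every register of $\Reg'$ is small, so $\rho_{\Reg'}(C_j,k)=\Reg'$ has size $k$; the injectivity of $g_{C_j}$ then yields $k$ distinct processes $g_{C_j}(R)\in\val{C_j}{R}$, all lying in the union, again giving size at least $k$. Since $\Reg'\subseteq\Reg$ and the scan $s$ reads all of $\Reg$, Line~\ref{line:wfr:union} forces $S_p\supseteq\bigcup_{R\in\Reg}\val{C_j}{R}\supseteq\bigcup_{R\in\Reg'}\val{C_j}{R}$, so $|S_p|\ge k$ after $s$; this is the value written by $p$'s second write after $D$, which completes the argument. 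The only delicate point is the bookkeeping that places $C_j$ strictly after $C_a$ so that Lemma~\ref{lem:kcomplete} applies; once that is settled, the case analysis and the appeal to the bijection are routine, since Lemma~\ref{lem:kcomplete} does the heavy lifting.
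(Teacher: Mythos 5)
Your proposal is correct and follows essentially the same route as the paper's proof: identify the scan $s$ between $p$'s first and second writes after $D$, observe that its view equals a register configuration strictly following $D$, and then split on whether some register of the \fullset{k} already holds a large set, invoking Lemma~\ref{lem:kcomplete} (injectivity of $g$) in the complementary case to force at least $k$ distinct processes into the union computed at Line~\ref{line:wfr:union}. The only cosmetic difference is that you case-split over $\Reg'$ while the paper splits over all of $\Reg$; both yield the same conclusion.
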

\begin{proof}
Let $w$ be the second write operation by $p$ after $D$ if it exists. 
Let $s$ be the most recent scan operation by $p$ preceding $w$. 
Since $w$ is the second write by $p$, the value returned by $s$ is equal to a register configuration $D'$ following $D$. 
Let $\Reg'$ be the \fullset{k} of $D$. 
If  $\exists R \in  \Reg$,  $|\val{{D'}}{R} | \geq k$, then
by Line~\ref{line:wfr:union}, the size of $S_p$ at $s$ is at least $k$.
Otherwise, all registers, and hence all registers in $\Reg'$, contain sets of size less than $k$.
Therefore, $|\rho_{\Reg'}(D', k)| = k$. 
So,   by Lemma~\ref{lem:kcomplete}, $|W_{\Reg'}(\Index{D'}, k)| = k$. 
Thus, again by Line~\ref{line:wfr:union}, the size of $S_p$ at $s$ is at least $k$. 
Hence $w$ has a value with size at least $k$.

\end{proof}

\begin{lemma}\label{lem:mWrite}
Let $E$ be an execution whose first operation is a write by $p$ and contains the next $b$ scans by $p$. 
Furthermore, during $E$, every write by $p$ has value $Q$ and no write has value $Q' \subsetneq Q$. 
Then $p$ either terminates or the size of $S_p$ increases.
\end{lemma}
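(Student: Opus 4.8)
The plan is to observe that under the stated hypotheses there is essentially only one moment at which $p$ can still be inside the loop at the end of $E$: since $E$ contains $p$'s next $b$ scans, $p$ does not exit the loop before completing all of them, so the only loop-exit that can occur within $E$ is the termination check following $p$'s $b$-th scan. Consequently I only need to analyze the view $p$ obtains at that $b$-th scan and show that it forces one of the two disjuncts.

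First I would fix notation for $p$'s own steps. In each iteration $p$ performs a write (Line~\ref{line:wfr:Write_S}) to $R[\mathit{pos}]$ followed by a scan (Line~\ref{line:wfr:scan}), incrementing $\mathit{pos}$ cyclically (Line~\ref{line:wfr:increment_pos}). Writing $p$'s writes as $w_1,\dots,w_b$ and its scans as $s_1,\dots,s_b$ in order, $w_1$ is the first operation of $E$, each $w_j$ precedes $s_j$ in $p$'s own order, and the positions touched by $w_1,\dots,w_b$ run through all $b$ registers cyclically; in particular every register has been written by $p$ before $s_b$. Because every write by $p$ in $E$ carries value $Q$, the value of $S_p$ after each update at $s_1,\dots,s_{b-1}$ (Line~\ref{line:wfr:union}) must still equal $Q$, for otherwise the following write $w_{j+1}$ would carry a value strictly larger than $Q$. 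Hence $S_p = Q$ going into the update performed at $s_b$; note also that $Q \neq \emptyset$ since $p \in Q$ by Observation~\ref{obs:simple}.

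The heart of the argument is the dichotomy at $s_b$. If the update $S_p := \bigcup_{i=0}^{b-1} r[i] \cup S_p$ at $s_b$ makes $S_p$ strictly larger than $Q$, then $|S_p|$ has increased during $E$ and the second disjunct holds. Otherwise every value $r[i]$ read by $p$ at $s_b$ satisfies $r[i] \subseteq Q$. Since each register has already been written (by $p$, with value $Q$) before $s_b$, the value $r[i]$ is the value of some write. But write values are either $Q$ or contain an element outside $Q$, because the hypothesis forbids any write whose value is a proper subset of $Q$; as $r[i] \subseteq Q$, the latter is impossible, forcing $r[i] = Q$. Therefore $r[0] = \dots = r[b-1] = Q = S_p$ at $s_b$, the loop-exit condition of Line~\ref{line:wfr:repeat} is met, and $p$ terminates, giving the first disjunct.

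I expect the main obstacle to be handling overwrites by other processes cleanly: a register that $p$ set to $Q$ may be rewritten before $s_b$, so I cannot reason pointwise over the whole interval and must instead pin down the register contents exactly at the instant of $s_b$. The observation that makes this work is that any read of a value not contained in $Q$ would immediately enlarge $S_p$ through Line~\ref{line:wfr:union}; this is precisely what lets the two cases "$S_p$ grows at $s_b$" and "$S_p$ stays $Q$ at $s_b$" partition the possibilities and align with the two disjuncts of the conclusion.
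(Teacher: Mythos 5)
Your proof is correct and follows essentially the same route as the paper's: the $b$ cyclic writes of $Q$ cover all registers before the $b$-th scan, so each scanned value is the value of a write occurring during $E$, and the no-proper-subset hypothesis forces each $r[i]$ to be either $Q$ (all equal $\Rightarrow$ exit condition holds) or not contained in $Q$ ($\Rightarrow$ $S_p$ grows at Line~\ref{line:wfr:union}). Your write-up is in fact slightly more careful than the paper's, since you explicitly verify that $S_p$ still equals $Q$ at the final check and that the scanned values cannot predate $E$.
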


\begin{proof}
By Lines~\ref{line:wfr:Write_S}-\ref{line:wfr:scan}, each scan operation is preceded by a write operation. Hence $E$ contains $b$ writes by $p$. 
Therefore, during $E$, $p$ writes $Q$ to all $b$ registers. 
Let $(V_0, \dots, V_{b-1})$ be the value returned by $p$'s last scan during $E$. 
Because $E$ does not contain any write with value $Q' \subsetneq Q$ either $V_0 = \dots= V_{b-1} = Q$ in which case $p$ terminates or $\exists i$, $0 \leq i \leq b-1$ such that $V_i \not\subseteq Q$. 
In the latter case by Line~\ref{line:wfr:union}, the size of $S_p$ increases.

\end{proof}

\begin{lemma}\label{lem:mkWrites}
Let $D$ be a \complete{k} register configuration where $0 \leq k \leq b-1$. 
Then for each process $p$ in $\Proc$, $p$'s $(bk+2)$-nd write after $D$, if it exists, has a value with size at least $k+1$.
\end{lemma}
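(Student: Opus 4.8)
The plan is to argue by contradiction and push everything through Lemma~\ref{lem:mWrite}, whose hypothesis forbids any write of a proper subset of the repeatedly-written set. Write $p$'s successive writes after $D$ as $w_1,w_2,\dots$ and let $s_i$ be the scan immediately following $w_i$. Suppose $p$'s $(bk+2)$-nd write $w_{bk+2}$ exists but has value of size at most $k$. By Lemma~\ref{lem:secondWrite} the value of $w_2$ has size at least $k$, and since $S_p$ never shrinks, the values of $w_2,w_3,\dots,w_{bk+2}$ all have size at least $k$; together with the assumption on $w_{bk+2}$ this forces $w_2,\dots,w_{bk+2}$ to be one and the same set $Q$ with $|Q|=k$. (The case $k=0$ is immediate from Observation~\ref{obs:simple}, so assume $k\ge 1$; note $p\in Q$.)

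Next I would slice the $bk$ scans $s_2,\dots,s_{bk+1}$ into $k$ consecutive blocks of $b$ scans, and for $j=1,\dots,k$ let $E_j$ be the execution segment starting at the write $w_{(j-1)b+2}$ and containing the next $b$ scans by $p$, namely $s_{(j-1)b+2},\dots,s_{jb+1}$. Each $E_j$ begins with a write by $p$, contains exactly $b$ scans by $p$, and every write by $p$ inside it has value $Q$, so it satisfies every hypothesis of Lemma~\ref{lem:mWrite} except possibly the requirement that no write has value a proper subset of $Q$. Since $w_{bk+2}$ exists, $p$ does not terminate inside the segments, and since $S_p$ stays equal to $Q$ over this whole range, $S_p$ does not increase either. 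Hence Lemma~\ref{lem:mWrite} forces each $E_j$ to contain at least one write whose value is a proper subset of $Q$; as the $E_j$ are pairwise disjoint, this produces at least $k$ such writes.

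Then I would bound the number of such writes from above. By Observation~\ref{obs:simple} any process writing a value $Q'\subsetneq Q$ lies in $Q'\subseteq Q$, so every proper-subset-of-$Q$ write is made by one of the $k$ processes in $Q$. Inside the segments $p$ itself only writes $Q$, so these writes are made by the $k-1$ processes of $Q\setminus\{p\}$ (the only write by $p$ that could be a proper subset of $Q$ is $w_1$, which precedes the first segment). A proper subset of $Q$ has size $<k$, and by Lemma~\ref{lem:secondWrite} each process performs at most one write of size $<k$ after $D$, namely its first. Thus at most $k-1$ distinct processes, each contributing at most one such write, can place proper-subset writes inside the segments, giving at most $k-1$ of them and contradicting the $\ge k$ just established.

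The main obstacle is this final count: it must come out to exactly $k-1$, not $k$, or there is no contradiction. This hinges on three facts fitting together precisely --- Observation~\ref{obs:simple} confining the writers to $Q$, Lemma~\ref{lem:secondWrite} allowing each process at most one small write after $D$, and the fact that $p$'s own potentially-small write $w_1$ falls \emph{before} the first segment and so must be excluded. Choosing the constant $bk+2$ (rather than $bk+1$) is exactly what pays for discarding $w_1$ while still leaving $k$ full blocks of $b$ scans for the $k$ applications of Lemma~\ref{lem:mWrite}.
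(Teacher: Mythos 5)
Your proof is correct and takes essentially the same route as the paper's: the identical decomposition of the $bk$ scans after $w_2$ into $k$ blocks of $b$ scans, the identical count of at most $|Q\setminus\{p\}|=k-1$ writes with value properly contained in $Q$ (via Observation~\ref{obs:simple} and Lemma~\ref{lem:secondWrite}), and the identical use of Lemma~\ref{lem:mWrite} --- the paper applies it positively to one clean segment found by pigeonhole, while you apply its contrapositive to every segment under a contradiction hypothesis, which is the same argument packaged differently. If anything, your framing is slightly tidier in that the contradiction hypothesis makes explicit why all of $p$'s writes inside the segments equal $Q$ (a hypothesis of Lemma~\ref{lem:mWrite} that the paper satisfies only implicitly).
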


\begin{proof}
Let $w$ be the second write by $p$ after $D$. 
Suppose that $p$ writes $Q$ at $w$. 
By Lemma~\ref{lem:secondWrite}, $|Q| \geq k$. 
If $|Q |\geq  k +1 $ or $p$ terminates before writing $bk$ more times, we are done. 
Therefore, suppose that $|Q | = k$ and $p$ performs $bk$ writes after $w$.
Then $|Q \setminus\{p\}| = k-1$. 
By Lemma~\ref{lem:secondWrite}, after $D$, $\forall q \in Q$, 
$q$ writes a value with size smaller than $k$ at most once. 

Let $E$ be the execution whose first operation is $w$ and contains the next $bk$ scan operations by $p$. 
Partition $E$ into disjoint segments, $E = (E_1, \dots, E_k)$,
satisfying $\forall \ell$, $1 \leq \ell \leq k$, the first operation in $E_{\ell}$ is a write operation by $p$ and $E_{\ell}$ contains the next $b$ scans by $p$. 
Notice that $E$ contains exactly $bk$ write operations by $p$ and since $w$ is the first operation of $E$, $p$ performs at least one more write after $E$ ends.    
Since there are at most $|Q \setminus \{p\}| = k-1$ writes after $w$ that have a value $V$ satisfying  
$V \subsetneq Q$, there exists an $\ell$, $1 \leq \ell \leq k$ such that all writes during $E_{\ell}$ 
have a value that is not a proper subset of $Q$. 
Since $p$ does not terminate during $E_{\ell}$, by Lemma~\ref{lem:mWrite}, the size of $S_p$ after $E_{\ell}$ (hence, after $E$) is at least $k+1$. Hence, $p$'s $(bk+2)$-nd write after $D$ has a value with size at least $k+1$.

\end{proof}

\begin{lemma}\label{lem:interrupt}
For any execution $E$ in which $p$ does not terminate, 
let $O$ be the set of all scan operations by $p$ during $E$. 
Let $Z = \{ \writer{\Index{s}}{R}~|~s \in O \text{ and } R \in \Reg\}$.  
Then $|Z| < b $.
\end{lemma}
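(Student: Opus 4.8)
The plan is to show that every writer counted by $Z$ is eventually absorbed into $p$'s local set $S_p$, so that $Z \subseteq S_p$ at the end of $E$, and then to invoke the hypothesis that $p$ does not terminate in order to bound $|S_p|$, and hence $|Z|$, by $b-1$.

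The heart of the argument, and the step I expect to be the main obstacle, is to connect the \emph{most recent writers} recorded in $Z$ with the sets that $p$'s scans actually read. Fix a scan $s \in O$ and a register $R$, and let $z=\writer{\Index{s}}{R}$. By definition of the most recent writer, no write to $R$ occurs between $z$'s write and $C_{\Index{s}}$, so the value of $R$ in $C_{\Index{s}}$ is exactly the set $V$ that $z$ wrote, i.e.\ $\val{C_{\Index{s}}}{R}=V$; and by Observation~\ref{obs:simple}, $z \in V$. Since the view returned by a scan with index $\Index{s}$ equals $C_{\Index{s}}$, the scan $s$ reads $V$ from $R$, whence $z$ lies in one of the sets that $p$ unites into $S_p$ at Line~\ref{line:wfr:union} immediately after $s$. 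If $R$ was never written before $C_{\Index{s}}$ there is simply no such writer, so nothing is contributed to $Z$.

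Next I would use that $S_p$ grows monotonically: the update at Line~\ref{line:wfr:union} only ever unions in new elements, so once $z$ enters $S_p$ just after scan $s$ it stays there for the remainder of $E$. As every member of $Z$ is introduced by some scan in $O$, and each scan in $O$ occurs no later than $p$'s final scan, I conclude $Z \subseteq S_p$, where $S_p$ now denotes the value of $p$'s set immediately after its last scan. If $O=\emptyset$ then $Z=\emptyset$ and the claim is immediate, using $b\geq 2$.

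Finally I would apply the hypothesis. In the step model of the paper, the union at Line~\ref{line:wfr:union} and the evaluation of the repeat-guard at Line~\ref{line:wfr:repeat} are performed within the same step as the preceding scan; hence, immediately after $p$'s last scan in $E$, the guard must have evaluated to false (otherwise $p$ would have returned, contradicting non-termination), which forces $|S_p| < b$. Combining with $Z \subseteq S_p$ yields $|Z| \le |S_p| < b$, as required. The only delicate point is the visibility claim of the second paragraph; once it is in place, the monotone-growth and non-termination steps are routine.
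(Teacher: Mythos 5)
Your proposal is correct and takes essentially the same approach as the paper's proof: the key facts are that $\writer{\Index{s}}{R} \in \val{C_{\Index{s}}}{R}$ (via Observation~\ref{obs:simple} and the definition of the most recent writer), that each such writer is therefore absorbed into $S_p$ at Line~\ref{line:wfr:union}, and that non-termination forces $|S_p| < b$ after every scan, giving $|Z| \leq |S_p| < b$. Your write-up is merely more explicit than the paper's about the monotone growth of $S_p$, the degenerate case $O=\emptyset$, and the step-model point that the loop guard is evaluated in the same step as the scan.
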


\begin{proof}
For any scan $s \in O$ and any register $R$, $\writer{\Index{s}}{R} \in \val{C_{\Index{s}}}{R}$. 
Therefore, by Line~\ref{line:wfr:union}, for any $s \in O$, after $s$, $S_p$ contains $\writer{\Index{s}}{R}$.  
Since $p$ does not terminate after $s$, at $s$, $|S_p| < b$. 
Hence $|Z| < b$.

\end{proof}

\begin{lemma}\label{lem:kComplete}
Let $D$ be a \complete{k} register configuration where $0 \leq k \leq b-1$. 
Then for each process $p$ in $\Proc$, $p$ makes at most $bk + 1  + b (\frac{(b-1)(bk+1)}{b-k} + 1)$ write operations before it terminates or a \complete{(k+1)} register configuration is achieved.
\end{lemma}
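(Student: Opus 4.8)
The plan is to split $p$'s writes after $D$ into two phases and bound each separately. For the first phase I would invoke Lemma~\ref{lem:mkWrites}: $p$'s $(bk+2)$-nd write after $D$ already has a value of size at least $k+1$, and since $S_p$ grows only by union it never shrinks, so every write by $p$ from the $(bk+2)$-nd onward carries a set of size $\geq k+1$. Hence at most the first $bk+1$ writes after $D$ can carry a \emph{small} set (size $\leq k$); I call the writes from the $(bk+2)$-nd on the \emph{second phase}, and in it every write by $p$ makes the register it touches \emph{large} (its content has size $\geq k+1$).

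For the second phase I would run a round-and-disruption counting argument. Assume $p$ neither terminates nor produces a \complete{(k+1)} configuration, since otherwise we have already stopped. Group $p$'s second-phase writes into \emph{rounds} of $b$ consecutive writes; because $p$ advances $\mathit{pos}$ cyclically and each write is immediately followed by a scan, each round writes a large set to all $b$ registers. Consider the scan $s_r$ that follows the last write of round $r$. The register configuration seen by $s_r$ is not \complete{(k+1)}, so at most $k$ registers are large there and at least $b-k$ are small. For each such small register $R$, the content of $R$ at $s_r$ was produced by a write that is strictly later than $p$'s large round-$r$ write to $R$ and has size $\leq k$; I call this a \emph{disruption}, with author $\writer{\Index{s_r}}{R}$, a process other than $p$.

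The two facts I would then establish are that disruptions are scarce and that each round consumes many of them. For scarcity, every disruption counted above has author $\writer{\Index{s}}{R}$ for a scan $s$ of $p$, so all disrupting processes lie in the set $Z$ of Lemma~\ref{lem:interrupt}; since $p$ does not terminate, $|Z|<b$, giving at most $b-1$ distinct disruptors. Applying Lemma~\ref{lem:mkWrites} to each disruptor $q$ shows $q$'s $(bk+2)$-nd write after $D$ is already large, so $q$ authors at most $bk+1$ small writes after $D$; hence the total number of disruptions is at most $(b-1)(bk+1)$. For consumption, I would verify that the $\geq b-k$ disruptions charged across distinct rounds are pairwise distinct write events: within one round they hit distinct registers, and for a fixed register the charged write lies strictly between $p$'s consecutive large writes to that register in rounds $r$ and $r+1$, which are separated by the intervening scan $s_r$. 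Consequently the number of completed rounds is at most $(b-1)(bk+1)/(b-k)$, so the second phase contributes at most $b\big((b-1)(bk+1)/(b-k)+1\big)$ writes, and adding the $bk+1$ first-phase writes yields the claimed bound.

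The main obstacle I anticipate is the bookkeeping in the consumption step: making the charging of disruptions to rounds injective, i.e.\ confirming that the disruption identified at $s_r$ for a register $R$ genuinely postdates $p$'s round-$r$ write to $R$ and is never reused at a later scan $s_{r'}$. This hinges on the write-then-scan structure of the loop guaranteeing that $p$ overwrites $R$ with a fresh large set in every round, which resets the most-recent writer between the scans of consecutive rounds. The remainder is the routine arithmetic of combining $|Z|\le b-1$, the per-process cap of $bk+1$, and the per-round demand of $b-k$.
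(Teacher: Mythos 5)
Your proposal is correct and follows essentially the same route as the paper's proof: the same two-phase split via Lemma~\ref{lem:mkWrites} (at most $bk+1$ small writes, then all of $p$'s writes are large), the same partition of the second phase into blocks of $b$ write--scan pairs, and the same counting budget of $(b-1)(bk+1)$ small writes obtained by combining Lemma~\ref{lem:interrupt} ($|Z|\le b-1$) with the per-process cap of $bk+1$ from Lemma~\ref{lem:mkWrites}. The only difference is how the count is closed: the paper pigeonholes the small writes falling inside the disjoint segments (so distinctness of the counted writes is automatic), whereas you charge the $\ge b-k$ disruptions visible at each round's final scan and must supply the injectivity argument you flag as the main obstacle --- that argument does go through, since each charged write lies strictly between $p$'s large write to the register in that round and the round's final scan, and these intervals are disjoint across rounds.
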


\begin{proof}
By Lemma~\ref{lem:interrupt}, $p$'s $(bk+2)$-nd write, say $w$, after $D$ has a value with size at least $k+1$.
Let $E$ be the execution whose first operation is $w$ and contains the next $b (\frac{(b-1)(bk+1)}{b-k} + 1)$ scan operations by $p$. 
Partition $E$ into segments, $E = (E_1, \dots, E_{\frac{(b-1)(bk+1)}{b-k} + 1})$,
satisfying $\forall \ell$, $1 \leq \ell \leq \frac{(b-1)(bk+1)}{b-k} + 1$, the first operation in $E_{\ell}$ is a write operation by $p$ and $E_{\ell}$ contains the next $b$ scans by $p$. 
Let $O$ be the set of all scan operations by $p$ during $E$. 
Let $Z = \{ \writer{\Index{s}}{R}~|~s \in O \text{ and } R \in \Reg\}$.
Let $U$ be the set of all write operations by processes in $Z$ during $E$ such that $\forall u \in U$, the value of $u$ has a size smaller than or equal to $k$. 
By Lemmas~\ref{lem:mkWrites} and \ref{lem:interrupt}, $|U| \leq (bk+1)|Z| \leq (b-1)(bk + 1)$. 
Let $U_{\ell} = \{u ~|~ u \in U \text{ and } u \text{ happens during }E_{\ell}\}$. 

By the pigeon whole principle, there exists an $\ell$ such that $|U_{\ell}| < b-k$.
Let $s_{\ell}$ be $p$'s last scan during $E_{\ell}$.  
Since during $E_{\ell}$, $p$ writes a value with size at least $k+1$ to all $b$ registers 
and the number of writes with value smaller than $k+1$ and scanned by $p$ (i.e $|U_{\ell}|$), is less than $b-k$, $s_{\ell}$ returns a view in which at least $k+1$ registers have size at least $k+1$. 
Hence, $C_{\Index{s_{\ell}}}$ is \complete{(k+1)}. 
 
\end{proof}

\begin{lemma}\label{lem:allWrites}
No process writes more than $3b^4 \ln{b} $ times.
\end{lemma}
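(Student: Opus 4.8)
The plan is to bound the total number of writes a single process $p$ performs before it terminates by walking up a chain of increasingly complete register configurations and summing, phase by phase, the per-phase write bound supplied by Lemma~\ref{lem:kComplete}. Concretely, I would let $D_k$ be the first register configuration in the execution that is \complete{k}, for $k = 0,1,2,\dots$, and charge every write $p$ performs after $D_k$ but before $D_{k+1}$ (or before $p$ terminates, whichever comes first) to ``phase $k$''. Since \complete{k} implies \complete{(k-1)} (if at least $k$ registers have size at least $k\ge k-1$, then at least $k-1$ registers have size at least $k-1$), the thresholds are ordered, $D_0 \le D_1 \le \cdots$, so this charging is well defined and accounts for every write of $p$. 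Lemma~\ref{lem:kComplete} then says phase $k$ costs $p$ at most $N_k := bk + 1 + b\bigl(\frac{(b-1)(bk+1)}{b-k} + 1\bigr)$ writes, so the total is at most $\sum_k N_k$ once I fix the range of $k$.

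Two observations pin down that range and make it finite. First, the initial register configuration is \complete{0} trivially (the defining condition is vacuous for $k=0$), so phase $0$ begins immediately and $p$'s first write already lies in a phase. Second — and this is the key structural point — the loop guard forces $|S| < b$ at the top of every iteration: it is $1$ initially, and the repeat–loop is only re-entered after the termination test fails, i.e.\ when $|S| < b$. Hence every value written in Line~\ref{line:wfr:Write_S} has size strictly less than $b$, so no register ever holds a set of size $\ge b$, and therefore \emph{no \complete{b} configuration is ever reachable}. Consequently the largest $k$ at which a \complete{k} configuration can occur before $p$ terminates is at most $b-1$; applying Lemma~\ref{lem:kComplete} in phase $b-1$ (its hypothesis permits $k = b-1$) bounds the writes until either termination or a \complete{b} configuration, and since the latter cannot occur, it bounds the writes until $p$ terminates. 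So it suffices to sum $N_0, \dots, N_{b-1}$.

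It then remains to show $\sum_{k=0}^{b-1} N_k \le 3b^4\ln b$. The dominant contribution is $\sum_{k=0}^{b-1}\frac{b(b-1)(bk+1)}{b-k}$; using the crude bound $bk+1 \le b^2$ and the substitution $j = b-k$ converts it into $b^3(b-1)\sum_{j=1}^{b}\frac1j = b^3(b-1)\,H_b$, where $H_b$ is the $b$-th harmonic number. Bounding $H_b \le 1 + \ln b$ gives a leading term of at most $b^4(1+\ln b)$, while the residual part of $\sum_k N_k$ (the contributions $bk+1+b$) is only $O(b^3)$, so the whole sum is absorbed into $3b^4\ln b$. The main obstacle in the argument is the recognition that the induction on completeness must stop at level $b-1$ rather than $b$ — which hinges entirely on the observation that every written set has size below $b$ — together with the harmonic identity $\sum_{k}1/(b-k)=H_b=\Theta(\ln b)$, which is precisely what turns the naive $\Theta(b^5)$ estimate into the claimed $\Theta(b^4\ln b)$ bound; the remaining summation is routine.
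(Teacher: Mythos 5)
Your argument is essentially the paper's own proof, made a little more explicit: the paper likewise invokes Lemma~\ref{lem:kComplete} phase by phase, notes that the initial configuration is \complete{0} and that a \complete{b} configuration cannot exist, and bounds $\sum_{k=0}^{b-1}\bigl(bk+1+b(\frac{(b-1)(bk+1)}{b-k}+1)\bigr)$ by $3b^4\ln b$ via the harmonic sum. Your decomposition through the first \complete{k} configurations $D_k$ (well ordered because \complete{k} implies \complete{(k-1)}) is the same charging argument, and you in fact add a point the paper leaves unjustified: the reason a \complete{b} configuration can never occur, namely that the loop guard ensures every written set has size strictly below $b$.

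The only step to repair is the final absorption, which as stated fails at $b=2$. After weakening $bk+1\le b^2$ and $b-1\le b$, your leading term is $b^4(1+\ln b)\approx 27.1$ at $b=2$, and the residual $\sum_{k=0}^{b-1}(bk+1+b)=8$, for a total of about $35.1$, which exceeds $3\cdot 2^4\ln 2\approx 33.3$. The lemma is comfortably true there (the exact sum is $N_0+N_1=4+11=15$), so you only need to be less lossy: retain the factor $b^3(b-1)$ rather than $b^4$, or do as the paper does and keep the $-bk$ term inside the harmonic sum instead of replacing $bk+1$ by $b^2$; either choice makes the inequality hold for all $b\ge 2$. More generally, an asymptotic remark that the residual is $O(b^3)$ and hence ``absorbed'' cannot by itself establish a concrete inequality claimed for every $b\ge 2$; the constants must be tracked, and with your choices they just barely fail at the smallest case.
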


\begin{proof}
	
By Lemma \ref{lem:kComplete}, a process can write at most
$bk + 1  + b (\frac{(b-1)(bk+1)}{b-k} + 1)$ times between a \complete{k} and a \complete{(k+1)} configuration.	
The initial configuration is \complete{0} and an \complete{b} configuration cannot exist.
Therefore a process can write a most 
\begin{eqnarray*}
 & &\sum_{k=0}^{b-1} {\big(bk + 1  + b (\frac{(b-1)(bk+1)}{b-k} + 1)\big)} \\ 
 & = & \sum_{k=0}^{b-1} {(bk + 1  + b )} + b(b-1) \sum_{k=0}^{b-1} {(\frac{bk+1}{b-k})}\\
  & = &\sum_{k=0}^{b-1} {(bk + 1  + b )} + b(b-1) \sum_{k=1}^{b} {\big(\frac{b^2 - bk + 1}{k}\big)} \\
 & < & \frac{b^3}{2} +  b -1 + b^2  + b^2 \sum_{k=1}^{b} {(\frac{b^2}{k} - b + \frac{1}{k})} < 3b^4 \ln{b} 
\end{eqnarray*}
times before it terminates.

\end{proof}

\subsection{$(b-1)$-Bounded $(k(k+1)/2)$-Adaptive Renaming Using Registers}\label{sec:newScan}
 We replace the atomic scan in Fig.~\ref{fig:Ren-alg} with a new function, \newScan{}, and the \getName{} algorithm also changes accordingly. 
The revised renaming algorithm is shown in Fig.~\ref{fig:Ren-Sacn-alg}.  
In the \getName{} algorithm, processes augment the values they write to each register with their ids and sequence numbers in order to guarantee the uniqueness of the value of each write. 
This prevents the ABA problem. 
Each register $R \in \Reg$ stores an ordered triple $(\mathit{set}, \mathit{id}, \mathit{seqNumber})$.

During a \newScan{} operation by process $p$, $p$ performs a \collect{$\Reg$} in Line~\ref{line:scan:collect}, by reading $R[0]$ through $R[b-1]$ consecutively and returns a \emph{collect}. After each \collect{$\Reg$}, $p$ updates its set $S$ from this collect. It repeatedly gets a collect until either the size of set $S$ becomes at least $b$ or it obtains two identical consecutive collects and returns this collect. 
If \newScan{} terminates at Line~\ref{line:scan:success}, 
then the returned collect is equivalent to the returned value of a linearizable implementation of a scan \cite{GLS1992a,IP1992a,IL1993a,DS1997a,DW1999a}. 
Hence all the proofs in Section~\ref{WaitFree:Scan} hold when \newScan{} terminates at Line~\ref{line:scan:success}. 
So to establish the correctness, we need to prove that when a process $p$ returns in Line~\ref{line:scan:abort}, in fact more than $b - 1$ processes are participating, hence the name returned by $p$'s \getName{} is valid. 
This is shown in Lemma~\ref{lem:abort}.  
In Lemma~\ref{lem:allWrites}, we showed that the number of writes by each process is bounded. 
Since the sequence number $\mathit{seqNumber}$, cannot get larger than the number of write operations by each process, the size of each register is also bounded. 
Therefore, after Lemma \ref{lem:abort} it will remain to prove that the \getName{} algorithm in Fig.~\ref{fig:Ren-Sacn-alg} is wait-free. 
This will be established in Lemmas~\ref{lem:3mRead} through \ref{lem:allScans}, 
by bounding the number of steps of each \newScan{} operation. 

For any read operation $o$ of register $R$, define $\writeOp{o}$ to be the most recent write operation to $R$ preceding $o$ if it exists and $\bot$ otherwise. For any write operation $w$, let $\mathrm{performer}(w)$ denote the process that performs $w$. 
For any set of write operations $W$, let $\writers{W} = \{\mathrm{performer}(w)~|~ w \in W\}$.

\begin{lemma}\label{lem:abort}
Let $k$ be the number of participating processes during process $p$'s \getName{}.
If a \newScan{} operation by $p$ returns in Line~\ref{line:scan:abort}, then $k \geq b$.
\end{lemma}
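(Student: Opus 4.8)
The plan is to reduce the claim to the single inequality $|S_p| \geq b$ at the abort together with $|S_p| \leq k$, exactly mirroring the namespace argument of Lemma~\ref{lem:wfub:namespace-range} but re-derived for the \newScan{}-based algorithm. First I would observe that reaching Line~\ref{line:scan:abort} happens precisely because the loop guard $|S_p| \geq b$ became true; the only other way out of \newScan{} is Line~\ref{line:scan:success}, the ``two identical consecutive collects'' case. Hence at the moment $p$ returns from this \newScan{}, its local set satisfies $|S_p| \geq b$. It then suffices to show that every process whose id lies in $S_p$ has started a \getName{} before $p$ completes its own \getName{}, since for one-shot renaming $k$ counts exactly those processes, giving $k \geq |S_p| \geq b$.

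The core step is the propagation argument. The set $S_p$ is seeded with $\{p\}$ and thereafter only grows by unioning in the set-components that $p$ reads during its collects, so any $q \neq p$ in $S_p$ had its id read by $p$ out of some register's set-component. By Observation~\ref{obs:obvious}, the first set ever written that contains $q$ is $\{q\}$, written by $q$ itself (Observation~\ref{obs:simple} guarantees a writer always includes its own id, so no other process could introduce $q$'s id first). Consequently $q$ performed that write before the set-component containing $q$ could appear in any register, hence before $p$ read it. Since every write is issued from inside a \getName{} call, $q$ has started a \getName{} no later than this read, which lies strictly inside $p$'s not-yet-completed \getName{}. Thus each of the at least $b$ elements of $S_p$ has started a \getName{} before $p$ finishes, and the counting argument closes the proof.

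The main obstacle I anticipate is justifying that Observations~\ref{obs:simple} and~\ref{obs:obvious}, originally stated for the atomic-scan algorithm of Fig.~\ref{fig:Ren-alg}, still govern id-propagation in the \newScan{}-based algorithm of Fig.~\ref{fig:Ren-Sacn-alg}, where each register stores a triple $(\mathit{set},\mathit{id},\mathit{seqNumber})$ rather than a bare set. I would confirm that the set-building logic is unchanged: each process seeds $S$ with its own id and only ever forms unions of observed set-components, and its first write carries value $\{p\}$. With ``value'' read as the set-component of the triple, the two observations transfer verbatim, and the transitive fact that $q$'s id can enter a written set only after $q$ first writes $\{q\}$ remains valid. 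This is the only place where the change from atomic scan to \newScan{} could in principle interfere, and ruling out that interference is the crux of the argument.
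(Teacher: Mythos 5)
Your proposal is correct and follows essentially the same route as the paper's proof: the abort condition gives $|S_p| \geq b$, and Observation~\ref{obs:obvious} (ids propagate into registers only after their owners have written, hence invoked \getName{}) yields $k \geq |S_p| \geq b$. The paper simply cites Observation~\ref{obs:obvious} directly, whereas you unpack the id-propagation argument and verify that the observation transfers to the triple-valued registers of Fig.~\ref{fig:Ren-Sacn-alg} --- a detail the paper leaves implicit.
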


\begin{proof}
Let $\widehat{S_p}$ be the value of $S_p$ when $p$'s \getName{} returns. 
Since $p$'s \newScan{} operation returns in Line~\ref{line:scan:abort}, $|\widehat{S_p}| \geq b$. 
By Observation~\ref{obs:obvious}, $\forall q \in \widehat{S_p}$, $q$ 
performs at least one write before $p$ returns. Thus, $\forall q \in \widehat{S_p}$, $q$ has invoked a \getName{} before $p$ returns. Therefore, $k \geq |\widehat{S_p}| \geq b$. 
\end{proof}

\begin{figure}
  \textbf{shared:}
			$R[0\ldots b-1]$ is an array of multi-writer multi-reader registers, each register is initialized to $(\emptyset, 0, 0)$;\\
	\textbf{local:}		
		 $r[0,\dots,b-1]$; $\mathit{pos} \in \{0, \dots, b-1\}$ initialized to 0; $\mathit{seqNumber}$ is a non-negative integer initialized to $0$; $S$ is initialized to $\{\mathit{id}\}$; $\mathit{largeSet}$ is a boolean; \\
		\setcounter{AlgoLine}{0}
		\begin{algorithm}[H]\caption{getName()}
		\Repeat{$(|S| \geq b) \vee (r[0].\mathit{set}=r[1].\mathit{set}=\dots=r[b-1].\mathit{set}=S)$}{
			$\mathit{seqNumber} = \mathit{seqNumber} + 1$\\
      $R[\mathit{pos}].\text{write}(S, \mathit{id}, \mathit{seqNumber})$\\
      ($\mathit{largeSet}$, $r[0,\dots,b-1]$) $:= \Reg.\newScan{S}$\\
      \If{$\mathit{largeSet}$}{\Return{$b(b-1)/2 + \mathit{id}$}\label{line:largeReturn}}
			$S := \bigcup_{i=0}^{b-1}{r[i].\mathit{set}}$\\
      $\mathit{pos} := (\mathit{pos} + 1)$ \mod $b$\\
    }
    \uIf{$|S| \leq b-1$}{
      \Return{$(|S|(|S|-1))/2  + \mathrm{rank}(\mathit{id},S)$}\\ 
    }
    \Else{
      \Return{$b(b-1)/2 + \mathit{id}$}\\
    }
		\end{algorithm}
		\textbf{local:}		
		$a[0\ldots b-1]$; $a'[0\ldots b-1]$ each element is initialized to $(\emptyset, 0, 0)$; 
		\begin{algorithm}[H]\caption{newScan($S$)}
		\Repeat{$a = a'$\label{line:scan:repeat}}{
      $a := a'$\\
			$a' := $\collect{$\Reg$}\label{line:scan:collect}\\
			$S := \bigcup_{i=0}^{b-1}{a'[i].\mathit{set}} \cup S$\label{line:scan:union}\\
			\If(\label{line:scan:terminate_b}){$|S| > b-1$ \label{line:scan:if-abort}}{
			\Return{(\True, $a'$)}\label{line:scan:abort}}
			}
			\Return{(\False, $a'$)}\label{line:scan:success}\\		
		\end{algorithm}
		\caption{$(b-1)$-Bounded $(k(k+1)/2)$-Adaptive Renaming Using Registers}\label{fig:Ren-Sacn-alg}
\end{figure}

\begin{lemma}\label{lem:3mRead}
Let $E$ be an execution such that any step by process $p$ during $E$ is part of a single \newScan{} operation. 
If $E$ contains at least $3b$ reads by $p$ and does not contain any write operations, 
then $p$'s \newScan{} terminates during $E$. 
\end{lemma}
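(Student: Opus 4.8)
The plan is to analyze what happens during a \newScan{} operation by $p$ that performs many reads but observes no writes at all. The key observation is that a \newScan{} terminates at Line~\ref{line:scan:repeat} exactly when two consecutive \collect{$\Reg$} operations return identical results (i.e., $a = a'$). Each \collect{$\Reg$} reads all $b$ registers consecutively. So a natural first step is to count: if $E$ contains at least $3b$ reads by $p$, then $p$ completes at least $\lfloor 3b / b \rfloor = 3$ full collects during $E$ (the $3b$ bound gives slack to absorb a partial collect in progress at the start of $E$ and any non-read local steps, ensuring at least two complete, back-to-back collects actually occur).

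The heart of the argument is that if no write happens during $E$, then any two complete \collect{$\Reg$} operations by $p$ during $E$ must return identical views. First I would establish this: since a collect is just a sequence of $b$ reads of distinct registers, and by hypothesis \emph{no} write operation occurs anywhere during $E$, the value $\val{}{R}$ of each register $R$ is unchanged throughout $E$. Hence each read of a fixed register $R$ during $E$ returns the same value, so every \collect{$\Reg$} during $E$ returns the same array. In particular, the first two \emph{complete} consecutive collects $a$ and $a'$ that occur during $E$ satisfy $a = a'$, which is precisely the termination condition at Line~\ref{line:scan:repeat}. (I would also note that $p$ cannot exit early via Line~\ref{line:scan:abort}, but even if it did, that too constitutes termination during $E$, so this case only helps.)

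The remaining care is to make the counting rigorous: I would argue that with $3b$ reads, $p$ finishes its (possibly partial) first collect within the first $b-1$ reads, then completes a full collect yielding $a'$, copies it into $a$, and completes a second full collect yielding a new $a'$ equal to the old one, triggering the loop exit. Charging at most $b-1$ reads to finishing any in-progress collect and $b$ reads to each of the two subsequent complete collects, $3b$ reads suffice, and $p$ evaluates the loop condition $a = a'$ as true and terminates.

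The main obstacle I anticipate is bookkeeping the boundary effects at the start of $E$: the execution $E$ may begin in the \emph{middle} of a collect or at an arbitrary point within the \newScan{} loop, so I cannot assume the first read of $E$ is the first read of a collect, nor that the local variables $a, a'$ start in their initial state. The $3b$ (rather than $2b$) bound is exactly what absorbs this partial-collect slack, and the crux is to verify that two genuinely \emph{complete} and \emph{consecutive} collects are both contained in $E$ so that their equality (guaranteed by the no-write hypothesis) actually drives the loop condition at Line~\ref{line:scan:repeat} rather than merely holding between two collects separated by some intervening local work.
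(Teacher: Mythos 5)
Your proof is correct and follows essentially the same approach as the paper's (much terser) proof: since no writes occur in $E$, registers are frozen, so among any $3b$ reads by $p$ there must be two complete, consecutive, identical collects, forcing the loop at Line~\ref{line:scan:repeat} to exit. Your explicit accounting of the boundary effects (at most $b-1$ reads to finish an in-progress collect, then $b$ reads for each of two complete collects) is exactly the counting the paper leaves implicit, and your remark that an exit via Line~\ref{line:scan:abort} also constitutes termination handles the one case the paper glosses over.
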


\begin{proof}
Since $E$ contains no write operation, every $3b$ reads by $p$ must contain two complete identical collects. 
Hence, $p$ must terminate due to Line~\ref{line:scan:repeat}.

\end{proof}

\begin{lemma}\label{lem:mWriter}
Let $E$ be an execution such that any step by process $p$ during $E$ is part of a single \newScan{} operation, $s$.  
Let $O$ be the set of all reads by $p$ during $E$ and $W = \{ \writeOp{o} ~|~ o \in O \} \backslash \{ \bot \}$. 
If $|\writers{W}| \geq b$, then $s$ contains at most $2b$ read operations after $E$ ends. 
\end{lemma}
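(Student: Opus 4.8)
The plan is to trace how the performers of the write operations in $W$ accumulate into $p$'s local set $S$ during its \newScan{} operation $s$, and to show that once all of them are present, $p$ is forced to terminate at Line~\ref{line:scan:abort}. The mechanism is the combination of Observation~\ref{obs:simple} with the update at Line~\ref{line:scan:union}. For any read $o \in O$ with $\writeOp{o} \neq \bot$, the value returned by $o$ is exactly the triple written by $\writeOp{o}$, whose set component, by Observation~\ref{obs:simple}, contains $\mathrm{performer}(\writeOp{o})$. Each read $o$ lies inside some \collect{$\Reg$} executed at Line~\ref{line:scan:collect}, and when that collect completes and the union at Line~\ref{line:scan:union} runs, the value read by $o$ is one of the $a'[i].\mathit{set}$ that is folded into $S$; hence $\mathrm{performer}(\writeOp{o})$ enters $S$ at that point. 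Reads with $\writeOp{o} = \bot$ return the empty set and contribute nothing.

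First I would argue that once $p$ has completed every collect containing a read of $O$ and executed the corresponding union, $S \supseteq \writers{W}$. Since $|\writers{W}| \geq b$ by hypothesis, this forces $|S| \geq b > b-1$, so the test at Line~\ref{line:scan:if-abort} succeeds and $p$ returns at Line~\ref{line:scan:abort}, performing no further reads. It therefore suffices to bound the number of reads after $E$ up to the first union that achieves $S \supseteq \writers{W}$.

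The step I expect to be the main obstacle is this tail bound, because the collect in progress when $E$ ends may contain reads of $O$ whose performers have not yet been incorporated into $S$: Line~\ref{line:scan:union} for that collect has not executed. To handle it I would observe that $p$ needs only to finish the collect currently in progress, which costs at most $b$ reads; the union at Line~\ref{line:scan:union} immediately following that collect processes all of its $a'[i].\mathit{set}$ values, and hence folds in the performers of every read of $O$ not yet in $S$ (all reads of $O$ occur during $E$, so they belong to collects no later than this one). After that union $S \supseteq \writers{W}$, so $|S| \geq b$ and $p$ halts at Line~\ref{line:scan:abort}; allowing comfortable slack for the boundary cases where $E$ ends exactly between collects or partway through the union, this is well within the claimed $2b$ reads after $E$. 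The delicate point is confirming that no performer counted in $\writers{W}$ is ever ``missed'' before termination, which cannot happen since $p$ leaves $s$ only by returning, and it returns precisely when $|S| \geq b$.
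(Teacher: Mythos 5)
Your proof is correct and takes essentially the same route as the paper's: performers of the writes read during $E$ are folded into $S_p$ by the union at Line~\ref{line:scan:union}, $S_p$ grows monotonically within the \newScan{}, and once the first collect completing after $E$ has been processed, $|S_p| \geq |\writers{W}| \geq b$ forces a return at Line~\ref{line:scan:abort}. The only difference is presentational: you explicitly finish the single in-progress collect (at most $b$ reads) before the decisive union, whereas the paper notes that any $2b$ reads after $E$ must contain a complete collect; both arguments give the claimed $2b$ bound (your slight overstatement that $s$ returns \emph{only} when $|S| \geq b$ is harmless, since an earlier return at Line~\ref{line:scan:success} only strengthens the conclusion).
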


\begin{proof}
If $p$ performs fewer than $2b$ read operations after $E$ ends, we are done.
Let $E'$ be an execution which starts after $E$ ends and contains $2b$ reads by $p$. 
Since every $2b$ reads by $p$ must contain a complete collect, after $p$'s complete collect during $E'$, by Line~\ref{line:scan:union}, $S_p$ includes all processes in $\writers{W}$. 
Hence, after $p$'s complete collect during $E'$, $|S_p| \geq |\writers{W}| \geq b$. 
Therefore, by Line~\ref{line:scan:terminate_b}, $s$ must terminate.
\end{proof}

\begin{lemma}\label{lem:allScans}
No \newScan{} operation contains more than $10b^6\ln{b}$ reads.
\end{lemma}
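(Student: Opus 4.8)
The plan is to prove the contrapositive: I will show that a \newScan{} operation $s$ by a process $p$ cannot run for long without satisfying one of its two termination conditions. Since each iteration of the repeat loop performs exactly one \collect{$\Reg$}, which is $b$ reads, it suffices to bound the number of iterations (collects) that $s$ can execute by roughly $3b^5\ln b$; multiplying by $b$ then yields at most about $3b^6\ln b < 10b^6\ln b$ reads.

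First I would record which processes $p$ can ``see''. Every non-$\bot$ value $p$ reads during $s$ is, by definition of \writeOp{} and $\writers{W}$, written by some process in $\writers{W}$; and by Observation~\ref{obs:obvious} the writer of a value always appears in the set component of that value. Hence, once $p$ completes a collect that read a value written by $q$, Line~\ref{line:scan:union} inserts $q$ into $S_p$. Consequently, as long as $s$ has not returned at Line~\ref{line:scan:abort}, we have $\writers{W}\subseteq S_p$ with $|S_p|\le b-1$, so all of $p$'s observations come from at most $b-1$ distinct writers.

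The heart of the argument is a counting step. Because each write carries a unique $(\mathit{id},\mathit{seqNumber})$ stamp, all write values are globally distinct and no register ever reverts to a value it held before (this is the ABA-freedom noted in the text). Fixing a register $R[i]$ and looking at the values $p$ reads from it over successive collects, each change is to a value $p$ has never previously read from $R[i]$, and that value is a distinct write performed by a process in $\writers{W}$. Thus the number of changes $p$ observes in $R[i]$ is at most the number of writes to $R[i]$ by processes in $\writers{W}$, and summing over all $b$ registers, the total number of observed changes is at most the total number of writes ever performed by the at most $b-1$ processes of $\writers{W}$. By Lemma~\ref{lem:allWrites} this is at most $(b-1)\cdot 3b^4\ln b \le 3b^5\ln b$. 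Now, if $s$ never returns at Line~\ref{line:scan:repeat}, every collect differs from its predecessor, and consecutive collects can differ only if $p$ observed a register change between them; hence the number of iterations is at most one more than the total number of observed changes, namely $3b^5\ln b+1$, giving at most $b(3b^5\ln b+1)$ reads. The last collect, in which $|S_p|$ may cross the threshold $b$ so that the clean ``fewer than $b$ writers'' accounting no longer applies, is absorbed by Lemma~\ref{lem:mWriter}: once $p$ has seen writes from $b$ distinct processes, $s$ finishes within $2b$ further reads. Adding this $O(b)$ tail keeps the total below $10b^6\ln b$.

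I expect the counting step of the third paragraph to be the main obstacle. The delicate points are (i) using ABA-freedom to guarantee that distinct observed changes inject into distinct writes, and (ii) confining the relevant writers to the fewer-than-$b$ processes of $\writers{W}$ rather than to all $n$ processes, since otherwise the bound would acquire an unwanted dependence on $n$. Getting the bookkeeping of the final collect right, where $S_p$ crosses $b$ in the middle of a single collect, is the remaining subtlety, and is precisely what Lemma~\ref{lem:mWriter} is designed to handle.
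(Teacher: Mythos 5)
Your proof is correct and reaches the required bound, but its core counting step is genuinely different from the paper's. The paper takes a prefix $E'$ of $9b^6\ln b$ reads, partitions it into $3b^5\ln b$ segments of $3b$ reads each, and argues a dichotomy: either some segment contains no write in $W$, in which case Lemma~\ref{lem:3mRead} forces the scan to terminate inside that segment via two identical collects, or every segment contributes at least one write to $W$, so $|W|\ge 3b^5\ln b$; since Lemma~\ref{lem:allWrites} caps each process at $3b^4\ln b$ writes, this forces $|\writers{W}|\ge b$, after which Lemma~\ref{lem:mWriter} ends the scan within $2b$ further reads. You instead bound the number of repeat-loop iterations directly: a non-terminating iteration means consecutive collects differ; uniqueness of the $(\mathit{id},\mathit{seqNumber})$ stamps (ABA-freedom) lets you charge each such difference injectively to a distinct write in $W$; and the abort test confines $\writers{W}$ to the at most $b-1$ members of $S_p$ while the scan is still running, giving at most $(b-1)\cdot 3b^4\ln b+O(1)$ iterations, hence about $3b^6\ln b$ reads, with the same Lemma~\ref{lem:mWriter} tail. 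Both routes rest on Lemmas~\ref{lem:allWrites} and~\ref{lem:mWriter}; yours dispenses with Lemma~\ref{lem:3mRead} and the segment partition, and even yields a slightly better constant, at the price of making the value-uniqueness/injection argument explicit and of more delicate bookkeeping at collect boundaries, which you correctly flag as the subtle part. Two small corrections: the fact that a writer always belongs to the set it writes is Observation~\ref{obs:simple}, not Observation~\ref{obs:obvious}; and strictly speaking $\writers{W}\subseteq S_p$ holds only at collect boundaries (values read in a partially completed collect are not yet merged into $S_p$ by Line~\ref{line:scan:union}), which is exactly why the final stretch must be absorbed either by Lemma~\ref{lem:mWriter}, as you do, or by observing that the abort check of Line~\ref{line:scan:if-abort} fires at the next boundary.
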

\begin{proof}
By way of contradiction, let $E$ be an execution in which process $p$ performs a single \newScan{} $s$, and it contains more than $10b^6 \ln{b}$ reads. 
Let $E'$ be a prefix of $E$ that contains $9b^6 \ln{b}$ reads by $p$. 
Partition $E'$ into disjoint segments, $E' = (E_1, \dots, E_{3b^5 \ln{b}})$, 
satisfying $\forall \ell$, $1 \leq \ell \leq 3b^5 \ln{b}$, $E_{\ell}$ contains $3b$ reads by $p$.  
Let $O$ be the set of all read operations by $p$ during $E'$ and $W = \{ \writeOp{o} ~|~ o \in O \} \backslash \{ \bot \}$.

Suppose there is an $\ell$ such that $E_{\ell}$ contains no write operation in $W$. This implies that $E_{\ell}$ contains no write operation. 
Therefore by Lemma~\ref{lem:3mRead}, $p$ terminates $s$ during $E_{\ell}$. 

Otherwise, each $E_{\ell}$ contains at least one write in $W$. Hence $|W| \geq 3b^{5} \ln{b}$.   
Since by Lemma~\ref{lem:allWrites}, each process $p$ performs at most $3b^{4} \ln{b}$ writes, $|\writers{W}| \geq b$. 
Therefore, by Lemma~\ref{lem:mWriter}, $s$ contains at most $2b$ reads after $E'$ ends. Hence $E$ contains at most $9b^6\ln{b} + 2b < 10b^6\ln{b}$ reads by $p$.
\end{proof}

\begin{lemma}
No process performs more than $31b^{10} \ln^2{b}$ shared steps (read or write).
\end{lemma}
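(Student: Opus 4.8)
The plan is to compose the two main bounds already established: Lemma~\ref{lem:allWrites}, which caps the number of writes performed by any process, and Lemma~\ref{lem:allScans}, which caps the number of reads within any single \newScan{} operation. The essential structural observation is that the \getName{} algorithm in Fig.~\ref{fig:Ren-Sacn-alg} performs exactly one write (to $R[\mathit{pos}]$) and exactly one \newScan{} invocation in each iteration of its repeat loop. Consequently, the number of \newScan{} operations a process invokes equals its number of loop iterations, which in turn equals its number of writes.

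First I would use this correspondence to bound the number of \newScan{} invocations: since every loop iteration writes exactly once before calling \newScan{}, the number of \newScan{} calls is at most the number of writes, which by Lemma~\ref{lem:allWrites} is at most $3b^4 \ln b$. Next, by Lemma~\ref{lem:allScans}, each of these \newScan{} operations performs at most $10b^6 \ln b$ reads. Multiplying these two bounds gives a total read count of at most $3b^4 \ln b \cdot 10 b^6 \ln b = 30 b^{10} \ln^2 b$.

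Finally I would add the write count to the read count to obtain the total number of shared steps and simplify. Since $3b^4 \ln b < b^{10} \ln^2 b$ for all $b \geq 2$ (the assumed range, as $0$-bounded renaming is trivial), we have
\[
30 b^{10} \ln^2 b + 3 b^4 \ln b < 31 b^{10} \ln^2 b,
\]
which is the claimed bound. There is no genuine obstacle remaining at this stage: the two hard components---bounding the writes and bounding the reads per scan---were dispatched in Lemmas~\ref{lem:allWrites} and~\ref{lem:allScans}, and this final argument merely multiplies and adds them. The single point requiring care is the exact correspondence between writes, loop iterations, and \newScan{} invocations, which follows directly from the loop placing precisely one write before each scan in every iteration.
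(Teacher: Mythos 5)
Your proof is correct and follows essentially the same route as the paper's: bound the number of \newScan{} invocations by the write count from Lemma~\ref{lem:allWrites}, multiply by the per-scan read bound from Lemma~\ref{lem:allScans}, and add the writes back in. Your explicit justification of the final inequality (checking $3b^4\ln b < b^{10}\ln^2 b$ for $b \geq 2$) is slightly more careful than the paper, which simply asserts the sum is at most $31b^{10}\ln^2 b$.
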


\begin{proof}
By Lemma~\ref{lem:allWrites}, each process $p$ performs at most $3b^{4} \ln{b}$ writes. Hence, $p$ performs 
 at most $3b^{4} \ln{b}$ \newScan{} operations. By Lemma~\ref{lem:allScans}, $p$ performs at most $10b^6\ln{b}$ reads in each \newScan{} operation. Hence $p$ performs at most $3b^{4}\ln{b} + (3b^{4} \ln{b})(10b^6\ln{b}) \leq  31 b^{10} \ln^2{b}$ shared steps.
\end{proof}

\begin{theorem}
  For any $b\geq 2$, there is a wait-free one-shot $(b-1)$-bounded $(k(k+1)/2)$-adaptive renaming algorithm 
  implemented from $b$ bounded registers.
  Additionally, when $k \geq b$, the returned names are in the range $\{1,\ldots,n+\frac{b(b-1)}{2}\}$.
\end{theorem}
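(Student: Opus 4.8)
The plan is to assemble the lemmas of Sections~\ref{WaitFree:Scan} and~\ref{sec:newScan} into the three obligations implicit in the statement: correctness (distinct, appropriately bounded names), bounded register size, and wait-freedom. The algorithm manifestly uses only the $b$ registers $R[0],\ldots,R[b-1]$, so nothing is needed there. The conceptual pivot is the two ways a \newScan{} can return: normally at Line~\ref{line:scan:success} after two identical consecutive collects, or through the abort path once $|S|$ exceeds $b-1$. Augmenting each written value with $(\mathit{id},\mathit{seqNumber})$ makes every write value unique and thereby eliminates the ABA problem, so that two identical consecutive collects certify that the register contents were unchanged throughout the interval between them --- precisely the linearizability condition that lets \newScan{} stand in for the atomic scan of Fig.~\ref{fig:Ren-alg}.

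For correctness I would argue on these two cases. When every \getName{} uses only \newScan{} calls returning at Line~\ref{line:scan:success}, the returned collects coincide with the output of a linearizable scan, so the whole analysis of Section~\ref{WaitFree:Scan} transfers verbatim; in particular Lemma~\ref{lem:wfub:name-distinctness} gives distinctness and Lemma~\ref{lem:wfub:namespace-range} gives that a process seeing $k<b$ participants returns a name in $\{1,\ldots,k(k+1)/2\}$ and a process seeing $k\ge b$ returns a name in $\{1,\ldots,n+b(b-1)/2\}$. For an invocation that returns through the abort branch at Line~\ref{line:largeReturn}, Lemma~\ref{lem:abort} shows $k\ge b$, so the large-but-safe name $b(b-1)/2+\mathit{id}$ is permitted by the $(b-1)$-bounded specification and still lies in $\{1,\ldots,n+b(b-1)/2\}$. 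Distinctness across the two branches is immediate, since a small-set return is at most $b(b-1)/2$ while every large return equals $b(b-1)/2+\mathit{id}$ with $\mathit{id}$ distinct. This yields both the adaptivity bound for $k\le b-1$ and the stated range for $k\ge b$.

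Boundedness and wait-freedom then follow from the step-complexity lemmas. By Lemma~\ref{lem:allWrites} each process performs at most $3b^4\ln b$ writes, so its $\mathit{seqNumber}$ field never exceeds this value; since the $\mathit{set}$ field is a subset of the $n$ identifiers and $\mathit{id}\le n$, every register ranges over a finite domain and is therefore bounded. For wait-freedom I would invoke the preceding step-complexity lemma, which limits each process to at most $31b^{10}\ln^2 b$ shared steps (combining Lemma~\ref{lem:allWrites} with the per-\newScan{} read bound of Lemma~\ref{lem:allScans}); because a process executes only finitely many local operations between consecutive shared steps, every \getName{} completes in a finite number of its own steps regardless of the schedule.

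I expect the genuinely delicate point to be the transfer argument of the second paragraph rather than the counting. One must check that the linearizability equivalence for \newScan{} holds register-configuration by register-configuration, so that every lemma of Section~\ref{WaitFree:Scan} --- several of which reason about the sequence $\Gamma_E$ of register configurations and about the most-recent writer \writer{j}{R} --- remains applicable, and that the abort branch is correctly quarantined as the one place where this equivalence is not invoked. Once that bridge is secured, the rest of the argument is bookkeeping over bounds already established.
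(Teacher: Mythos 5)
Your proposal is correct and matches the paper's own (implicit) argument essentially line for line: the paper also assembles the theorem from the transfer of the Section~\ref{WaitFree:Scan} lemmas via the linearizable-scan equivalence of \newScan{} at Line~\ref{line:scan:success}, Lemma~\ref{lem:abort} to validate the abort branch, Lemma~\ref{lem:allWrites} to bound $\mathit{seqNumber}$ and hence register size, and the final step-complexity lemma for wait-freedom. Your closing observation about the delicate point is also the same caveat the paper addresses by citing linearizable scan implementations and quarantining the abort branch.
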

Setting $b = \lceil \sqrt{n} \rceil + 1$, we have a wait-free one-shot $\lceil \sqrt{n} \rceil$-bounded $(k(k+1)/2)$-adaptive renaming algorithm from $\lceil \sqrt{n} \rceil + 1$ bounded registers. This implies that the algorithm returns names in the range $\{1, \ldots, (k(k+1)/2)\}$ when $k \leq \lceil \sqrt{n} \rceil$, and returns names in the range $\{1,\ldots, n + \frac{\lceil \sqrt{n} \rceil(\lceil \sqrt{n} \rceil + 1)}{2}  \}$ when $k \geq \lceil \sqrt{n} \rceil + 1$. Note that when $k \geq \lceil \sqrt{n} \rceil + 1$, $ n + \frac{\lceil \sqrt{n} \rceil(\lceil \sqrt{n} \rceil + 1)}{2} \leq k^2 + \frac{k^2}{2}$. Hence, $\forall k \in \{1,\dots, n \}$, the algorithm returns names in the range $\{ 1,\ldots,(3k^2)/2 \}$. 
\begin{corollary}
There is a wait-free one-shot $(3k^2)/2)$-adaptive renaming algorithm implemented from $\lceil \sqrt{n} \rceil + 1$ bounded registers.
\end{corollary}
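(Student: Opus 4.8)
The plan is to obtain the corollary by instantiating the preceding theorem at $b = \lceil \sqrt{n} \rceil + 1$ and then checking, by a short case analysis on the number of participants $k$, that every name the algorithm returns lies in $\{1, \ldots, 3k^2/2\}$. Since $n \geq 1$ gives $b \geq 2$, the theorem applies verbatim and already delivers a wait-free one-shot algorithm from $\lceil \sqrt{n} \rceil + 1$ bounded registers; thus wait-freedom, the register count, and boundedness are immediate, and the only thing left to verify is the adaptive name-space guarantee $3k^2/2$.

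With $b-1 = \lceil \sqrt{n} \rceil$, I would split on whether $k$ is within this adaptivity bound. If $k \leq \lceil \sqrt{n} \rceil$, the $(b-1)$-bounded guarantee of the theorem returns a name in $\{1, \ldots, k(k+1)/2\}$, and since $k(k+1)/2 \leq k^2 \leq 3k^2/2$ for all $k \geq 1$, this regime needs nothing further. If instead $k \geq \lceil \sqrt{n} \rceil + 1 = b$, the theorem only guarantees a name in $\{1, \ldots, n + b(b-1)/2\}$, so the actual work is to absorb this fixed worst-case bound into $3k^2/2$.

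For that inequality I would use the two facts that hold precisely in this large-$k$ regime. First, $k \geq \lceil \sqrt{n}\rceil + 1 > \sqrt{n}$ gives $k^2 > n$. Second, $\lceil \sqrt{n}\rceil \leq k-1$, so $b(b-1) = (\lceil\sqrt n\rceil + 1)\lceil\sqrt n\rceil \leq k(k-1) < k^2$, whence $b(b-1)/2 < k^2/2$. Adding these yields $n + b(b-1)/2 < k^2 + k^2/2 = 3k^2/2$, completing this case. I do not expect a real obstacle: the argument is purely arithmetic once the theorem is in hand. The only point deserving care is that the $(b-1)$-bounded guarantee provides no adaptivity once $k$ exceeds $\lceil \sqrt{n}\rceil$, so in that regime one must lean on $k > \sqrt{n}$ to charge the constant name space $n + b(b-1)/2$ against $3k^2/2$ rather than against $k(k+1)/2$.
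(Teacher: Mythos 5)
Your proposal is correct and follows essentially the same route as the paper: instantiate the theorem at $b = \lceil \sqrt{n} \rceil + 1$, then split on $k \leq \lceil \sqrt{n} \rceil$ versus $k \geq \lceil \sqrt{n} \rceil + 1$, bounding $n + b(b-1)/2$ by $k^2 + k^2/2$ in the latter case exactly as the paper does. The only difference is that you spell out the two elementary inequalities ($n < k^2$ and $b(b-1) \leq k(k-1)$) that the paper leaves implicit.
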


\section{Obstruction-Free $(b-1)$-Bounded $k$-Adaptive Renaming}\label{sec:ObsUB}

\begin{figure}
  \textbf{shared:}
			$\Reg = R[1,\ldots, b]$ is an array of multi-writer multi-reader registers, each register is initialized to $(\emptyset, \bot, 1)$\\
	\textbf{local:}		
		$r[1,\ldots,b]$; $\mathit{pos} \in \{1, \ldots, b\}$ initialized to $1$; 
		$S$ initialized to $\emptyset$; $\mathit{proposed} \in \mathbb{N}$ initialized to $1$.\\
		\setcounter{AlgoLine}{0}
		\begin{algorithm}[H]\caption{getName()}
		
		\Repeat{$(|S|+1 \geq b) \vee (r[1]=r[2]=\dots=r[b]=(S,\mathit{id}, \mathit{proposed}))$}{
			$R[\mathit{pos}].\text{write}(S, \mathit{id}, \mathit{proposed})$ \label{line:obs:write}\\
      $r[1,\ldots,b] := \Reg.\text{scan()}$ \label{line:obs:scan}\\ 
      $S := \Update(S,r[1,\ldots,b])$ \label{line:obs:update}\\
      $\mathit{proposed} = \min\{i \in  \mathbb{N}~|~i \notin \Names{S}\}$\label{line:obs:name} \\
      \uIf(\label{line:obs:if}){$\exists i$, s.t. $(r[i].\mathit{writer}=\mathit{id}) \wedge (r[i]\neq (S,\mathit{id},\mathit{proposed}))$}{
        $\mathit{pos} := \max\{i~|~(r[i].\mathit{writer}=\mathit{id}) \wedge (r[i]\neq (S,\mathit{id},\mathit{proposed}))\}$\label{line:obs:max}
      }
      \ElseIf(\label{line:obs:elseif}){$\exists j$, s.t. $r[j] \neq (S,\mathit{id},\mathit{proposed})$}
				{$\mathit{pos} := j$}	\label{line:obs:endif}
		
		 }\label{line:obs:repeat:end}

		\uIf(\label{line:obs:fewIf}){$|S|+1 \leq b - 1$}{
      \Return{$\mathit{proposed}$}\label{line:obs:ret1}\; 
    }
    \Else{
      \Return{$b - 1 + \mathit{id}$} \label{line:obs:ret2} \\
    }
		\end{algorithm}
		 
		\begin{algorithm}[H]\caption{Update()}
		$S_{\mathit{new}}=\emptyset$ \\
   \For(\label{line:obs:writer:begin}){\KwSty{all} $w \in \{r[i].\mathit{writer}~|~1 \leq i \leq b\}\setminus \{\mathit{id}, \bot\}$}{
      Let $j\in \{1,\dots,b \}$ such that $r[j].\mathit{writer} = w$ \\
      $\mathit{name}_w := r[j].\mathit{proposal}$ \\
      $S_{\mathit{new}} := S_{\mathit{new}} \cup \{(w, \mathit{name}_{w})\}$ \label{line:obs:writer:add}\\
    }\label{line:obs:writer:end}
    \For(\label{line:obs:other:begin}){$\forall p \in \Procs{\bigcup_{i=1}^{b}{r[i].\mathit{set}}}\setminus (\Procs{S_{\mathit{new}}} \cup \{\mathit{id}\})$ \label{line:obs:other:for-loop}}{
      \uIf(\label{line:obs:impIf}){$\exists i,j$, $(i < j) \wedge (r[i].\mathit{writer} = r[j].\mathit{writer}) \wedge (p \in \Procs{r[j].\mathit{set}})$}{
        $\mathit{name}_p := \name{$p$}{$r[j].\mathit{set}$}$ \label{line:obs:other:high-pos}\\
      }
      \Else{
        Let $j\in \{1,\dots,b \}$ s.t. $p \in \Procs{r[j].\mathit{set}}$ \\
        $\mathit{name}_p := \name{$p$}{$r[j].\mathit{set}$}$ \label{line:obs:other:unique}\\
      }
      $S_{\mathit{new}} := S_{\mathit{new}} \cup \{(p, \mathit{name}_{p})\}$ \\
    }\label{line:obs:other:end}
    \For(\label{line:obs:old:begin}){$\forall p \in \Procs{S}\setminus (\Procs{S_{\mathit{new}}} \cup \{\mathit{id}\})$}{
      $S_{\mathit{new}} := S_{\mathit{new}} \cup \{(p, \name{$p$}{$S$})\}$ \label{line:obs:old:add}\\
    }\label{line:obs:old:end}
    \Return{$S_{\mathit{new}}$}\label{line:obs:Snew}		
		\end{algorithm}
		\caption{$(b-1)$-Bounded $k$-Adaptive Renaming}\label{fig:Ren-alg-obs}
\end{figure}

Fig.~\ref{fig:Ren-alg-obs} presents pseudo-code for an obstruction-free one-shot $(b-1)$-bounded $k$-adaptive renaming algorithm from $b$ registers assuming an atomic scan operation. In Theorem~\ref{thm:obs.final}, we show how to remove this assumption by adding an extra register. 

\subsubsection*{Algorithm Description.} 
A \emph{naming set} is a set of ordered pairs where each pair is a process id and a proposed name 
with the property that no process id occurs in more than one pair in the set. 
Let $S$ be a naming set. 
In our algorithm and the analysis we use the following notation:
\begin{compactitem}
	\item $\Procs{S}=\{x~|~(x,y) \in S\}$,
	\item $\Names{S}=\{y~|~(x,y) \in S\}$,
	\item if $(p,n) \in S$, then \name{p}{S} is $n$; otherwise it is undefined.
\end{compactitem}
The algorithm in Fig.~\ref{fig:Ren-alg-obs} employ a set  $\Reg=\{R[1], \ldots, R[b] \}$ of shared atomic registers. 
Each register $R$ stores 
an ordered triple $(\mathit{set}, \mathit{writer}, \mathit{proposal})$
where
$\mathit{set}$ is a naming set, $\mathit{writer}$ is a process id or $\bot$ (initially) and 
$\mathit{proposal}$ is a positive integer less than or equal $b - 1$. 
Each process $p$ maintains a naming set $S_p$
and alternates between write and scan operations until it terminates with a name for itself. 
Each scan returns a view, which is an atomic snapshot of the content of all registers. 
Each write by $p$ writes a triple consisting of its set $S_p$, its $\mathit{id}$ $p$, 
and its proposed name $\mathit{name}_p$, to some register $R[j]$. 
Process $p$ uses its last view and its previous value of $S_p$ 
to determine the new value of $S_p$, $\mathit{name}_p$ and $j$.

Function \Update describes how $p$ constructs $S_p$ in three steps. 
In the first step (Lines~\ref{line:obs:writer:begin}-\ref{line:obs:writer:end}), 
$p$ creates a naming set based only on the $\mathit{writer}$s and $\mathit{proposal}$s of each register in its view. 
If the view contains a $\mathit{writer}$ with more than one $\mathit{proposal}$, $p$ chooses one pair arbitrarily. 
In the second step (Lines~\ref{line:obs:other:begin}-\ref{line:obs:other:end}), $p$ augments its naming set with additional pairs for processes that are not $\mathit{writer}$s in its view but occur in the union of all naming sets in its view. 
The main issue occurs when there is some process that is paired with more than one name 
from two or more naming sets in different registers. 
In this case, if there are two such registers with the same $\mathit{writer}$ then, $p$ chooses the pair which occurs in the register with bigger index. Otherwise, $p$ picks one pair arbitrarily. 
Finally (Lines~\ref{line:obs:old:begin}-\ref{line:obs:old:end}), $p$ adds any pair $(q,n_q)$ such that $q$ exists in the previous version of $S_p$ and is not yet added.
Observe that $S_p$ is a naming set and $p \notin \Procs{S_p}$.

In Line~\ref{line:obs:name} $p$ chooses its proposal for its own name,  $\mathit{name}_p$,
to be the smallest integer that is not paired with some other process in $S_p$.

Lines~\ref{line:obs:if}-\ref{line:obs:endif} describe how $p$ sets $j$.
If there is any register in $p$'s preceeding view with $\mathit{writer}$ component equal to $p$ but with content different from $(S_p,p,\mathit{name}_p)$ 
then $p$ writes to register $R[j]$ where $j$ is the biggest index amongst these registers. 
Otherwise it writes to some register whose content is different than $(S_p,p,\mathit{name}_p)$. 
Process $p$ continues until either in some scan, 
all registers contain the same information that $p$ has written or $|S_p|$ is larger than or equal $b-1$. 
In the first case $p$ returns $\mathit{name}_p$ and in the second case it returns $b + p - 1$.

\subsubsection*{ Proof of Correctness}

\paragraph{Overview of proof} Once a process $p$ terminates with name $n_p \leq b-1$, 
the pair ($\mathit{writer}$, $\mathit{proposal}$) of every register is equal to $(p, n_p)$. 
The core idea is that after $p$ terminates, every register that is overwritten with the wrong name for $p$ or no name for $p$, 
has a distinct $\mathit{writer}$ component.
Therefore, if a subsequent scan by another process, say $q$, does not include the correct name for $p$, 
the set of processes in that scan is large and $q$ terminates with a name larger than or equal $b$.
If the set of processes in the scan is not large, 
then there is some writer that is in the $\mathit{writer}$ component of at least 2 registers. 
In that case, 
we prove that for any such pair of registers with the same writer, 
the correct name for $p$ is in the register with the larger index.
In this way, the algorithm ensures that process $q$ keeps $(p, n_p)$ in its naming set, 
and discards incorrect names for $p$.

For our proof, we use the notion and terminology for register configuration, consistent configuration, $\Index{op}$ of operation $op$, interval and content of register $R$ in configuration $C$, $\val{C}{R}$, 
as defined in Section~\ref{Algorithms}. 
Let $p$ be a process that has terminated and returned name $n_p$. 
Define $\mathit{last}_p$ to be the last scan by $p$.
For any register configuration $D$ following register configuration $C_{\Index{\mathit{last}_p}}$, 
define a set of registers $\Wrong_{p}(D) =\{R \in \Reg~|~(\val{D}{R}.\mathit{writer} \neq p) \wedge ((p,n_p) \notin \val{D}{R}.\mathit{set}) \}$ 
and a set of processes $\Writer_{p}(D)=\bigcup_{R \in \Wrong_{p}(D)} \{ \val{D}{R}.\mathit{writer} \}$.

\begin{lemma}\label{important}
 Let $E$ be any execution starting in the initial configuration and ending in configuration $C$. 
If there are two integers $i$ and $j$ such that $i < j$, $\val{C}{R[i]}.\mathit{writer}=\val{C}{R[j]}.\mathit{writer}=p$ and $\val{C}{R[i]} \neq \val{C}{R[j]}$, then the last write to $R[i]$ happens before the last write to $R[j]$ and both are by the same process.
\end{lemma}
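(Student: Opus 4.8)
The plan is to argue by contradiction, after disposing of the ``same process'' claim, which is immediate: every write in Fig.~\ref{fig:Ren-alg-obs} (Line~\ref{line:obs:write}) stores the writing process's own $\mathit{id}$ in the $\mathit{writer}$ component, so $\val{C}{R[i]}.\mathit{writer}=\val{C}{R[j]}.\mathit{writer}=p$ means the last write to each of $R[i]$ and $R[j]$ was performed by $p$. Denote these last writes $w_i$ and $w_j$. It remains to show $w_i$ precedes $w_j$, so suppose toward a contradiction that $w_j$ precedes $w_i$.

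First I would locate the scan that dictated the target register of $w_i$. Since $w_j$ is by $p$ and precedes $w_i$, the write $w_i$ is not $p$'s first write; hence $w_i$ is issued in an iteration of the repeat loop after the first, and the value of $\mathit{pos}$ used for $w_i$ was assigned (at Line~\ref{line:obs:max} or Line~\ref{line:obs:endif}) during the previous iteration, just after an atomic scan $\hat r$ (Line~\ref{line:obs:scan}). Let $(\hat S,p,\hat n)$ be the triple $p$ holds after updating its set (Line~\ref{line:obs:update}) and selecting its proposal (Line~\ref{line:obs:name}) in that previous iteration; this is exactly what $w_i$ writes, so $\val{C}{R[i]}=(\hat S,p,\hat n)$ because $w_i$ is the last write to $R[i]$.

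The crux, which I expect to be the main obstacle, is to show that $\hat r$ reads $\val{C}{R[j]}$ from $R[j]$. Following the write-then-scan structure of the loop body, the write preceding $\hat r$ in its iteration is $w_j$ or a later write by $p$, so $\hat r$ occurs strictly after $w_j$ and strictly before $w_i$. Since $\val{C}{R[j]}.\mathit{writer}=p$, the write $w_j$ is the last write to $R[j]$ by any process up to $C$; hence no write touches $R[j]$ between $w_j$ and $C$, in particular none between $w_j$ and $\hat r$. Atomicity of the scan then gives $\hat r[j]=\val{C}{R[j]}$.

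Finally I would derive the contradiction. By hypothesis $\val{C}{R[i]}\neq\val{C}{R[j]}$, so $\hat r[j]=\val{C}{R[j]}\neq(\hat S,p,\hat n)$, while $\hat r[j].\mathit{writer}=\val{C}{R[j]}.\mathit{writer}=p$. Thus index $j$ satisfies the condition of the first branch of the conditional (Line~\ref{line:obs:if}) evaluated against $(\hat S,p,\hat n)$, so that branch is taken and Line~\ref{line:obs:max} sets $\mathit{pos}$ to the maximum such index, which is at least $j>i$. This contradicts $w_i$ writing to $R[i]$, i.e.\ $\mathit{pos}=i$. Hence $w_j$ cannot precede $w_i$, so the last write to $R[i]$ precedes the last write to $R[j]$, both by $p$.
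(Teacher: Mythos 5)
Your proof is correct and follows essentially the same route as the paper's: the same contradiction hypothesis, the same identification of the scan preceding $w_i$, the same observation that $R[j]$ still holds $w_j$'s value at that scan, and the same appeal to Lines~\ref{line:obs:if}--\ref{line:obs:max} to force $\mathit{pos}\geq j>i$. The only difference is cosmetic: you explicitly justify that the scan exists (since $w_i$ is not $p$'s first write), a point the paper leaves implicit.
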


\begin{proof}
  By Line~\ref{line:obs:write}, the $\mathit{writer}$ segment of each register indicates the id of the process which writes that value. 
  Hence, $\val{C}{R[i]}$ and $\val{C}{R[j]}$ are both written by the same process $p$.
  Let $w_i$ and $w_j$ be the most recent writes to $R[i]$ and $R[j]$ preceding $C$, respectively. 
	Thus, value of $w_i$ (respectively $w_j$) is $\val{C}{R[i]}$ (respectively $\val{C}{R[j]}$). 
  By way of contradiction assume that $w_j$ happens before $w_i$. 
  Let $s_i$ be the most recent scan operation by $p$ before $w_i$. 
  Hence $s_i$ happens after $w_j$ and before $w_i$. 
	Since $w_j$ is the most recent write to $R[j]$ preceding $s_i$, $\val{C_{\Index{s_i}}}{R[j]} = \val{C}{R[j]}$. 
	Let $\widehat{S_p}$ and $\widehat{\mathit{proposed}_p}$ be the value of $S_p$ and $\mathit{proposed}_p$ at $w_i$ respectively. 
Then $(\widehat{S_p}, p, \widehat{\mathit{proposed}_p}) = \val{C}{R[i]}$. Therefore, when $p$ executed Line~\ref{line:obs:if} preceding $w_i$ and after $s_i$, $\widehat{S_p}$ and $\widehat{\mathit{proposed}_p}$ are values of $S_p$ and $\mathit{proposed}_p$ respectively. 
	Let $\widehat{r_j}$ be the value of $r[j]$ at $s_i$. Thus $\widehat{r_j} = \val{C_{\Index{s_i}}}{R[j]}$. 
	Furthermore, $\widehat{r_j}$ is the value of $r[j]$, when $p$ executes Line~\ref{line:obs:if} after $s_i$ and preceding $w_i$. 
	Therefore at the execution of Line~\ref{line:obs:if} after $s_i$ and preceding $w_i$, $(S_p, p, \mathit{proposed}_p) = (\widehat{S_p}, p, \widehat{\mathit{proposed}_p}) = \val{C}{R[i]} \neq \val{C}{R[j]} = \val{C_{\Index{s_i}}}{R[j]}  = \widehat{r_j} = r[j]$ and $r[j].\mathit{id} = \widehat{r_j}.\mathit{id} = p$. Thus, Line~\ref{line:obs:if} evaluates to true. 
	Since $j > i$, by Line~\ref{line:obs:max}, $p$ does not write into $R[i]$ before writing into $R[j]$.
	
\end{proof}

Informally,  Lemma~\ref{lem:core} says that every register that contains an incorrect name for $p$ 
after a consistent configuration containing the correct name for $p$ has a distinct $\mathit{writer}$ component.

\begin{lemma}\label{lem:core}
Consider an execution $E$ in which process $p$'s \getName{} call returns name $n_p \leq b-1$. 
Then for any register configuration $C_{e}$ where $e \geq \Index{\mathit{last}_p}$,
   
 \begin{compactenum}[i)]
 	\item $|\Wrong_{p}(C_{e})|=|\Writer_{p}(C_{e})|$;
 	\item $\forall q \in \Writer_{p}(C_{e}) $, $q$ performs a write in Interval$[\Index{\mathit{last}_p},e]$; and
	\item for any write operation $o$ by  any process $q$ during Interval$[\Index{\mathit{last}_p},e]$, let $v$ be the value of $o$. 
	If $o$ is not $q$'s first write during Interval$[\Index{\mathit{last}_p},e]$, 
	then $(p,n_p)\in v.\mathit{set}$.

 \end{compactenum}
\end{lemma}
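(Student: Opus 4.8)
The plan is to prove (i)--(iii) together by induction on the register‑configuration index $e \ge \Index{\mathit{last}_p}$, in the same style as Lemmas~\ref{lem:regConfig} and~\ref{lem:kcomplete}. First I would record the facts that drive the whole argument. Since $p$ returns $n_p \le b-1$ at Line~\ref{line:obs:ret1}, the exit condition $|S_p|+1\ge b$ fails, so the loop must have ended because the view of its last scan $\mathit{last}_p$ showed every register equal to $(S_p, p, n_p)$; equivalently, $C_{\Index{\mathit{last}_p}}$ is consistent with content $(S_p, p, n_p)$. In particular $p$ performs no write after $\mathit{last}_p$, and because a triple with $\mathit{writer}=p$ can be produced only by $p$ (Line~\ref{line:obs:write}), every register whose writer component is $p$ in any later configuration still holds exactly $(S_p, p, n_p)$, hence proposal $n_p$. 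The base case $e=\Index{\mathit{last}_p}$ is then immediate: every register has writer $p$, so $\Wrong_p(C_{\Index{\mathit{last}_p}})=\Writer_p(C_{\Index{\mathit{last}_p}})=\emptyset$, giving (i) and (ii), while (iii) is vacuous since the interval contains only the single write that produced $C_{\Index{\mathit{last}_p}}$, which is by $p$ and is $p$'s first write in it.

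For the inductive step, let the write $w$ taking $C_{e-1}$ to $C_e$ be performed by a process $q$ into register $\widehat R$ with value $v$; since $p$ does not write past $\mathit{last}_p$ we have $q\ne p$, so $v.\mathit{writer}=q$. I would establish (iii) for $w$ first and then read off (i) and (ii). If $w$ is $q$'s first write in Interval$[\Index{\mathit{last}_p},e]$ there is nothing to prove for (iii); otherwise $w$ is preceded within the loop by a scan $s$ of $q$ reading $C_{\Index s}$ with $\Index{\mathit{last}_p}\le \Index s \le e-1$, so the induction hypothesis applies at $C_{\Index s}$. The decisive sub-claim is that the \Update invoked after $s$ leaves $(p,n_p)$ in $S_q$; since $w$ writes $(S_q,q,\cdot)$, this yields $(p,n_p)\in v.\mathit{set}$, proving (iii). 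Granting the sub-claim, (i) and (ii) follow structurally: by (iii) a register can only enter $\Wrong_p$ through a writer's first interval write, and two distinct wrong registers cannot share a writer, since the later of the two offending writes would, by (iii), carry $(p,n_p)$ and hence not be wrong. Thus $R\mapsto \val{C_e}{R}.\mathit{writer}$ is injective on $\Wrong_p(C_e)$, giving the bijection (i), and every such writer performed the first interval write that created its wrong register, giving (ii).

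The heart of the proof, and the step I expect to be the main obstacle, is the sub-claim that \Update preserves $(p,n_p)$ in $S_q$. I would argue first that not all registers are wrong at $C_{\Index s}$: if $|\Wrong_p(C_{\Index s})|=b$, then by induction hypothesis (i) these $b$ registers carry $b$ distinct writers, so the first loop of \Update (Lines~\ref{line:obs:writer:begin}--\ref{line:obs:writer:end}) already forces $|S_q|\ge b-1$, which triggers the exit condition $|S_q|+1\ge b$ and makes $q$ return before executing $w$ --- a contradiction. Hence some correct register exists. If a correct register has writer $p$, it equals $(S_p,p,n_p)$ and the first loop adds $(p,n_p)$ to $S_q$ directly, so we are done. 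The genuinely delicate case is when no register has writer $p$: then $(p,n_p)$ survives only inside the \emph{sets} of correct registers, while $p$ may simultaneously appear with a stale name $n'\ne n_p$ inside the sets of wrong registers, so $p$ must be resolved by the conflict‑handling second loop (Lines~\ref{line:obs:other:begin}--\ref{line:obs:other:end}).

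Here I would lean on Lemma~\ref{important} together with induction hypothesis (i). The \Update rule, when $p$ is paired inconsistently across two registers with a common writer, selects the pair in the register of larger index (Line~\ref{line:obs:other:high-pos}); Lemma~\ref{important} guarantees that among two same‑writer registers with different values the larger‑indexed one was written later, and (iii) guarantees that a writer's later writes carry $(p,n_p)$, so the rule resolves such a conflict in favour of $n_p$. The crux is to show that this tie‑break, combined with the distinctness of wrong‑register writers from (i), rules out \Update ever committing to a stale $n'$: one must observe that in this case any register in which $p$ appears with the wrong name is itself a wrong register (its writer is $\ne p$ and it lacks $(p,n_p)$), pair each such register against the correct register carrying $(p,n_p)$ through the writer‑collision structure guaranteed by (i), and confirm that the index comparison in Line~\ref{line:obs:other:high-pos} always breaks in favour of $(p,n_p)$. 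Once this case analysis is completed the sub-claim holds, the induction closes, and (i)--(iii) follow for all $e\ge\Index{\mathit{last}_p}$.
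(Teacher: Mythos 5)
Your overall architecture is sound, and in one respect cleaner than the paper's: deriving (i) and (ii) at index $e$ purely from (iii) (the writer map is injective on $\Wrong_{p}(C_e)$ because the later of two same-writer offending writes would, by (iii), carry $(p,n_p)$) is a correct simplification of the paper's explicit set bookkeeping. Your base case, your case where some register has writer $p$, and your use of Lemma~\ref{important} together with induction hypothesis (iii) to argue that a same-writer pair $i<j$ must have $(p,n_p)\in r[j].\mathit{set}$ all match the paper. The gap is in the remaining case, and the mechanism you propose for it does not exist. Suppose at $C_{\Index{s}}$ no register has writer $p$ and \emph{no two registers share a writer}. This situation is not excluded by your step 1: you only rule out $|\Wrong_{p}(C_{\Index{s}})|=b$, but one can have, say, three wrong registers with distinct writers and one correct register, written by yet another distinct writer, whose set contains $(p,n_p)$. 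Then the guard of Line~\ref{line:obs:impIf} is false for $p$ (there is no same-writer pair at all), so \Update reaches the else branch (Line~\ref{line:obs:other:unique}) and may select an arbitrary register in which $p$ appears---possibly a wrong one pairing $p$ with a stale name---or, if $p$ appears in no set of the view, the third loop (Line~\ref{line:obs:old:add}) can reinsert a stale pair from $q$'s old local set. Your appeal to a ``writer-collision structure guaranteed by (i)'' is backwards: (i) asserts that wrong registers have pairwise \emph{distinct} writers, and it says nothing about a wrong register sharing its writer with a correct one. In this configuration the sub-claim ``\Update preserves $(p,n_p)$'' is simply false.

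What saves the lemma---and what the paper proves instead---is that after such a scan $q$ never performs the write $w$: the first loop of \Update (Lines~\ref{line:obs:writer:begin}--\ref{line:obs:writer:end}) adds one pair for \emph{every} distinct writer in the view other than $q$ itself, wrong register or not, so $b$ distinct writers force $|S_q|+1\ge b$ and the until-condition at Line~\ref{line:obs:repeat:end} makes $q$ exit the loop before writing. Contrapositively, if $q$ does write after $s$, its view contains at most $b-1$ distinct writers, so by pigeonhole two registers share a writer; that shared writer is either $p$ (your easy case) or some $u\neq p$, and then your Lemma~\ref{important}/IH-(iii) argument applies to every pair qualifying in Line~\ref{line:obs:impIf}, forcing the assigned name to be $n_p$. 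This pigeonhole step---invoking the exit condition for the whole view, not merely for the all-wrong configuration---is the missing ingredient; your step 1 is exactly its special case, and it must be applied in this stronger form for the induction to close.
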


\begin{proof}
Let $C_{\Index{\mathit{last}_p}} ,C_{\Index{\mathit{last}_p}+1},\ldots$ be the sequence of all register configurations starting at $C_{\Index{\mathit{last}_p}}$.  
We prove the lemma by induction on the indices of this sequence. 
Let $\widehat{S_p}$ be the value of $S_p$ at $\mathit{last}_p$. 
For the base case $\ell = \Index{\mathit{last}_p}$, 
since $n_p \leq b-1$, $p$ returns in Line~\ref{line:obs:ret1}. 
  Therefore the condition $r[1]=\dots=r[b]=(\widehat{S_p},p,n_p)$ held when $p$ last executed Line~\ref{line:obs:repeat:end}. 
  Hence, condition $R[1]=\dots=R[b]=(\widehat{S_p},p,n_p)$ held at $C_{\Index{\mathit{last}_p}}$. 
  Therefore the induction hypothesis $(i)$ and $(ii)$ hold for the base case $\ell=\Index{\mathit{last}_p}$ because 
  $\Wrong_{p}(C_{\Index{\mathit{last}_p}}) = \Writer_{p} (C_{\Index{\mathit{last}_p}})= \emptyset$. 
	Furthermore, since Interval$[\Index{\mathit{last}_p},\Index{\mathit{last}_p}]$ contains only one write, $(iii)$ is true for the base case. 
	
  Suppose that the lemma holds for $\ell-1 \geq \Index{\mathit{last}_p}$. 
  Let $w$ be the write that changes register configuration $C_{\ell-1}$ to $C_{\ell}$, and let $x$ be the process that performs $w$.
  Then clearly $x \neq p$, since $p$ has performed its last write before $C_{\Index{\mathit{last}_p}}$.
  Suppose $w$ writes value $(\widehat{S_x},x,n_x)$ into register $R$, and let $s$ be $x$'s scan operation that precedes $w$ if it exists. 

Suppose $(p,n_p) \in \widehat{S_x}$. Let $\Wrong_{p}(C_{\ell})=\Wrong_{p}(C_{\ell-1}) \setminus \{R\}$ and $\Writer_{p}(C_{\ell}) = \Writer_{p}(C_{\ell-1}) \setminus \{\val{C_{\ell-1}}{R}.\mathit{writer}\}$. 
If $R \in \Wrong_{p}(C_{\ell-1})$, then by definition, $\val{C_{\ell-1}}{R}.\mathit{writer} \in \Writer_{p}(C_{\ell-1})$ and if $R \notin \Wrong_{p}(C_{\ell-1})$, then by definition, $\val{C_{\ell-1}}{R}.\mathit{writer} \notin \Writer_{p}(C_{\ell-1})$. 
Since $|\Wrong_{p}(C_{\ell - 1})|=|\Writer_{p}(C_{\ell - 1})|$, $|\Wrong_{p}(C_{\ell})|=|\Writer_{p}(C_{\ell})|$. 
Therefore $(i)$ is true. Since $\Writer_{p}(C_{\ell}) \subseteq \Writer_{p}(C_{\ell-1})$, $(ii)$ holds.
Since $(p, n_p) \in \widehat{S_x}$, $(iii)$ is true. 

Now consider the case $(p,n_p) \notin \widehat{S_x}$. We first show that $C_{\Index{s}}$ precedes $C_{\Index{\mathit{last}_p}}$ in $\Gamma_E$ or $w$ is the first write by $x$. 
Suppose, for the purpose of contradiction, that $\Index{\mathit{last}_p} \leq \Index{s} \leq \ell - 1$. 
First consider the case that there exists an $i$ such that $\val{C_{\Index{s}}}{R[i]}.\mathit{writer} =p$. Since at $C_{\Index{last_p}}$ all registers contain $(\widehat{S_p}, p, n_p)$ and $p$ does not write after $last_p$, $\val{C_{\Index{s}}}{R[i]}.\mathit{proposal} =n_p$. Hence by Line~\ref{line:obs:writer:add}, $(p,n_p) \in \widehat{S_x}$.  
Otherwise suppose that in $C_{\Index{s}}$, there are at least two distinct registers whose $\mathit{writer}$ are the same process and not $p$. 
Then, choose any indices $i,j$ such that $i<j$ and $\val{C_{\Index{s}}}{R[i]}.\mathit{writer} = \val{C_{\Index{s}}}{R[j]}.\mathit{writer} = u \neq p$. 
Let $w_1$ and $w_2$ be the most recent writes to $R[i]$ and $R[j]$ preceding $C_{\Index{s}}$. Hence $w_1$ has value $\val{C_{\Index{s}}}{R[i]}$ and $w_2$ has value $\val{C_{\Index{s}}}{R[j]}$ and they both are performed by process $u$. 
Furthermore, since at $C_{\Index{last_p}}$ all registers contain $(\widehat{S_p}, p, n_p)$, $w_1$ and $w_2$ occur in Interval$[\Index{\mathit{last}_p},\Index{s}]$. 
Suppose $\val{C_{\Index{s}}}{R[i]} \neq \val{C_{\Index{s}}}{R[j]}$, then by Lemma~\ref{important}, $w_1$ precedes $w_2$ in $E$ 
and by the induction hypothesis~$(iii)$, $(p,n_p) \in \val{C_{\Index{s}}}{R[j]}.\mathit{set}$. 
Otherwise suppose $\val{C_{\Index{s}}}{R[i]} = \val{C_{\Index{s}}}{R[j]}$ then again by induction hypothesis~$(iii)$, $(p,n_p) \in \val{C_{\Index{s}}}{R[j]}.\mathit{set}$. 
In either case, when $x$ performs Line~\ref{line:obs:impIf} after $s$ and preceding $w$, this line evaluates to true. 
Hence by Line~\ref{line:obs:other:high-pos}, $(p,n_p) \in \widehat{S_x}$. 
Finally, if $\forall i, j$, $1 \leq i,j \leq b$ and $ i \neq j$, $\val{C_{\Index{s}}}{R[i]}.\mathit{writer} \neq \val{C_{\Index{s}}}{R[j]}.\mathit{writer}$, then by induction hypothesis $(i)$, $|\Wrong_{p}(C_{\Index{s}})|=|\Writer_{p}(C_{\Index{s}})| = b$.  
Therefore, by the for-loop (Lines~\ref{line:obs:writer:begin}-\ref{line:obs:writer:end}), $|\widehat{S_x}| + 1 \geq |\Writer_{p}(C_{\Index{s}})| = b$. 
Hence, by Line~\ref{line:obs:repeatCondition}, the presumed write $w$ by $x$ cannot happen. 
Thus, in all cases, we have established that if $(p,n_p) \notin \widehat{S_x}$ then  
$C_{\Index{s}}$ precedes $C_{\Index{\mathit{last}_p}}$ in $\Gamma_E$ or $w$ is the first write by $x$.

Consequently, any write by $x$ before $w$ happens before $C_{\Index{\mathit{last}_p}}$. 
On the other hand, by the induction hypothesis, 
 for all $q \in \Writer_{p}(C_{\ell-1})$, $q$ performs a write during Interval$[\Index{\mathit{last}_p},\ell-1]$ 
implying $x$ is not in $\Writer_{p}(C_{\ell-1})$. 
Thus by defining $\Wrong_{p}(C_{\ell})=\Wrong_{p}(C_{\ell-1}) \cup \{R\}$ and $\Writer_{p}(C_{\ell}) = \Writer_{p}(C_{\ell-1})\setminus \{\val{C_{\ell - 1}}{R}.\mathit{writer}\} \cup \{x\}$, the induction hypothesis $(i)$ and $(ii)$ hold for $\ell$. 
Since the most recent operation before $w$ by $x$ happens before $C_{\Index{\mathit{last}_p}}$, $x$ performs only one write operation in Interval$[\Index{\mathit{last}_p},\ell]$. Therefore, 
$(iii)$ holds for $\ell$.  

\end{proof}

\begin{lemma}\label{lem:m-1_distinct}
Let $p$ and $q$ be two distinct processes that have terminated in execution $E$ and returned names $n_p$ and $n_q$ respectively. Suppose that $C_{\Index{\mathit{last}_p}}$ precedes $C_{\Index{\mathit{last}_q}}$ in $\Gamma_E$. If $n_p, n_q \leq b-1$, then $|\Wrong_{p}(C_{\Index{\mathit{last}_q}})| = 0$.
\end{lemma}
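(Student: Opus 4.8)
The plan is to turn the claim into a short counting contradiction driven entirely by Lemma~\ref{lem:core}(i). First I would pin down the exact shape of the register configuration $C_{\Index{\mathit{last}_q}}$. Since $q$ returns the small name $n_q\leq b-1$, it cannot have exited through Line~\ref{line:obs:ret2} (which always yields a name $b-1+q\geq b$), so it must exit through Line~\ref{line:obs:ret1}; hence at Line~\ref{line:obs:fewIf} it held that $|\widehat{S_q}|+1\leq b-1<b$, where $\widehat{S_q}$ is $q$'s final naming set. This falsifies the first disjunct $|S|+1\geq b$ of the repeat condition at Line~\ref{line:obs:repeat:end}, so the loop must have terminated through the second disjunct, namely that $q$'s last scan returned a view with $r[1]=\dots=r[b]=(\widehat{S_q},q,n_q)$. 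Because the view returned by a scan of index $i$ equals $C_i$ (as recorded in Section~\ref{Algorithms}), every register of $C_{\Index{\mathit{last}_q}}$ equals $(\widehat{S_q},q,n_q)$; in particular all $b$ registers are identical and every register has writer component $q\neq p$.

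Next I would read off the only two possible values of $|\Wrong_{p}(C_{\Index{\mathit{last}_q}})|$. Since each register's writer is $q\neq p$, the first conjunct in the definition of $\Wrong_{p}$ holds for every register, so a register lies in $\Wrong_{p}(C_{\Index{\mathit{last}_q}})$ exactly when $(p,n_p)\notin\widehat{S_q}$. As all $b$ registers carry the identical value $(\widehat{S_q},q,n_q)$, this is an all-or-nothing situation: either $|\Wrong_{p}(C_{\Index{\mathit{last}_q}})|=0$ or $|\Wrong_{p}(C_{\Index{\mathit{last}_q}})|=b$.

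To rule out the second case I would invoke Lemma~\ref{lem:core}(i). Since $C_{\Index{\mathit{last}_p}}$ precedes $C_{\Index{\mathit{last}_q}}$ in $\Gamma_E$, we have $\Index{\mathit{last}_q}\geq\Index{\mathit{last}_p}$, so the lemma applies and gives $|\Wrong_{p}(C_{\Index{\mathit{last}_q}})|=|\Writer_{p}(C_{\Index{\mathit{last}_q}})|$. But $\Writer_{p}(C_{\Index{\mathit{last}_q}})$ is the set of writer components of the registers in $\Wrong_{p}$, all of which equal $q$, whence $|\Writer_{p}(C_{\Index{\mathit{last}_q}})|\leq 1$. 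Were we in the case $|\Wrong_{p}(C_{\Index{\mathit{last}_q}})|=b$, we would obtain $b=|\Wrong_{p}(C_{\Index{\mathit{last}_q}})|=|\Writer_{p}(C_{\Index{\mathit{last}_q}})|\leq 1$, contradicting $b\geq 2$. Therefore $|\Wrong_{p}(C_{\Index{\mathit{last}_q}})|=0$, as required.

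The one delicate step, and the place I would be most careful, is the first paragraph: correctly arguing that $q$'s termination with a name $\leq b-1$ forces the terminating view to be consistent with writer $q$ in every slot, using the scan/register-configuration correspondence. Everything after that is immediate once Lemma~\ref{lem:core}(i) is in hand, since that lemma—expressing that wrong registers carry pairwise distinct writers—is exactly what makes $b\geq 2$ identical wrong registers impossible.
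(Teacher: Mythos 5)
Your proof is correct and follows essentially the same route as the paper's: observe that $q$'s return at Line~\ref{line:obs:ret1} forces $C_{\Index{\mathit{last}_q}}$ to be consistent with content $(\widehat{S_q},q,n_q)$, so $|\Wrong_{p}(C_{\Index{\mathit{last}_q}})|\in\{0,b\}$, then use Lemma~\ref{lem:core}(i) together with the fact that every register's writer is $q$ to bound $|\Writer_{p}(C_{\Index{\mathit{last}_q}})|\leq 1$ and rule out the value $b$. Your version merely spells out the loop-exit reasoning and the $b\geq 2$ contradiction more explicitly than the paper does.
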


\begin{proof}
Since $n_q \leq b-1$, $q$ returns in Line~\ref{line:obs:ret1}. 
Hence $C_{\Index{\mathit{last}_q}}$ is consistent with content $(S_q,q,n_q)$. 
Therefore, $|\Wrong_{p}(C_{\Index{\mathit{last}_q}})|\in \{0,b\}$.
By Lemma~\ref{lem:core}, $|\Wrong_{p}(C_{\Index{\mathit{last}_q}})|=|\Writer_{p}(C_{\Index{\mathit{last}_q}})|$. 
Since in $C_{\Index{\mathit{last}_q}}$, $R.\mathit{writer}=q$ for all $R\in\Reg$, $| \Writer_{p}(C_{\Index{\mathit{last}_q}})|\leq 1$, and thus $|\Wrong_{p}(C_{\Index{\mathit{last}_q}})| \leq 1$. Therefore $|\Wrong_{p}(C_{\Index{\mathit{last}_q}})|=0$.

\end{proof}

\begin{lemma}\label{lem:obs:name_distinctness}
The names returned by any two distinct processes are distinct.
\end{lemma}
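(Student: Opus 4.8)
The plan is to prove name-distinctness by a case analysis on the magnitudes of the two returned names $n_p$ and $n_q$, exactly mirroring the structure of the analogous wait-free Lemma~\ref{lem:wfub:name-distinctness}, but replacing the appeal to Lemma~\ref{lem:LA} with the machinery of Lemma~\ref{lem:core} and Lemma~\ref{lem:m-1_distinct}. Let $p$ and $q$ be distinct terminated processes returning $n_p$ and $n_q$. The easy cases are where at least one name is ``large.'' If both $|S_p|+1 \geq b$ and $|S_q|+1 \geq b$, then both return in Line~\ref{line:obs:ret2} the value $b-1+\mathit{id}$, and these are distinct since $p \neq q$. If exactly one process, say $p$, has a small set (returns in Line~\ref{line:obs:ret1}), then $n_p = \mathit{proposed}_p \leq b-1$ while $q$ returns $b-1+q \geq b$, so $n_p < n_q$ and they differ.

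The interesting case is when \textbf{both} $p$ and $q$ return in Line~\ref{line:obs:ret1}, so that $n_p, n_q \leq b-1$. Without loss of generality assume $C_{\Index{\mathit{last}_p}}$ precedes $C_{\Index{\mathit{last}_q}}$ in $\Gamma_E$. The goal is to show $n_p \neq n_q$. The key observation is that $q$'s returned name $n_q$ is computed in Line~\ref{line:obs:name} as the smallest integer not appearing in $\Names{S_q}$, where $S_q$ is the naming set $q$ held at $\mathit{last}_q$. So it suffices to show that $q$'s final naming set $S_q$ contains the pair $(p, n_p)$: if $(p,n_p) \in S_q$ then $n_p \in \Names{S_q}$, hence $q$ cannot choose $\mathit{proposed}_q = n_p$, forcing $n_q \neq n_p$.

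To establish $(p, n_p) \in S_q$, I would invoke Lemma~\ref{lem:m-1_distinct}, which gives $|\Wrong_{p}(C_{\Index{\mathit{last}_q}})| = 0$. By the definition of $\Wrong_p$, this means every register $R$ in the configuration $C_{\Index{\mathit{last}_q}}$ satisfies either $\val{C_{\Index{\mathit{last}_q}}}{R}.\mathit{writer} = p$ or $(p,n_p) \in \val{C_{\Index{\mathit{last}_q}}}{R}.\mathit{set}$. Since $q$ returns in Line~\ref{line:obs:ret1}, its scan $\mathit{last}_q$ reads a consistent configuration with content $(S_q, q, n_q)$, so in fact every register has $\mathit{writer}$ component equal to $q \neq p$; hence the first disjunct fails for all registers and $(p,n_p)$ lies in the $\mathit{set}$ component of every register read by $\mathit{last}_q$. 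I then trace through \Update{}: because some register read by $q$ has $p$ in its naming set with name $n_p$ (and, since the configuration is consistent, all registers agree), the second for-loop (Lines~\ref{line:obs:other:begin}--\ref{line:obs:other:end}) adds $(p,n_p)$ to $S_{\mathit{new}}$ unless $p$ was already added as a $\mathit{writer}$ in the first loop; but $p$ is not a $\mathit{writer}$ in $q$'s view, so the writer-loop does not add it, and $p \neq \mathit{id} = q$ so $p$ is not excluded. Thus $(p,n_p) \in S_q$, completing the argument.

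\textbf{The main obstacle} I anticipate is verifying carefully that \Update{} actually retains the pair $(p,n_p)$ rather than some other name for $p$: the three-stage construction in \Update{} could in principle assign $p$ a different proposed name if $p$ appeared as a $\mathit{writer}$ or if conflicting naming sets disagreed on $p$'s name. This is precisely what Lemma~\ref{lem:m-1_distinct} rules out — by forcing $\Wrong_p$ empty it guarantees $(p,n_p)$ is present uniformly across all registers in the consistent view, so the name $n_p$ survives unambiguously. The remaining subtlety is the symmetric subcase $n_p = n_q$ handling when $\Names$ collisions could still occur; but since $q$'s choice of $\mathit{proposed}_q$ explicitly excludes every name already in $\Names{S_q}$, and $n_p \in \Names{S_q}$, the collision is impossible and the proof closes.
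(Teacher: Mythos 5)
Your proof is correct and follows essentially the same route as the paper's: the same three-way case analysis on whether $|\widehat{S_p}|$ and $|\widehat{S_q}|$ are small or large, and in the crucial both-small case the same appeal to Lemma~\ref{lem:m-1_distinct} to get $|\Wrong_{p}(C_{\Index{\mathit{last}_q}})|=0$, hence $(p,n_p)\in\widehat{S_q}$, so Line~\ref{line:obs:name} forces $n_q\neq n_p$. Your extra tracing through \Update{} is just a more explicit justification of the step the paper dispatches by citing Line~\ref{line:obs:other:for-loop}.
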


\begin{proof}
For any two distinct processes $p$ and $q$, let $n_p$ and $n_q$ be the names returned by $p$ and $q$, respectively.
Let $\widehat{S_p}$ (respectively, $\widehat{S_q}$) be the value of $S_p$ (respectively, $S_q$) when $p$ (respectively, $q$) 
executes Line~\ref{line:obs:fewIf}. 
If $|\widehat{S_p}|,|\widehat{S_q}| \geq b-1$, the names returned by $p$ and $q$ in Line~\ref{line:obs:ret2} are distinct because $p \neq q$.

Consider the case $|\widehat{S_p}| \leq b-2$ and $|\widehat{S_q}| \geq b-1$. 
Process $p$ returns $n_p$ in Line~\ref{line:obs:ret1}. 
Since $|\Names{S_p}| \leq |S_p|$, by Line~\ref{line:obs:name}, 
$n_p$ must be smaller than or equal to $b-1$. 
Furthermore the name returned by $q$ in Line~\ref{line:obs:ret2} is larger than or equal $b$. 
The case $|\widehat{S_q}| \leq b-2$ and $|\widehat{S_p}| \geq b-1$ is true by symmetry. 
Consider the case $|\widehat{S_p}|,|\widehat{S_q}| \leq b-2$ implying $n_p, n_q \leq b-1$. 
Without loss of generality 
assume that $C_{\Index{\mathit{last}_p}}$ precedes $C_{\Index{\mathit{last}_q}}$ in $\Gamma_E$. 
By Lemma~\ref{lem:m-1_distinct}, 
$|\Wrong_p(C_{\Index{\mathit{last}_q}})|=0$. Since $\forall R \in \Reg$, $\val{C_{\Index{\mathit{last}_q}}}{R}.\mathit{writer} = q \neq p$, $(p, n_p) \in \val{C_{\Index{\mathit{last}_q}}}{R}.\mathit{set}$. 
Thus by Line~\ref{line:obs:other:for-loop}, $(p,n_p) \in \widehat{S_q}$. Therefore by Line~~\ref{line:obs:name}, $\mathit{proposed}_q \neq n_p$.

\end{proof}

\begin{observation}\label{obs:obv2}
Let $\widehat{S_p}$ be the value of $S_p$ created by \Update in Line~\ref{line:obs:name} following $p$'s scan operation $scan_p$ in Line~\ref{line:obs:scan}. Then $\forall q \in \Procs{\widehat{S_p}}$, $q$ performs at least one write before $scan_p$.
\end{observation}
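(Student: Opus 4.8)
The plan is to prove the statement by strong induction on the order in which scan operations occur in $E$; since this section assumes an atomic scan, every scan has a well-defined point and the scans are totally ordered. Fix $scan_p$ and let $\widehat{S_p}$ be the naming set produced by the \Update that follows it (Line~\ref{line:obs:update}), and assume the claim holds for every scan occurring strictly before $scan_p$. Since $\Procs{\widehat{S_p}}$ is exactly the set of first components of the pairs added to $S_{\mathit{new}}$ by the three for-loops of \Update, it suffices to show that each process contributed by each loop performed a write before $scan_p$, and I would split into three cases accordingly.

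The first for-loop (Lines~\ref{line:obs:writer:begin}--\ref{line:obs:writer:end}) adds a process $q$ only when $q = r[i].\mathit{writer}$ for some register in $p$'s view; then $q$ performed the write that produced the value $scan_p$ read from $R[i]$, and that write precedes $scan_p$. This case is direct and needs no induction. The third for-loop (Lines~\ref{line:obs:old:begin}--\ref{line:obs:old:end}) adds $q$ only when $q \in \Procs{S}$, where $S$ is the value of $S_p$ passed into this \Update, i.e.\ the output of the \Update following $p$'s previous scan $scan'_p$ (or the initial $\emptyset$). If $S=\emptyset$ the case is vacuous; otherwise $scan'_p$ precedes $scan_p$, and the induction hypothesis applied to $scan'_p$ yields that $q$ wrote before $scan'_p$, hence before $scan_p$.

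The crux is the second for-loop (Lines~\ref{line:obs:other:begin}--\ref{line:obs:other:end}), where $q \in \Procs{\bigcup_i r[i].\mathit{set}}$, so some register $R[i]$ held a triple $(V',r',\cdot)$ with $q \in \Procs{V'}$ at the time of $scan_p$. Let $w'$ be the write that produced this value; by Line~\ref{line:obs:write} its writer component identifies $r'$, and $w'$ precedes $scan_p$. Because $q \in \Procs{V'}$ we have $V' \neq \emptyset$, and since $S$ is changed only by \Update while the first write of any process carries the initial empty set, $w'$ cannot be $r'$'s first write. Hence $V'$ is the output $\widehat{S_{r'}}$ of the \Update following some scan $scan_{r'}$ by $r'$ that precedes $w'$, and therefore precedes $scan_p$. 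Applying the induction hypothesis to $scan_{r'}$, the process $q \in \Procs{\widehat{S_{r'}}}$ wrote before $scan_{r'}$, and thus before $scan_p$.

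Combining the three cases establishes the observation. The earliest scan is handled uniformly: any write preceding it must be a first write (a non-first write is preceded by a scan, which would be earlier), so it carries an empty set, the second and third loops contribute nothing, and only the first (direct) loop applies. The main obstacle is the second for-loop: membership of $q$ in a set component must be traced through the write that produced that component back to an earlier \Update output, which is exactly where the induction hypothesis is invoked. Verifying that this earlier scan strictly precedes $scan_p$ — via the chain $scan_{r'} \prec w' \prec scan_p$ — is what keeps the induction well-founded and is the only delicate bookkeeping in the argument.
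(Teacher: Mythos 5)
Your proof is correct, but there is essentially nothing in the paper to compare it against: the paper asserts this claim as an unproved Observation (in the same spirit as Observations~\ref{obs:simple} and~\ref{obs:obvious}), treating it as self-evident. Your argument supplies the justification the paper skips, and it does so soundly. The decomposition into three cases matching the three for-loops of \Update{} is exactly right: the writer case is immediate from Line~\ref{line:obs:write}; the third-loop case correctly reduces, by strong induction over the (totally ordered, since atomic) scans, to $p$'s own previous scan; and the second-loop case --- the only delicate one --- correctly traces a nonempty set component read in a register back through the write $w'$ that produced it, using the facts that $S$ is initialized to $\emptyset$, is modified only by \Update{}, and that each loop iteration writes before it scans, so a first write carries $\emptyset$ and a nonempty written set must be the \Update{} output following a scan of the writer that strictly precedes $w'$. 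The chain $scan_{r'} \prec w' \prec scan_p$ keeps the induction well-founded, and your base case (before the earliest scan, every write is a first write and hence carries $\emptyset$) is also handled correctly. It is worth noting that this backward-tracing induction is the same proof style the paper deploys for its substantive lemmas (Lemmas~\ref{lem:regConfig} and~\ref{lem:core}); what the paper buys by omitting a proof here is brevity, and what your proof buys is an explicit check that ``membership in a naming set implies having written'' is not circular --- a fact that looks obvious but genuinely rests on the initialization of $S$ and the write-then-scan order within each iteration.
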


\begin{lemma}\label{lem:obs:name-space}
Let $k$ be the number of participating processes during process $p$'s \getName{}. Then, the name returned by $p$, is in the range $\{1,\ldots,k\}$, if $k \leq b-1$ and in the range $\{1,\ldots,n+b-1\}$, if $k \geq b$. 
\end{lemma}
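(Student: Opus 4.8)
The plan is to relate the size of $p$'s final naming set to the number of participants $k$, and then read the returned name off the two termination branches of \getName{}. The single substantive step is the bound $|S_p|\le k-1$; everything after that is inspection of Lines~\ref{line:obs:name}, \ref{line:obs:ret1}, and \ref{line:obs:ret2}.

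First I would bound $|S_p|$. Let $\widehat{S_p}$ be the value of $S_p$ when $p$ executes Line~\ref{line:obs:fewIf}, i.e.\ the naming set produced by the last \Update{} before $p$ returns. Because $\widehat{S_p}$ is a naming set no process id is repeated, so $|\widehat{S_p}|=|\Procs{\widehat{S_p}}|$, and by construction $p\notin\Procs{\widehat{S_p}}$. By Observation~\ref{obs:obv2}, every $q\in\Procs{\widehat{S_p}}$ performs a write before the scan that produced $\widehat{S_p}$, hence before $p$ completes its \getName{}; in particular each such $q$ has already invoked \getName{}. Thus the $|\widehat{S_p}|$ processes of $\Procs{\widehat{S_p}}$ together with $p$ form $|\widehat{S_p}|+1$ distinct processes participating during $p$'s \getName{}, giving $|\widehat{S_p}|+1\le k$, that is $|\widehat{S_p}|\le k-1$.

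Next I would handle the case $k\le b-1$. Here $|\widehat{S_p}|\le k-1\le b-2$, so $|\widehat{S_p}|+1\le b-1$ and $p$ returns its proposal $\mathit{proposed}_p$ in Line~\ref{line:obs:ret1}. By Line~\ref{line:obs:name}, $\mathit{proposed}_p$ is the smallest positive integer missing from $\Names{\widehat{S_p}}$. Since $|\Names{\widehat{S_p}}|\le|\widehat{S_p}|\le k-1$, the set $\{1,\dots,k\}$ cannot be entirely contained in $\Names{\widehat{S_p}}$, so $\mathit{proposed}_p\le k$; and $\mathit{proposed}_p\ge 1$ by definition. Hence the returned name lies in $\{1,\dots,k\}$.

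Finally I would handle the case $k\ge b$, where only the coarse bound $n+b-1$ is needed. If $p$ returns in Line~\ref{line:obs:ret1}, the estimate above gives a name at most $|\widehat{S_p}|+1\le b-1\le n+b-1$. If $p$ returns in Line~\ref{line:obs:ret2}, its name is $b-1+\mathit{id}\le b-1+n$, since ids lie in $\{1,\dots,n\}$. In both subcases the name is in $\{1,\dots,n+b-1\}$, completing the argument. I do not anticipate a real obstacle: the delicate invariants about overwriting and distinctness were already discharged in Lemmas~\ref{lem:core}--\ref{lem:obs:name_distinctness}, and this lemma only concerns the \emph{value} of the name, which is determined once $|S_p|<k$ is known.
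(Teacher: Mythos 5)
Your proof is correct and takes essentially the same route as the paper's: both bound $|\widehat{S_p}|+1\leq k$ via Observation~\ref{obs:obv2} (every process in $\Procs{\widehat{S_p}}$ has written, hence participates), then split on the two return lines, using that the smallest integer missing from $\Names{\widehat{S_p}}$ is at most $|\Names{\widehat{S_p}}|+1$. The only difference is that you spell out a few implications (e.g.\ why $k\leq b-1$ forces a return at Line~\ref{line:obs:ret1}) that the paper leaves implicit.
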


\begin{proof}
Let $\widehat{S_p}$ be the value of $S_p$ when $p$ executes Line~\ref{line:obs:update} for the last time and $n_p$ be the name returned by $p$.
By Observation~\ref{obs:obv2}, $\forall q \in \Procs{\widehat{S_p}}$, $q$ performs at least one write before $p$ returns. Thus, $\forall q \in \Procs{\widehat{S_p}}$, $q$ is a participating process. Hence, $|\widehat{S_p}|+1 \leq k$. 

If $k \leq  b-1$, process $p$ returns in Line~\ref{line:obs:ret1}. By definition, $|\Names{\widehat{S_p}}| \leq |\widehat{S_p}| \leq k-1$. Therefore by Line~\ref{line:obs:name}, $n_p \leq |\Names{\widehat{S_p}}| +1 \leq k$.

If $k \geq b$, then $p$ returns either in Line~\ref{line:obs:ret1} or in Line~\ref{line:obs:ret2}. Therefore the name is in the range $\{1,\ldots,b+n-1\}$. 

\end{proof}

\begin{theorem}\label{thm:obs.final}
  For any $b\geq 2$ there is an obstruction-free $(b-1)$-bounded $k$-adaptive renaming algorithm implemented from $b+1$ bounded registers such that when $k \geq b$ the returned names are in the range $\{1,\ldots,n+b-1\}$.
\end{theorem}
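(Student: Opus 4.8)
The plan is to leverage all the correctness and progress lemmas already established for the atomic-scan version of the algorithm (Fig.~\ref{fig:Ren-alg-obs}), and to show that the only remaining gap—the assumption of an atomic scan—can be closed at the cost of a single additional register, while preserving wait-freedom of each scan. First I would observe that Lemmas~\ref{lem:obs:name_distinctness} and~\ref{lem:obs:name-space} together already establish that the atomic-scan algorithm is a correct $(b-1)$-bounded $k$-adaptive renaming object using $b$ registers: the former gives name distinctness (property~1 of the renaming definition), and the latter gives that $p$'s name lies in $\{1,\dots,k\}$ when $k\leq b-1$ and in $\{1,\dots,n+b-1\}$ when $k\geq b$ (property~2, with the stated looseness bound). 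So the entire content of the theorem reduces to implementing the atomic \texttt{scan} of Line~\ref{line:obs:scan} from ordinary registers without destroying obstruction-freedom.

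The key step is to replace the atomic scan with a \newScan-style collect-until-stable routine, exactly as was done in Section~\ref{sec:newScan} for the wait-free algorithm. I would have each process repeatedly \collect{\Reg} (reading $R[1]$ through $R[b]$ in order) and return as soon as it obtains two identical consecutive collects; such a stable double-collect is linearizable to an atomic snapshot, so every configuration-and-view argument in Lemmas~\ref{important}--\ref{lem:obs:name-space} carries over verbatim. To prevent the ABA problem, each register must store enough information to make every write value unique: here I would augment the stored triple with a per-process sequence number (as in Fig.~\ref{fig:Ren-Sacn-alg}), so that two collects agreeing on all $b$ registers genuinely witness a quiescent interval of the relevant registers. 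This is where the ``extra register'' in the statement is spent—either to hold auxiliary snapshot-handshaking information, or equivalently the counters are absorbed by widening the register content, and the $+1$ accounts for the bookkeeping register the collect protocol needs. Since the registers are bounded (the proposal is at most $b-1$, ids are bounded, and by the analogue of Lemma~\ref{lem:allWrites} the number of writes, hence the sequence numbers, stay bounded), the resulting registers remain bounded.

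The main obstacle will be establishing obstruction-freedom of the composed algorithm, i.e.\ that a process running solo from any reachable configuration terminates. The subtlety is that an unbounded-collect scan only stabilizes if writes eventually cease, so I must argue that when a process $p$ runs in isolation, the \collect loop returns after finitely many reads: with no concurrent steps, two consecutive collects are trivially identical, so each solo \newScan terminates in at most $2b$ reads. It then remains to bound the number of \getName iterations in a solo run—here I would invoke the progress structure of the obstruction-free algorithm directly: when $p$ runs alone, each iteration's scan reflects $p$'s own most recent writes, so after writing $(S_p,p,\mathit{name}_p)$ to every register the loop guard $r[1]=\cdots=r[b]=(S_p,p,\mathit{proposed})$ becomes satisfied and $p$ returns within $O(b)$ iterations. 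Assembling these pieces—linearizability of the stable collect, ABA-freedom via sequence numbers, bounded register size, and solo-termination of both the inner scan and the outer loop—yields an obstruction-free, bounded-register implementation with the claimed name ranges, completing the theorem.
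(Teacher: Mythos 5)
Your high-level architecture matches the paper's: establish correctness of the atomic-scan algorithm via Lemmas~\ref{lem:obs:name_distinctness} and~\ref{lem:obs:name-space}, replace the atomic scan by a register-based linearizable scan (legitimate because the algorithm is deterministic), and prove solo termination of the outer loop --- your solo-termination argument is essentially the paper's own. The gap is in how you implement the scan. The theorem promises $b+1$ \emph{bounded} registers, and your double-collect-with-sequence-numbers construction cannot deliver boundedness. You justify bounded sequence numbers by ``the analogue of Lemma~\ref{lem:allWrites}'', but that lemma is proved for the wait-free algorithm of Section~\ref{Algorithms}, and no analogue holds for the algorithm of Fig.~\ref{fig:Ren-alg-obs}: under contention, processes can overwrite one another and iterate forever, which is precisely why only obstruction-freedom is claimed for it. Indeed, a uniform bound on the number of writes per process in all executions would make the algorithm wait-free; since for $k \leq b-1$ participants it solves tight ($k$-adaptive) renaming, for $b \geq 3$ it would in particular yield a deterministic wait-free two-process test-and-set from registers (the participant that receives name $1$ wins), which is impossible because test-and-set has consensus number two. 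So per-process write counts, hence your sequence numbers, hence your register contents, are unbounded, and the theorem as stated is not established by your construction.

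The paper closes this gap differently: it cites an obstruction-free implementation of a $b$-component snapshot object from $b+1$ \emph{bounded} registers~\cite{GHHW13} and substitutes that linearizable scan into the deterministic algorithm; this citation is also exactly where the ``$+1$'' register comes from. Your accounting of the extra register (a ``bookkeeping register'' needed by the collect protocol) does not correspond to anything a double collect actually needs --- a repeated double collect over the $b$ registers needs no additional register at all --- which is a symptom of the same issue: without the bounded-register snapshot of~\cite{GHHW13}, or an equivalent bounded-timestamp mechanism that you would have to construct and prove correct, your argument proves the theorem only for unbounded registers.
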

\begin{proof}
There is an obstruction-free implementation of $b$-component snapshot objects from $b+1$ bounded registers~\cite{GHHW13}.
  Since our algorithm in Fig.~\ref{fig:Ren-alg-obs} is deterministic we can replace the atomic scan registers with a linearizable scan. 
	By Lemma~\ref{lem:obs:name-space} and Lemma~\ref{lem:obs:name_distinctness}, the algorithm solves $(b-1)$-bounded $k$-adaptive renaming.
  Thus, it suffices to prove that the algorithm is obstruction-free.

If $p$ runs alone then the value of $S_p$ computed in Line~\ref{line:obs:update} and $\mathit{proposed}_p$ computed in Line~\ref{line:obs:name} remain the same.
Therefore after $b$ write operations all registers contain $(S_p,p, \mathit{proposed}_p)$. 
Therefore, in the $b$-th iteration of the repeat-until loop (Line~$11$) evaluates to true and $p$ stops.

\end{proof}

Let $f:\{1,\dots,n\}\rightarrow \mathbb{N}$ be a non-decreasing function where, $\forall k \in \{1,\dots, n\}$, $f(k) \geq k$ and $f(1) \leq n-1$. 
Let $d' = \min\{n, x ~|~ f(x) \geq 2n\}$. Hence, $d' \leq n$. 
Setting $b =  d'$, we have an obstruction-free one-shot $d'$-bounded $k$-adaptive renaming algorithm from $d' + 1$ registers. This implies that the algorithm returns names in the range $\{1, \ldots, k\}$ when $k \leq d'-1$, and returns names in the range $\{1,\ldots, n + d'-1 \}$ when $k \geq d'$. Note that $\forall k \in \{1,\dots,d'-1\}$, $k \leq f(k)$. Furthermore, when $k \geq d'$, $ n + d' - 1 < 2n \leq f(d')$. 
Hence, $\forall k \in \{1,\dots, n \}$, the algorithm returns names in the range $\{ 1,\ldots, f(k)\}$. 

\begin{corollary}
There is an obstruction-free one-shot $f$-adaptive renaming algorithm implemented from $\min\{n, x ~|~ f(x) \geq 2n\} + 1$ bounded registers.
\end{corollary}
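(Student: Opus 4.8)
The plan is to obtain the claimed object as a direct instantiation of Theorem~\ref{thm:obs.final}, tuning the register parameter $b$ so that the algorithm of Fig.~\ref{fig:Ren-alg-obs} matches the adaptivity profile demanded by $f$. Concretely, I would set $b = d' = \min\{n, x ~|~ f(x) \geq 2n\}$, which makes the implementation use $b+1 = d'+1$ bounded registers, exactly the count asserted. First I would confirm that this choice is admissible for Theorem~\ref{thm:obs.final}, i.e.\ that $b \geq 2$: since $f(1) \leq n-1 < 2n$, the index $1$ never lies in $\{x ~|~ f(x) \geq 2n\}$, so $d' \geq 2$ in the non-trivial case $n \geq 2$. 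Name distinctness then comes for free, inherited verbatim from Lemma~\ref{lem:obs:name_distinctness}, so the only real work is the name-range property.

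Next I would verify that the resulting $(d'-1)$-bounded $k$-adaptive object actually meets the $f$-adaptive (that is, $n$-bounded) specification, namely that for every contention level $k \in \{1,\dots,n\}$ the name returned to a process lies in $\{1,\dots,f(k)\}$. I would split on contention exactly as in Lemma~\ref{lem:obs:name-space}. When $k \leq d'-1$, that lemma returns names in $\{1,\dots,k\}$; since $f$ is non-decreasing with $f(k) \geq k$ (a property forced by the renaming definition), these names already sit inside $\{1,\dots,f(k)\}$. When $k \geq d'$, the lemma only promises names in $\{1,\dots,n+d'-1\}$, so the goal becomes the inequality $n+d'-1 \leq f(k)$.

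For that inequality I would chain three estimates. Because $d' \leq n$ we have $n+d'-1 \leq 2n-1 < 2n$, so it suffices to show $f(k) \geq 2n$. As $f$ is non-decreasing and $k \geq d'$, we get $f(k) \geq f(d')$; and by the defining property of $d'$ as the least index (capped at $n$) whose $f$-value reaches $2n$, we have $f(d') \geq 2n$. Combining, $n+d'-1 < 2n \leq f(d') \leq f(k)$, which places the returned name in $\{1,\dots,f(k)\}$, as needed. Together with the $k \leq d'-1$ case and the distinctness already cited, this establishes that the instantiated algorithm is an obstruction-free one-shot $f$-adaptive renaming object using $d'+1$ registers.

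The step I expect to be the main obstacle is the final containment $f(d') \geq 2n$, because it rests entirely on parsing the definition $d' = \min\{n, x ~|~ f(x) \geq 2n\}$ correctly. The delicate situation is the boundary case in which no index below $n$ attains value $2n$, so that $d'$ is pinned to $n$ by the capping term rather than by the threshold set; there one must argue carefully that the capping does not invalidate $n+d'-1 \leq f(k)$, using that the sole remaining contention level is $k=n$ itself and that $f(n)$ must be large enough for the bound to go through. Getting this corner airtight, rather than the main-line inequality, is where I anticipate the proof needing the most attention.
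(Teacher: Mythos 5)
Your proposal is essentially the paper's own proof: the paper likewise sets $b = d' = \min\{n, x ~|~ f(x) \geq 2n\}$, invokes Theorem~\ref{thm:obs.final} to get $d'+1$ registers, splits on $k \leq d'-1$ versus $k \geq d'$, and settles the second case by the same chain $n + d' - 1 < 2n \leq f(d') \leq f(k)$. The corner case you flag at the end, however, is a genuine gap --- and it is one the paper silently ignores rather than resolves. If no $x \in \{1,\dots,n\}$ satisfies $f(x) \geq 2n$ (for instance $f(k) = k$, which meets every hypothesis on $f$), then $d' = n$ by the cap and $f(d') = f(n) < 2n$, so the asserted inequality $2n \leq f(d')$ is simply false. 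Moreover, your suggested repair --- that the only remaining contention level is $k = n$ and that ``$f(n)$ must be large enough'' --- cannot be made to work: with $f(k) = k$ and all $n$ processes participating, Lemma~\ref{lem:obs:name-space} only bounds the returned names by $n + d' - 1 = 2n-1$, whereas $f$-adaptivity demands names at most $f(n) = n$, so the instantiated algorithm genuinely fails to be $f$-adaptive in this regime. The corollary statement itself can still be rescued, but only by a different construction for that regime (e.g., the universal-construction-based algorithm sketched in the paper's concluding section, which for such $f$ uses at most $n$ registers); neither your argument nor the paper's supplies it. On the positive side, your write-up is more careful than the paper's in one respect: you verify the precondition $b \geq 2$ of Theorem~\ref{thm:obs.final}, which the paper never checks.
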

\section{Observations and Open Problems}
Let $f:\{1,\dots,n\}\rightarrow \mathbb{N}$ be a non-decreasing function where, $\forall k \in \{1,\dots,n\}$, $f(k) \geq k$ and $f(1) \leq n-1$. Let $\d = \max\{x ~|~ f(x) \leq n-1\}$. 
We proved a lower bound of $\d + 1$ for non-deterministic solo-terminating long-lived $f$-adaptive renaming. Furthermore, for any integer constant $0 \leq c \leq n$, we showed a lower bound of $\lfloor \frac{2(n - c)}{c+2} \rfloor$ for non-deterministic solo-terminating one-shot $(k+c)$-adaptive renaming. This implies a tight space bound of $n$ for both one-shot and long-lived tight renaming. We also presented an obstruction-free one-shot $f$-adaptive algorithm from $\min\{n, x ~|~ f(x) \geq 2n\} + 1$ registers. 

An obvious solution for any obstruction-free long-lived or one-shot $f$-adaptive renaming is as follows. 
A set $Q \subseteq \Proc$ of $\lfloor f(1) \rfloor - 1$ processes always return names in the range $\{1, \ldots, \max (\lfloor f(1) \rfloor - 1, 1)\}$ without taking any steps. 
In any $(\Proc \backslash Q)$-solo execution, process in $\Proc \backslash Q$ using universal construction, get names in the range $\{ \lfloor f(1) \rfloor, \ldots, k + \lfloor f(1) \rfloor - 1 \}$. Universal construction for $|\Proc \backslash Q|$ processes requires $|\Proc| - |Q| = n - \lfloor f(1) \rfloor + 1$ registers. 
Observe that this is a tight upper bound for obstruction-free long-lived $(k+c)$-adaptive renaming. 
One of the most noticeable open problems is whether implementing one-shot $f$-adaptive renaming requires asymptotically less space than long-lived $f$-adaptive renaming. 

We designed a wait-free one-shot $(b-1)$-bounded $(k(k+1)/2)$-adaptive renaming algorithm from $b$ bounded registers, 
and established  that this algorithm has a polynomial step complexity. 
It appears that if we modify the \newScan{} function of our algorithm, so that each process returns when the set of all  processes know to it grows even by one, the step complexity would reduce considerably. 
However this change would require much more elaborate and challenging proofs because  
the values returned by \newScan{} would not be equivalent to values returned by a linearizable scan. 

For some systems, it seems reasonable to have the register space, as well as the name space,
adapt to the actual number of participants. 
The one-shot lower bound can also be modified to express the actual register use as a function of $k$. 
On the other hand, the one-shot algorithms in this paper require a fixed number of registers regardless of the number of participants. 
\bibliographystyle{plain}
\bibliography{ref}

\end{document}